\providecommand{\customgenericname}{}
\newcommand{\newcustomtheorem}[2]{%
  \newenvironment{#1}[1]
  {%
   \renewcommand\customgenericname{#2}%
   \renewcommand\theinnercustomgeneric{##1}%
   \innercustomgeneric
  }
  {\endinnercustomgeneric}
}
\newtheorem{claim}{Claim} \newtheorem{definition}{Definition}
\newcommand{\ket}[1]{|#1\rangle}  
\newcommand{\codepar}[1]{\ensuremath{[\![#1]\!]}}
\begin{document}

\title{Flag fault-tolerant error correction with arbitrary distance codes}

\author{Christopher Chamberland}
\email{c6chambe@uwaterloo.ca}
\affiliation{
    Institute for Quantum Computing and Department of Physics and Astronomy,
    University of Waterloo,
    Waterloo, Ontario, N2L 3G1, Canada
    }
\author{Michael E. Beverland}
\email{mibeverl@microsoft.com}
\affiliation{
   Station Q Quantum Architectures and Computation Group, Microsoft Research
Redmond, WA 98052, USA
    }

\begin{abstract}
In this paper we introduce a general fault-tolerant quantum error correction protocol using flag circuits for measuring stabilizers of arbitrary distance codes. In addition to extending flag error correction beyond distance-three codes for the first time, our protocol also applies to a broader class of distance-three codes than was previously known.
Flag circuits use extra ancilla qubits to signal when errors resulting from $v$ faults in the circuit have weight greater than $v$. The flag error correction protocol is applicable to stabilizer codes of arbitrary distance which satisfy a set of conditions and uses fewer qubits than other schemes such as Shor, Steane and Knill error correction. 
We give examples of infinite code families which satisfy these conditions and analyze the behaviour of distance-three and -five examples numerically. Requiring fewer resources than Shor error correction, flag error correction could potentially be used in low-overhead fault-tolerant error correction protocols using low density parity check quantum codes of large code length. 
\end{abstract}

\pacs{03.67.Pp}

\maketitle

\section{Introduction and formalism}
\label{sec:Intro}

Scalable quantum computers are expected to require some form of error correction (EC) to function reliably. Unfortunately, no practical model for a self-correcting quantum memory has been proposed to date, despite considerable effort \cite{BrownMemory16}.
The models that come closest to this goal involve topological protection in the presence of physically imposed symmetries \cite{KitaevWire01,KarzigScalable17}, but even these are not expected to reduce error rates sufficiently for large computations.
Therefore active protocols that require measuring the check operators of an error correcting code are probably necessary to realize scalable quantum computing.

There are three general approaches of fault-tolerant error correction (FTEC) applicable to a wide range of stabilizer codes due to Shor \cite{Shor96}, Steane \cite{Steane97}, and Knill \cite{KL2005}.
There are also a number of promising code-specific FTEC schemes, most notably the surface code with a minimum weight matching error correction scheme \cite{BK98,DKLP02,FMMC12}. 
This approach gives the best fault-tolerant thresholds to date and only requires geometrically local measurements.
A high threshold \cite{Shor96,AB97,Preskill98,KLZ98} implies that relatively imperfect hardware could be used to reliably implement long quantum computations. 
Despite this, the hardware and overhead requirements for the surface code are sufficiently demanding that it remains extremely challenging to implement in the lab. 

Fortunately, there are reasons to believe that there could be better alternatives to the surface code.
For example, dramatically improved thresholds could be possible using concatenated codes if they enjoyed the same level of optimization as the surface code has in recent years \cite{Poulin06,PhysRevA.83.020302}. 
Another enticing alternative is to find and use efficiently-decodable low density parity check (LDPC) codes with high rate \cite{Gallager1960,LDPC13,TZLDPC14} in a low-overhead FTEC protocol \cite{Gottesman13LDPC}.
For these and other reasons, it is important to have general FTEC schemes applicable to a wide range of codes and to develop new schemes.

Shor EC can be applied to any stabilizer code, but typically requires more syndrome measurement repetitions than Steane and Knill EC. Furthermore, all weight-$w$ stabilizer generators are measured sequentially using $w$-qubit verified cat states. On the other hand, Steane EC has higher thresholds than Shor EC and has the advantage that all Clifford gates are applied transversally during the protocol. However, Steane EC is only applicable to CSS \cite{CS96,Steane97} codes and uses a verified logical $\ket{+}$ state encoded in the same code to simultaneously obtain all $X$-type syndromes, using transversal measurement (similarly for $Z$). 
Knill EC can also be applied to any stabilizer code but requires two additional ancilla code blocks (encoded in the same code that protects the data) prepared in a logical Bell state. The Bell state teleports the encoded information to one of the ancilla code blocks and the extra information from the transversal Bell measurement gives the error syndrome. Knill EC typically achieves higher thresholds than Shor and Steane EC but often uses more qubits \cite{Knill05,Fern08KnillUpperBound}. It is noteworthy that for large LDPC codes, in which low weight generators are required be fault-tolerantly measured, Shor EC is much more favourable than Steane or Knill EC. Many improvements in these schemes have been made. For examples, in \cite{DA07}, ancilla decoding was introduced to correct errors arising during state preparation in Shor and Steane EC rather than simply rejecting all states which fail the verification procedure.

In this work, we build on a number of recent papers \cite{CR17v1,CR17v2,Yoder2017surfacecodetwist} that demonstrate flag error correction for particular distance-three and error detecting codes and present a general protocol for arbitrary distance codes. Flag error correction uses extra ancilla qubits to detect potentially problematic high weight errors that arise during the measurement of a stabilizer. We provide a set of requirements for a stabilizer code (along with the circuits used to measure the stabilizers) which, if satisfied, can be used for flag error correction. We are primarily concerned with extending the lifetime of encoded information using fault-tolerant error correction and defer the study of implementing gates fault-tolerantly to future work. Our approach can be applied to a broad class of codes (including but not limited to surface codes, color codes and quantum Reed-Muller codes).
Of the three general schemes described above, flag EC has most in common with Shor EC. 
Further, flag EC does not require verified state preparation, and for all codes considered to date, requires fewer ancilla qubits. Lastly, we note that in order to satisfy the fault-tolerant error correction definition presented in \cref{subsec:Section0}, our protocol applied to distance-three codes differs from \cite{CR17v1}.

We foresee a number of potential applications of these results. 
Firstly we believe it is advantageous to have new EC schemes with different properties that can be used in various settings.
Secondly, flag EC involves small qubit overhead, hence possibly the schemes presented here and in other flag approaches \cite{CR17v1,CR17v2,Yoder2017surfacecodetwist} will find applications in early qubit-limited experiments.
Thirdly, we expect the flag EC protocol presented here could potentially be useful for LDPC codes as described in \cite{Gottesman13LDPC}.

In \cref{subsec:ReviewChaoReichardt,subsec:Distance5protocol} we provide important definitions and introduce flag FTEC for distance-three and -five codes. In \cref{subsec:ApplicationProtocolColorCode} we apply the protocol to two examples: the \codepar{19,1,5} and \codepar{17,1,5} color codes, which importantly have a variety of different weight stabilizers. The general flag FTEC protocol for arbitrary distance codes is given in \cref{subsec:GeneralProtocol}. A proof that the general protocol satisfies the fault-tolerance criteria is given in \cref{app:ProtocolGeneralProof}. In \cref{subsec:Remarks} we provide examples of codes that satisfy the conditions that we required for flag FTEC.  Flag circuit constructions for measuring stabilizers of the codes in \cref{subsec:Remarks} are given  \cref{app:GeneralTflaggedCircuitConstruction}. We also provide a candidate circuit construction for measuring arbitrary weight stabilizers in \cref{App:GeneralwFlagCircuitConstruction}. In \cref{sec:CircuitLevelNoiseFTEC}, we analyze numerically a number of flag EC schemes and compare with other FTEC schemes under various types of circuit level noise. We find that flag EC schemes, which have large numbers of idle qubit locations, behave best in error models in which idle qubit errors occur with a lower probability than CNOT errors. The remainder of this section is devoted to FTEC and noise model/simulation methods. 

\subsection{Fault-tolerant error correction}
\label{subsec:Section0}

Throughout this paper, we assume a simple depolarizing noise model in which idle qubits fail with probability $\tilde{p}$ and all other circuit operations (gates, preparations and measurements) fail with probability $p$, which recovers standard circuit noise when $\tilde{p}=p$. A detailed description is given in \cref{subsec:NoiseAndNumerics}.

The weight of a Pauli operator $E$ ($\text{wt}(E)$) is the number of qubits on which it has non-trivial support. We first make some definitions,

\begin{definition}{\underline{Weight-$t$ Pauli operators}}
	\begin{align}
	\mathcal{E}_{t} = \{ E \in \mathcal{P}_{n} | \text{wt}(E) \le t \},
	\end{align}
	where $\mathcal{P}_{n}$ is the $n$-qubit Pauli group.
	\label{Def:EpsilontSet}
\end{definition}

\begin{definition}{\underline{Stabilizer error correction}}
	
	Given a stabilizer group $\mathcal{S} = \langle g_{1}, \cdots, g_{m} \rangle$, we define the syndrome $s(E)$ to be a bit string, with i'th bit equal to zero if $g_i$ and $E$ commute, and one otherwise.
	Let $E_{\text{min}}(s)$ be a minimal weight correction $E$ where $s(E)=s$. 
	We say operators $E$ and $E'$ are logically equivalent, written as $E \sim E'$, iff $E' \propto g E$ for $g \in \mathcal{S}$.
	\label{Def:LogEquivDef}
\end{definition}

An error correction protocol typically consists of a sequence of basic operations to infer syndrome measurements of a stabilizer code $C$, followed by the application of a Pauli operator (either directly or through Pauli frame tracking \cite{DA07,Barbara15,CIP17}) intended to correct errors in the system.
Roughly speaking, a given protocol is fault-tolerant if for sufficiently weak noise, the effective noise on the logical qubits is even weaker.
More precisely, we say that an error correction protocol is a $t$-FTEC if the following is satisfied:

\begin{definition}{\underline{Fault-tolerant error correction}}
	
	For $t = \lfloor (d-1)/2\rfloor$, an error correction protocol using a distance-$d$ stabilizer code $C$ is $t$-fault-tolerant if the following two conditions are satisfied:
	\begin{enumerate}
		\item For an input codeword with error of weight $s_{1}$, if $s_{2}$ faults occur during the protocol with $s_{1} + s_{2} \le t$, ideally decoding the output state gives the same codeword as ideally decoding the input state.
		\item For $s$ faults during the protocol with $s \le t$, no matter how many errors are present in the input state, the output state differs from a codeword by an error of at most weight $s$.
	\end{enumerate}
	\label{Def:FaultTolerantDef}
\end{definition}

Here ideally decoding is equivalent to performing fault-free error correction. 
By codeword, we mean any state $\ket{\overline{\psi}} \in C$ such that $g\ket{\overline{\psi}} = \ket{\overline{\psi}} \thinspace  \forall \thinspace g \in \mathcal{S}$ where $\mathcal{S}$ is the stabilizer group for the code $C$.
Note that for the second criteria in \cref{Def:FaultTolerantDef}, the output and input codeword can differ by a logical operator. 

The first criteria in \cref{Def:FaultTolerantDef} ensures that correctable errors don't spread to uncorrectable errors during the error correction protocol. Note however that the first condition alone isn't sufficient. For instance, the trivial protocol where no correction is ever applied at the end of the EC round also satisfies the first condition, but clearly is not fault-tolerant.

The second condition is not always checked for protocols in the literature, but it is important as it ensures that errors do not accumulate uncontrollably in consecutive rounds of error correction (see \cite{AGP06} for a rigorous proof and \cite{CDT09} for an analysis of the role of input errors in an extended rectangle). To give further motivation as to why the second condition is important, consider a scenario with $s$ faults introduced during each round of error correction, and assume that $t/n<s<(2t+1)/3$ for some integer $n$ (see Fig.~\ref{fig:ConditionTwoJustification}). Consider an error correction protocol in which $r$ input errors and $s$ faults in an EC block leads to an output state with at most $r+s$ errors\footnote{This is the case for Shor, Steane and Knill EC with appropriately verified ancilla states. However the surface code does not satisfy this due to hook errors but nonetheless still satisfies condition 1 of \cref{Def:FaultTolerantDef}. }. Clearly condition 1 is satisfied.

With the above considerations, an input state $E_{1}\ket{\bar{\psi}}$ with $\text{wt}(E_{1})\leq s$ is taken to $E_{2}\ket{\bar{\psi}}$, with $\text{wt}(E_{2})\leq 2s$ by one error correction round with $s$ faults.
After the $j$th round, the state will be $E_{j}\ket{\bar{\psi}}$ with the first condition implying $\text{wt}(E_{j})\leq j \cdot s$ provided that $j \leq n$. 
However, when $j > n$, the requirement of the first condition is no longer satisfied so we cannot use it to upper bound $\text{wt}(E_{j})$. 
Now consider the same scenario but assuming both conditions hold.
The second condition implies that after the first round, the input state $E_{1}\ket{\bar{\psi}}$ becomes $E'_{2}\ket{\bar{\phi}} = E_{2}\ket{\bar{\psi}}$, with  $\text{wt}(E_{2}')\leq s$, and where $\ket{\bar{\phi}}$ is a codeword. 
Therefore the codewords are related by: $\ket{\bar{\phi}}= (E_{2}^{'\dagger} E_{2}) \ket{\bar{\psi}}$, with logical operator $(E_{2}^{'\dagger} E_{2})$ having weight at most $3s$, since $\text{wt}(E_{2})+\text{wt}(E_{2}') \leq 3s$. 
However, the minimum non-trivial logical operator of the code has weight $(2t+1)>3s$, implying that $\ket{\bar{\psi}} = \ket{\bar{\phi}}$, and therefore that $\text{wt}(E_{2}) = \text{wt}(E_{2}') \leq s$. 
Hence, for the $j$th round, $\text{wt}(E_{j}) \leq s$ for all $j$, i.e. the distance from the codeword is not increased by consecutive error correction rounds with $s$ faults, provided $s < (2t+1)/3$.

\begin{figure}
	\centering
	\includegraphics[width=0.45\textwidth]{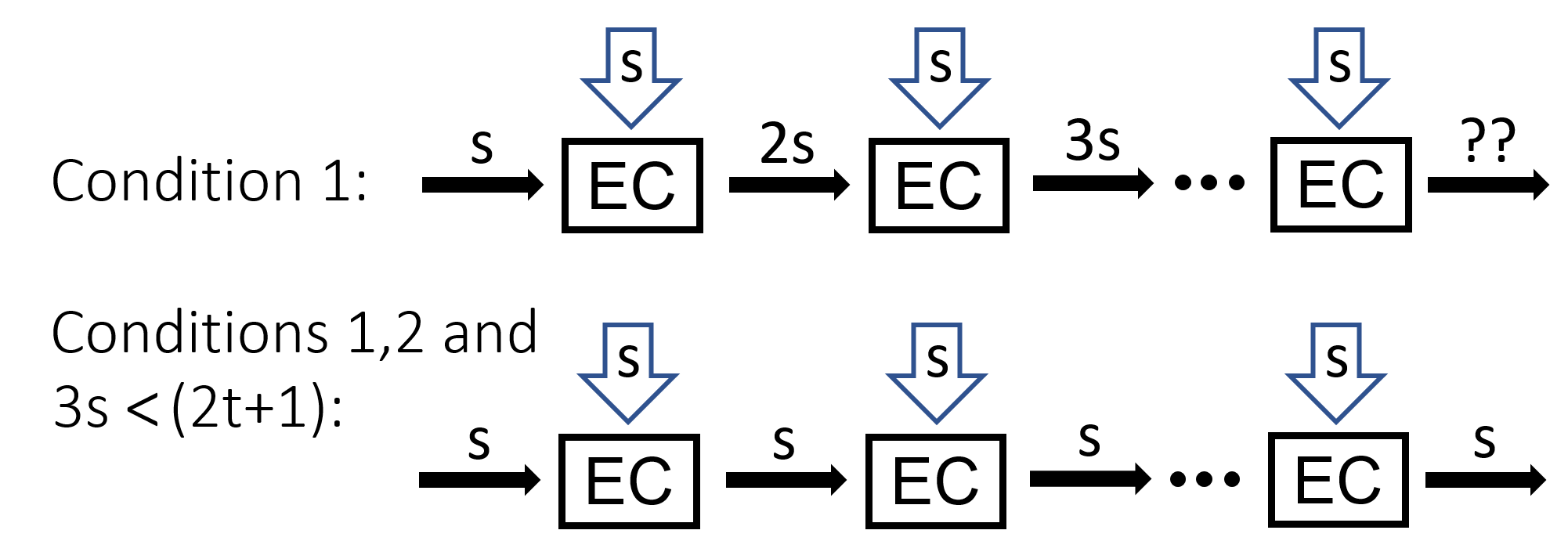}
	\caption{An example showing the first fault tolerance condition alone in \cref{Def:FaultTolerantDef} is not sufficient to imply a long lifetime. 
	We represent $s$ faults occurring during a round of error correction with a vertical arrow, and a state a distance $r$ from the desired codeword with a horizontal arrow with $r$ above. 
	The first condition alone allows errors to build up over time as in the top figure, which would quickly lead to a failure.
	However provided $s<(2t+1)/3$, both conditions together ensure that errors in consecutive error correction rounds do not build up, provided each error correction round introduces no more than $s$ faults, which could remain true for a long time.}
	\label{fig:ConditionTwoJustification}
\end{figure}

\subsection{Noise model and pseudo-threshold calculations}
\label{subsec:NoiseAndNumerics}

In \cref{sec:CircuitLevelNoiseFTEC}, we perform a full circuit level noise analysis of various error correction protocols. Unless otherwise stated, we use the following depolarizing noise model:

\begin{enumerate}
	\item With probability $p$, each two-qubit gate is followed by a two-qubit Pauli error drawn uniformly and independently from $\{I,X,Y,Z\}^{\otimes 2}\setminus \{I\otimes I\}$.
	\item With probability $\frac{2p}{3}$, the preparation of the $\ket{0}$ state is replaced by $\ket{1}=X\ket{0}$. Similarly, with probability $\frac{2p}{3}$, the preparation of the $\ket{+}$ state is replaced by $\ket{-}=Z\ket{+}$.
	\item With probability $\frac{2p}{3}$, any single qubit measurement has its outcome flipped. 
	\item Lastly, with probability $\tilde{p}$, each resting qubit location is followed by a Pauli error drawn uniformly and independently from $\{ X,Y,Z \}$.
\end{enumerate}

Some error correction schemes that we analyze contain a significant number of idle qubit locations. Consequently, most schemes will be analyzed using three ratios ($\tilde{p} = p$, $\tilde{p} = p/10$ and $\tilde{p} = p/100$) to highlight the impact of idle qubit locations on the logical failure rate.

The two-qubit gates we consider are: CNOT, XNOT$ = H_1(\text{CNOT})H_1$, and CZ$ = H_2(\text{CNOT})H_2$.

Logical failure rates are estimated using an $N$-run Monte Carlo simulation. During a particular run, errors are added at each location following the noise model described above. Once the error locations are fixed, the errors are propagated through a fault-tolerant error correction circuit and a recovery operation is applied. After performing a correction, the output is ideally decoded to verify if a logical fault occurred. 
For an error correction protocol implemented using a stabilizer code $C$ and a fixed value of $p$, we define the logical failure rate
\begin{equation}
p_{\text{L}}^{(C)}(p) = \lim_{N \to\infty} \frac{N_{\text{fail}}^{(C)}(p)}{N} ,
\end{equation}
where $N_{\text{fail}}^{(C)}(p)$ is the number of times a logical $X$ \textit{or} logical $Z$ error occurred during the $N$ rounds. 
In practice we take $N$ sufficiently large to estimate $p_{\text{L}}^{(C)}(p)$, and provide error bars \cite{AliferisCross07,CJL16b}.

In this paper we are concerned with evaluating the performance of FTEC protocols (i.e. we do not consider performing logical gates fault-tolerantly). We define the pseudo-threshold of an error correction protocol to be the value of $p$ such that 

\begin{align}
\tilde{p}(p) = p^{(C)}_{L}(p).
\label{Def:PseudoThreshDef}
\end{align}
Note that it is important to have $\tilde{p}$ on the left of \cref{Def:PseudoThreshDef} instead of $p$ since we want an encoded qubit to have a lower logical failure rate than an unencoded idle qubit. From the above noise model, a resting qubit will fail with probability $\tilde{p}$.

\section{Flag error correction for small distance codes}
\label{sec:Section1}

In this and the next section, we present a $t$-fault-tolerant flag error correction protocol with distance-$(2t+1)$ codes satisfying  a certain condition.
Our approach extends that introduced by Chao and Reichardt \cite{CR17v1} for distance three codes, which we first review using our terminology in \cref{subsec:ReviewChaoReichardt}. 
In \cref{subsec:Distance5protocol} we present the protocol for distance five CSS codes which contains most of the main ideas of the general case (which is provided in \cref{app:GeneralFTEC}).
Lastly, in section \cref{subsec:ApplicationProtocolColorCode} we provide examples of how the protocol is applied to the \codepar{19,1,5}  and \codepar{17,1,5} color codes. 

\subsection{Definitions and Flag $1$-FTEC with distance-3 codes}
\label{subsec:ReviewChaoReichardt}

In what follows, we use the term location to refer to a gate, state preparation, measurement or idle qubit where a fault may occur. 
Note also that a two-qubit Pauli error $P_{1}\otimes P_{2}$ arising at a two-qubit gate location counts as a single fault.

It is well known that with only a single measurement ancilla, a single fault in a blue CNOT of the stabilizer measurement circuit shown in \cref{fig:StabNonFT} can result in a multi-weight error on the data block.
This could cause a distance-$3$ code to fail, or more generally could cause a distance-$d$ code to fail due to fewer than $(d-1)/2$ total faults.
We therefore say the blue CNOTs are \textit{bad} according to the following definition:
\begin{definition}{\underline{Bad locations}}

A circuit location in which a single fault can result in a Pauli error $E$ on the data block with $\mathrm{wt}(E) \ge 2$ will be referred to as a bad location.
\label{Def:BadErrorDef}
\end{definition}

\begin{figure}
\centering
\begin{subfigure}{0.25\textwidth}
\includegraphics[width=\textwidth]{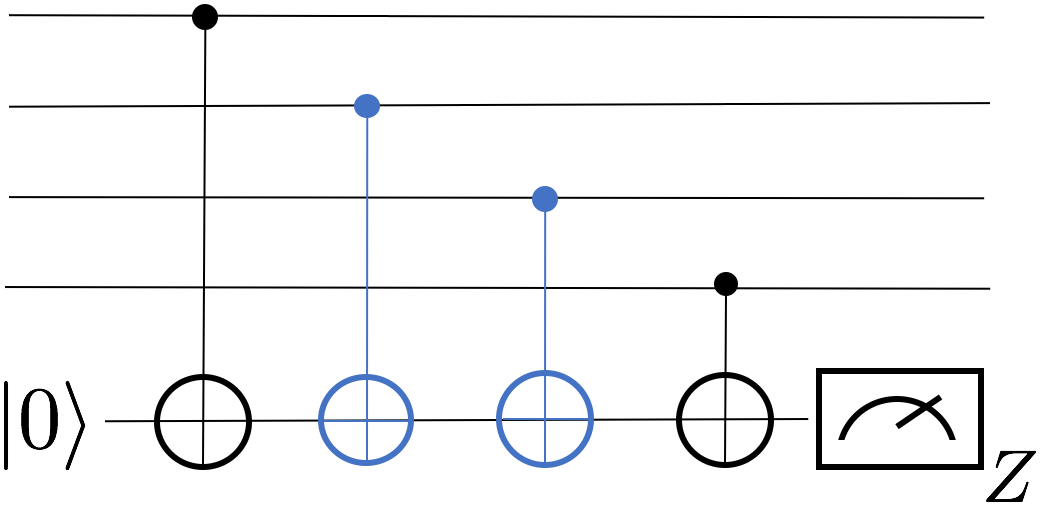}
\caption{}
\label{fig:StabNonFT}
\end{subfigure}
\begin{subfigure}{0.3\textwidth}
\includegraphics[width=\textwidth]{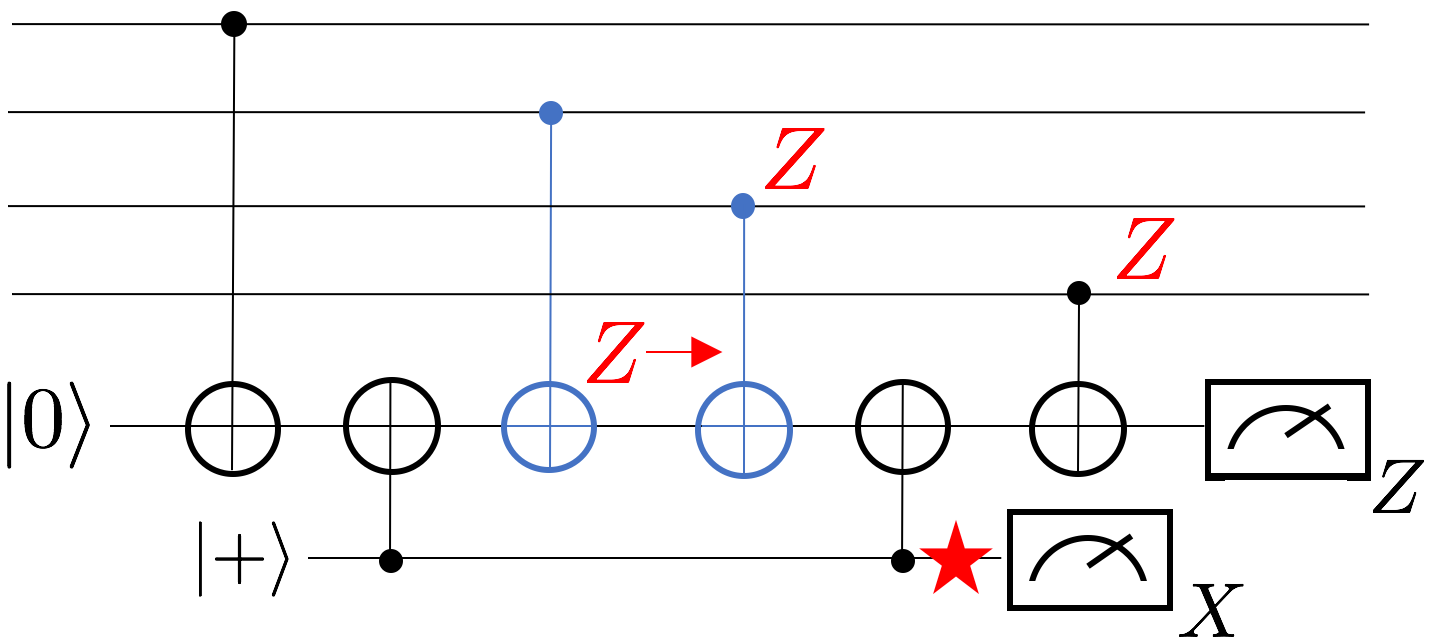}
\caption{}
\label{fig:StabFTwithAncilla}
\end{subfigure}
\begin{subfigure}{0.25\textwidth}
\includegraphics[width=\textwidth]{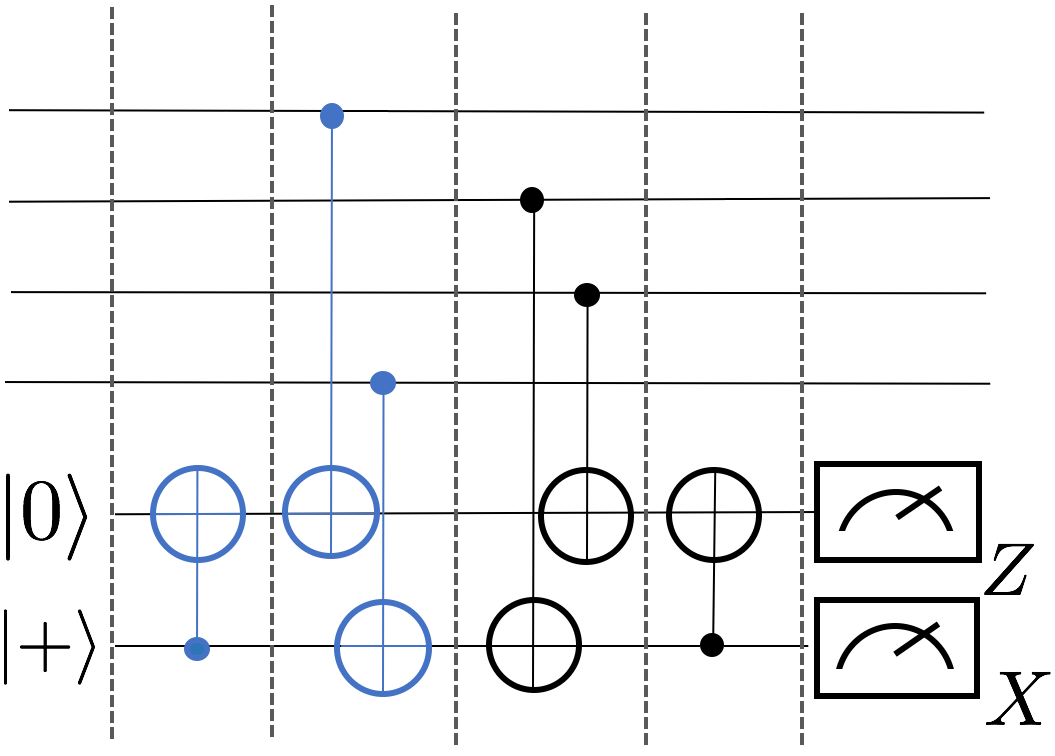}
\caption{}
\label{fig:StabFTwithAncillaCompact}
\end{subfigure}
\caption{Circuits for measuring the operator $ZZZZ$ (can be converted to any Pauli by single qubit Cliffords). (a) Non-fault-tolerant circuit. A single fault $IZ$ occurring on the third CNOT (from the left) results in the error $IIZZ$ on the data block.  (b) Flag version of \cref{fig:StabNonFT}. An ancilla (flag) qubit prepared in $\ket{+}$ and two extra CNOT gates signals when a weight two data error is caused by a single fault. Subsequent rounds of error correction may identify which error occurred. Consider an $IZ$ error on the second CNOT, in the non-flag circuit this would result in a weight two error, but in this case, this fault causes the circuit to flag. (c) An alternative flag circuit with lower depth than (b). All bad locations are illustrated in blue.}
\label{fig:ErrorPropSteane}
\end{figure}

As shown in \cref{fig:StabFTwithAncilla}, the circuit can be modified by including an additional ancilla (flag) qubit, and two extra CNOT gates.
This modification leaves the bad locations and the fault-free action of the circuit unchanged.
However, any single fault leading to an error $E$ with $\mathrm{wt}(E) \ge 2$ will also cause the measurement outcome of the flag qubit to flip \cite{CR17v1}. The following definitions will be useful:

\begin{definition}{\underline{Flags and measurements}}

Consider a circuit for measuring a stabilizer generator that includes at least one flag ancilla. The ancilla used to infer the stabilizer outcome is referred to as the \textit{measurement qubit}. We say the circuit has flagged if the eigenvalue of a flag qubit is measured as $-1$. If the eigenvalue of a measurement qubit is measured as $-1$, we will say that the measurement qubit flipped. 
\label{Def:GlagMeasureQubitsDef}
\end{definition}

The purpose of flag qubits is to signal when high weight data qubit errors result from few fault locations during a stabilizer measurement. Two key definitions are:

\begin{definition}{\underline{t-flag circuit}}

A circuit\footnote{To avoid confusing the notation of $C(P)$ that represents a circuit and $C$ that represents a code space, we always include the measured Pauli in parenthesis unless clear from context.} $C(P)$ which, when fault-free, implements a projective measurement of a weight-$w$ Pauli $P$ without flagging is a $t$-flag circuit if the following holds: For any set of $v$ faults at up to $t$ locations in $C(P)$ resulting in an error $E$ with $\text{min}(\text{wt}(E),\text{wt}(E P)) > v$, the circuit flags. 
\label{Def:tFlaggedCircuitDef}
\end{definition}
Note that a $t$-flag circuit for measuring a weight-$t$ stabilizer $P$ is also a $k$-flag circuit for any $k>t$. In \cref{app:GeneralTflaggedCircuitConstruction} we give constructions for some $t$-flag circuits.
\begin{definition}{\underline{Flag error set}}
	
	Let $\mathcal{E}(g_{i})$ be the set of all errors caused by one fault which caused the circuit $C(g_i)$ to flag. 
	\label{Def:FlagErrSetDef1}
\end{definition}
Note that the flag error set can contain the identity as well as weight one errors.

Suppose all errors in a flag error set $\mathcal{E}(g)$ for a 1-flag circuit $C(g)$ have distinct syndromes.  
As $C(g)$ is a 1-flag circuit, a single fault that leads to an error of weight greater than one will cause the circuit $C(g)$ to flag. 
Moreover, when a flag has occurred due to at most one fault, a complete set of fault-free stabilizer measurements will infer the resulting element of the flag error set which has been applied to the data qubits. In fact, one would only require distinct syndromes for errors in the flag error set that are logically inequivalent, as defined in \cref{Def:LogEquivDef}.

As an example, consider the 1-flag circuit in \cref{fig:StabFTwithAncilla}. A single fault at any of the blue CNOT gates can lead to an error $E_{b}$ with $\text{wt}(E_{b}) \le 2$ on the data. The set $\mathcal{E}(Z^{\otimes 4})$ contains all errors $E_{b}$ which resulted from a fault at a blue CNOT gate causing the circuit $C(Z^{\otimes 4})$ of \cref{fig:StabFTwithAncilla} to flag, i.e., 
$\mathcal{E}(g) = \{ I,Z_{q_{3}}Z_{q_{4}},X_{q_{2}}Z_{q_{3}}Z_{q_{4}},Z_{q_{1}}X_{q_{2}},Z_{q_{4}},$ $X_{q_{3}}Z_{q_{4}},Y_{q_{3}}Z_{q_{4}} \}$ with qubits $q_1$ to $q_4$.

With the above definitions, we can construct a fault-tolerant flag error correction protocol for $d=3$ stabilizer codes satisfying the following condition.

\begin{definition}{\underline{\textbf{Flag $1$-FTEC condition:}}}

Consider a stabilizer code $\mathcal{S} =  \langle g_{1},g_{2},\cdots , g_{r} \rangle$ and $1$-flag circuits $\{ C(g_{1}),C(g_{2}), \cdots , C(g_{r}) \}$. For every generator $g_{i}$, all pairs of elements $E,E'\in \mathcal{E}(g_{i})$ satisfy $s(E)\neq s(E')$ or $E \sim E'$.
\label{Def:Flag1FTECcondition}
\end{definition}

In other words, we require that any two errors that arise when a circuit $C(g_{i})$ flags due to a single fault must be either distinguishable or logically equivalent. For the following protocol to satisfy the FTEC conditions in \cref{Def:FaultTolerantDef}, one can assume there is at most 1 fault. If the Flag $1$-FTEC condition is satisfied, the protocol is implemented as follows: 

\vspace{17px}
\fbox{\begin{minipage}{23em}
        \textbf{Flag $1$-FTEC Protocol:}

Repeat the syndrome measurement using flag circuits until one of the following is satisfied: 

\begin{enumerate}
   \item If the syndrome $s$ is repeated twice in a row and there were no flags, apply the correction $E_{\text{min}}(s)$. 
   \item If there were no flags and the syndromes $s_{1}$ and $s_{2}$ from two consecutive rounds differ, repeat the syndrome measurement using non-flag circuits yielding syndrome $s$. Apply the correction $E_{\text{min}}(s)$. 
   \item If a circuit $C(g_{i})$ flags, stop and repeat the syndrome measurement using non-flag circuits yielding syndrome $s$. If there is an element $E \in \mathcal{E}(g_{i})$ which satisfies $s(E)=s$, then apply $E$, otherwise apply $E_{\text{min}}(s)$.
\end{enumerate}
\end{minipage}}

\begin{figure}
\centering
\includegraphics[width=0.5\textwidth]{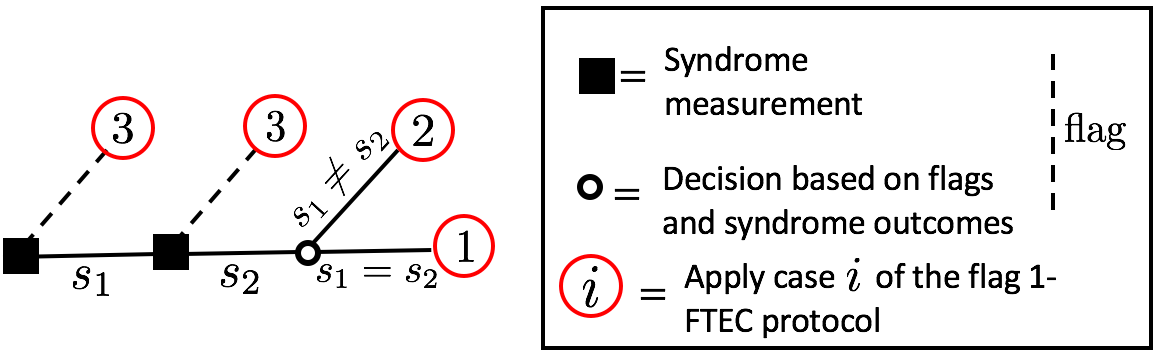}
\caption{Tree diagram illustrating the possible paths of the Flag $1$-FTEC Protocol. Numbers enclosed in red circles at the end of the edges indicate which step to implement in the Flag $1$-FTEC Protocol. A dashed line is followed when any of the 1-flag circuits $C(g_{i})$ flags. Solid squares indicate a syndrome measurement using 1-flag circuits whereas rings indicate a decision based on syndrome outcomes.  Note that the syndrome measurement is repeated at most three times.}
\label{fig:TreeD3Diag}
\end{figure}

\begin{figure}
\centering
\begin{subfigure}{0.45\textwidth}
\includegraphics[width=\textwidth]{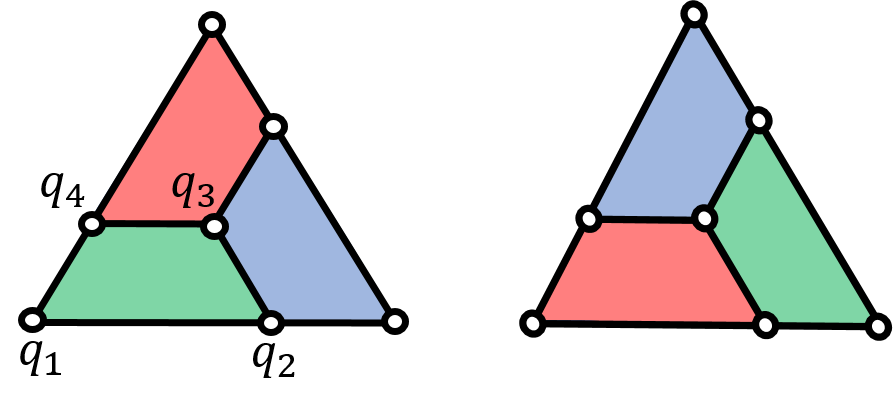}
\caption{}
\label{fig:SteaneColor1}
\end{subfigure}
\begin{subfigure}{0.45\textwidth}
\includegraphics[width=\textwidth]{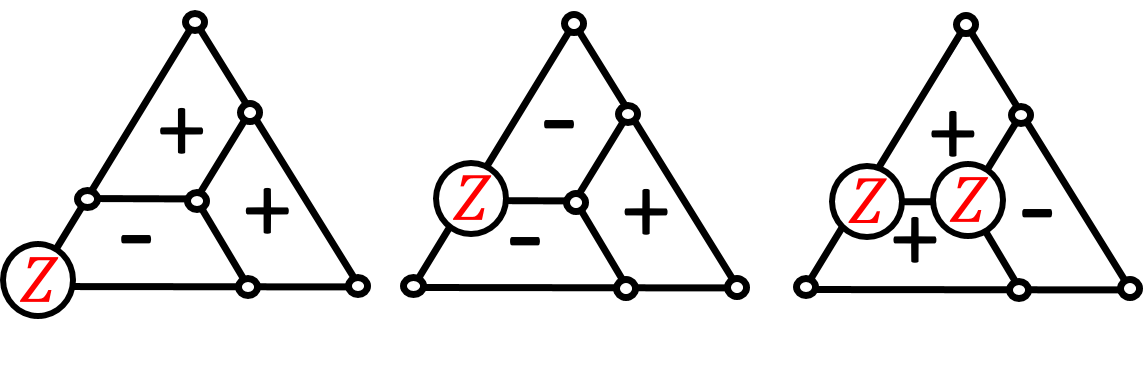}
\caption{}
\label{fig:SteaneColor1}
\end{subfigure}
\caption{
	(a) A representation of the Steane code where each circle is a qubit, and there is an $X$- and a $Z$-type stabilizer generator for each face. 
	Stabilizer cicuits are specified from that in Fig.~\ref{fig:ErrorPropSteane}(a) after rotating the lattice such that the relevant face is on the bottom left. 
	(b) For $g = Z_{q_{1}}Z_{q_{2}}Z_{q_{3}}Z_{q_{4}}$, the flag error set is $\mathcal{E}(g) = \{ I,Z_{q_{3}}Z_{q_{4}},X_{q_{2}}Z_{q_{3}}Z_{q_{4}},Z_{q_{1}}X_{q_{2}},Z_{q_{4}},$ $X_{q_{3}}Z_{q_{4}},X_{q_{3}}Z_{q_{3}}Z_{q_{4}} \}$ which contains all errors arising from a single fault that causes the stabilizer measurement circuit $C(g)$ to flag. 
	Since the Steane code is a CSS code, the $X$ component of an error will be corrected independently allowing us to consider the $Z$-part of the flag error set $\mathcal{E}_Z(g)=\{I,Z_{q_1},Z_{q_4},Z_{q_3}Z_{q_4} \}$.
	As required, the elements of $\mathcal{E}_Z(g)$ all have distinct syndromes (with satisfied stabilizers represented by a plus).
}
\label{fig:Steane}
\end{figure}

A tree diagram for the flag $1$-FTEC Protocol is illustrated in \cref{fig:TreeD3Diag}. We now outline the proof that the flag 1-FTEC protocol satisfies the fault-tolerance criteria of \cref{Def:FaultTolerantDef} (a more rigorous proof of the general case is presented in \cref{app:ProtocolGeneralProof}). To show that Flag $1$-FTEC Protocol satisfies the criteria of \cref{Def:FaultTolerantDef}, we can assume there is at most one fault during the protocol. If a single fault occurs in either the first or second round leading to a flag, repeating the syndrome measurement will correctly diagnose the error. If there are no flags and a fault occurs which causes the syndromes in the first two rounds to change, then the syndrome during the third round will correctly diagnose the error. There could also be a fault during either the first or second round that goes undetected. But since there were no flags it cannot spread to an error of weight-2. In this case applying a minimum weight correction based on the measured syndrome of the second round will guarantee that the output codeword differs from a valid codeword by an error of weight at most one. Note that the above argument applies irrespective of any errors on the input state, hence the second criteria of \cref{Def:FaultTolerantDef} is satisfied. It is worth pointing out that up to three repetitions are required in order to guarantee that the second criteria of \cref{Def:FaultTolerantDef} is satisfied (unless the code has the property that all states are at most a weight-one error away from a valid codeword, as in \cite{CR17v1}).

The Steane code is an example which satisfies the Flag $1$-FTEC condition with a simple choice of circuits.
To verify this, the representation of the Steane code given in \cref{fig:SteaneColor1} is useful.
There is an $X$- and a $Z$-type stabilizer generator supported on the four qubits of each of the three faces.
First let us specify all six stabilizer measurement circuits.
The circuit that measures $Z_{q_1}Z_{q_2}Z_{q_3} Z_{q_4}$ is specified by taking qubits $q_1$, $q_2$, $q_3$, and $q_4$ to be the four data qubits in descending order in the 1-flag circuit in \cref{fig:StabFTwithAncilla}. 
The other two $Z$-stabilizer measurement circuits are obtained by first rotating \cref{fig:SteaneColor1} by $120^{\circ}$ and $240^{\circ}$ and then using \cref{fig:StabFTwithAncilla}. 
The $X$-stabilizer circuit for each face is the same as the $Z$-stabilizer circuit for that face, replacing CNOT gates acting on data qubits by XNOT gates. The $Z$ component of the flag error set of the circuit in \cref{fig:StabFTwithAncilla} is $\mathcal{E}_Z(Z_{q_1}Z_{q_2}Z_{q_3}Z_{q_4}) = \{ I,Z_{q_1},Z_{q_4},Z_{q_3}Z_{q_4} \}$.
As can be seen from \cref{fig:SteaneColor1}, each of these has a distinct syndrome, thus the measurement circuit for $Z_{q_1}Z_{q_2}Z_{q_3} Z_{q_4}$ satisfies the flag $1$-FTEC condition, as do the remaining five measurement circuits by symmetry.

\subsection{Flag $2$-FTEC with distance-5 codes}
\label{subsec:Distance5protocol}

Before explicitly describing the conditions and protocol, we discuss some of the complications that arise for codes with $d>3$.

For distance-5 codes, we must ensure that if two faults occur during the error correction protocol, the output state will differ from a codeword by an error of at most weight-two. For instance, if two faults occur in a circuit for measuring a stabilizer of weight greater than four, the resulting error $E$ on the data should satisfy $\text{wt}(E) \le 2$ unless there is a flag. In other words, all stabilizer generators should be measured using 2-flag circuits. 

In another case, two faults could occur during the measurement of \textit{different} stabilizer generators $g_{i}$ and $g_{j}$. If two bad locations fail and are both flagged, and assuming there are no more faults, the measured syndrome will correspond to the product of the error caused in each circuit (which could have weight greater than two). Consequently, one should modify \cref{Def:FlagErrSetDef1} of the flag error set to include these types of errors.  
One then decodes based on the pair of errors that resulted in the measured syndrome, provided logically inequivalent errors have distinct syndromes. 

Before stating the protocol, we extend some definitions from \cref{subsec:ReviewChaoReichardt}.

Consider a stabilizer code $\mathcal{S} =  \langle g_{1},g_{2},\cdots , g_{r} \rangle$ and $t$-flag circuits $C(g_{i})$ for measuring the generator $g_{i}$. 
\begin{definition}{\underline{Flag error set}}

Let $\mathcal{E}_{m}(g_{i_{1}},\cdots , g_{i_{k}})$ be the set of all errors caused by precisely $m$ faults spread amongst the circuits $C(g_{i_{1}}),C(g_{i_{2}}), \cdots , C(g_{i_{k}})$ which all flagged. 
\label{Def:FlagErrSetDef}
\end{definition}
Note that there could be more than one fault in a single circuit $C(g_{i_{k}})$. Examples of flag error sets are given in \cref{tab:PossibleCorrelatedErrors} where only contributions from $Z$ errors are included (since the considered code is a CSS code).  We also define a general $t$-fault correction set:

\begin{align}
\tilde{E}_{t}^{m}(g_{i_{1}},\cdots , g_{i_{k}},s) =
\begin{cases}
\{ E \in \mathcal{E}_{m}(g_{i_{1}},\cdots , g_{i_{k}}) \times \mathcal{E}_{t-m}  \\
 \text{ such that } s(E) = s \} \\
\{ E_{\text{min}}(s) \} \text{ if above set empty. }
\end{cases}
\label{eq:GeneralLookupTable}
\end{align}

By $E \in \mathcal{E}_{m}(g_{i_{1}},\cdots , g_{i_{k}}) \times \mathcal{E}_{t-m}$, we are considering the set consisting of products between errors caused by $k$ flags and any error of weight $t-m$.

As will be seen below, the correction set will form a critical part of the protocol by specifying the correction applied based on the measured syndrome and flag outcomes over multiple syndrome measurement rounds. In the case where $k$ $t$-flag circuits flagged caused by $k \le m \le t$ faults, the correction applied to the data block will correspond to an element of $\mathcal{E}_{m}(g_{i_{1}},\cdots , g_{i_{k}}) \times \mathcal{E}_{t-m}$ if the measured syndrome corresponds to an element in this set (there could also be $t-m$ faults which did not give rise to a flag). However in practice, there could be more than $t$ faults and so the measured syndrome may not be consistent with any element of the set $\mathcal{E}_{m}(g_{i_{1}},\cdots , g_{i_{k}}) \times \mathcal{E}_{t-m}$. In this case, and for the error correction protocol to satisfy the second criteria of \cref{Def:FaultTolerantDef}, the correction will correspond to $E_{\text{min}}(s)$. These features are all included in the set $\tilde{E}_{t}^{m}(g_{i_{1}},\cdots , g_{i_{k}},s)$.

\begin{definition}{\underline{\textbf{Flag $2$-FTEC condition:}}}

Consider a stabilizer code $\mathcal{S} =  \langle g_{1},g_{2},\cdots , g_{r} \rangle$ and $2$-flag circuits $\{ C(g_{1}),C(g_{2}), \cdots , C(g_{r}) \}$. For any choice of generators $\{ g_{i}, g_{j} \}$:
\begin{enumerate}
\item $E,E' \in \mathcal{E}_{2}(g_{i},g_{j}) \Rightarrow  s(E)\neq s(E')$ or $E \sim E'$,
\item $E,E' \in \mathcal{E}_{2}(g_{i})\cup  (\mathcal{E}_{1}(g_{i}) \times \mathcal{E}_{1}) \Rightarrow s(E) \neq s(E')$ or $E \sim E'$.
\end{enumerate}
\label{Def:Flag2FTECcondition}
\end{definition}

In order to state the protocol, we define an update rule given a sequence of syndrome measurements using $t$-flag circuits for the counters\footnote{$n_{\text{diff}}$ tracks the minimum number of faults that could have caused the observed syndrome outcomes.  For example, if the sequence $s_{1},s_{2},s_{1}$ was measured, $n_{\text{diff}}$ would increase by one since a single measurement fault could give rise to the given sequence (for example, this could be caused by a single CNOT failure which resulted in a data qubit and measurement error). However for the sequence $s_{1},s_{2},s_{1},s_{2}$,  $n_{\text{diff}}$ would increase by two.}
$n_{\text{diff}}$ and $n_{\text{same}}$ as follows:

\vspace{10px}

\fbox{\begin{minipage}{23em}
\underline{\textbf{Flag $2$-FTEC protocol -- update rules:}} 

Given a sequence of consecutive syndrome measurement outcomes  $s_{k}$ and $s_{k+1}$:
\begin{enumerate}
\item If $n_{\text{diff}}$ didn't increase in the previous round, and $s_{k}\neq s_{k+1}$, increase $n_{\text{diff}}$ by one. 
\item If a flag occurs, reset $n_{\text{same}}$ to zero.

\item If $s_{k} = s_{k+1}$, increase $n_{\text{same}}$ by one.
\end{enumerate}
\end{minipage}}

\vspace{10px}

For the following protocol to satisfy \cref{Def:FaultTolerantDef}, one can assume there are at most 2 faults. If the Flag $2$-FTEC condition is satisfied, the protocol is implemented as follows:

\vspace{10px}

\fbox{\begin{minipage}{23em}
\underline{\textbf{Flag $2$-FTEC protocol -- corrections:}} 

Set $n_{\text{diff}}=0$ and $n_{\text{same}} = 0$.

Repeat the syndrome measurement using flag circuits until one of the following is satisfied: 
   \begin{enumerate}
    \item The same syndrome $s$ is repeated $3-n_{\text{diff}}$ times in a row and there were no flags, apply the correction $E_{\text{min}}(s)$.
     \item There were no flags and $n_{\text{diff}}=2$. Repeat the syndrome measurement using non-flag circuits yielding syndrome $s$. Apply the correction $E_{\text{min}}(s)$.  
   \item Some set of two circuits $C(g_{i})$ and $C(g_{j})$ have flagged. Repeat the syndrome measurement using non-flag circuits yielding syndrome $s$. Apply any correction from the set $\tilde{E}_{2}^{2}(g_{i},g_{j},s)$.  
    \item Any circuit $C(g_{i})$ has flagged and $n_{\text{diff}}=1$. Repeat the syndrome measurement using non-flag circuits yielding syndrome $s$. Apply any correction from the set $\tilde{E}_{2}^{1}(g_{i},s)$.
   \item Any circuit $C(g_{i})$ has flagged and $n_{\text{diff}}=0$ and $n_{\text{same}}=1$. Use the measured syndrome $s$ from the last round. Apply any correction from the set $\tilde{E}_{2}^{1}(g_{i},s)\cup \tilde{E}_{2}^{2}(g_{i},s)$.     
         \end{enumerate}
\end{minipage}}

\begin{figure}
	\centering
	\includegraphics[width=0.5\textwidth]{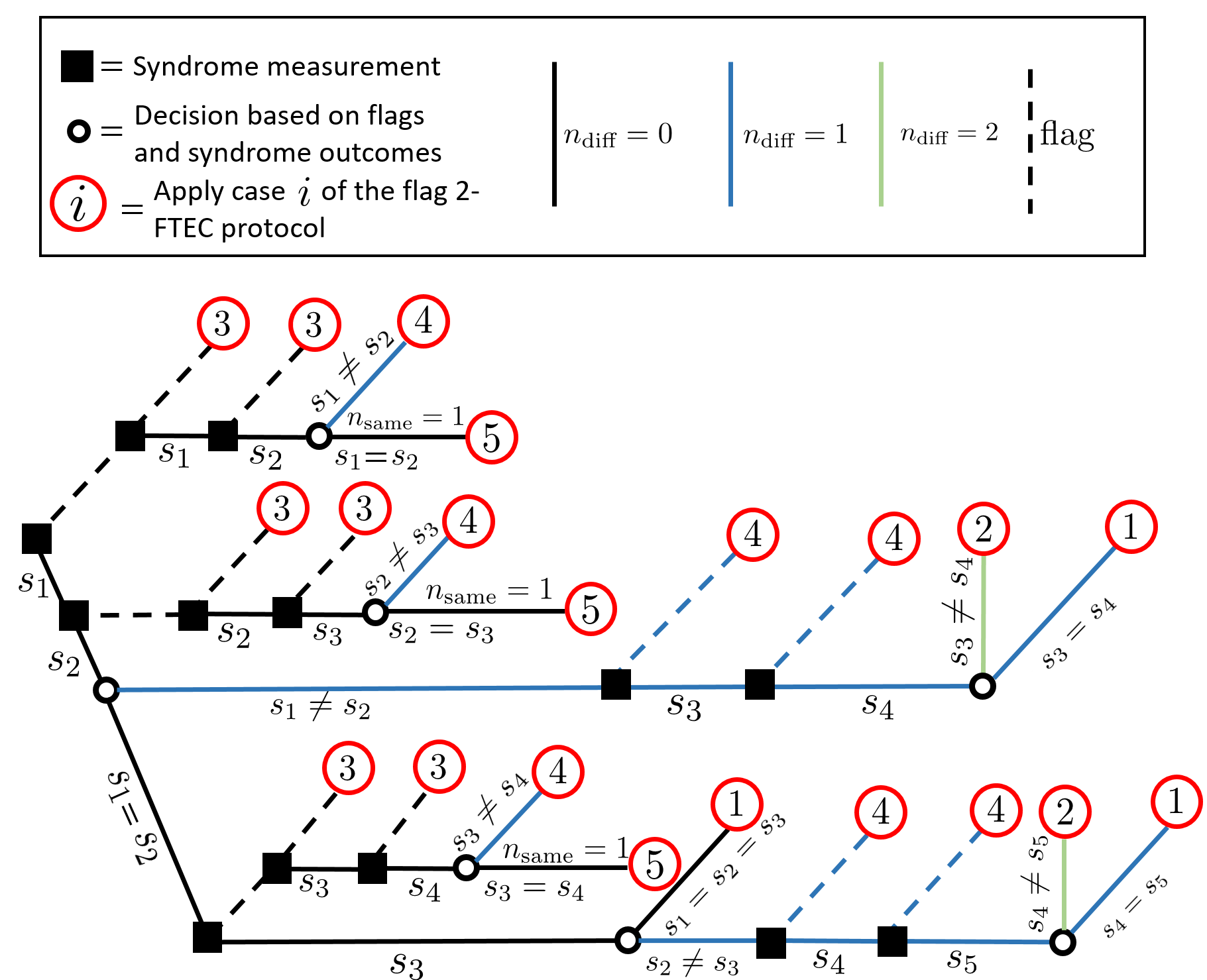}
	\caption{Tree diagram for the Flag $2$-FTEC protocol. Numbers encircled in red at the end of the edges indicate which step to implement in the Flag $2$-FTEC Protocol. A dashed line is followed when any of the 2-flag circuits $C(g_{i})$ flags. Solid squares indicate a syndrome measurement using 2-flag circuits whereas rings indicate a decision based on syndrome outcomes. Edges with different colors indicate the current value of $n_{\text{diff}}$ in the protocol. Note that the protocol is repeated at most 6 times.}
	\label{fig:TreeDiagramD5}
\end{figure}

\vspace{10px}

Note that when computing the update rules, if a flag occurs during the $j$'th round of syndrome measurements, the syndrome is not recorded for that round since all stabilizers must be measured. Thus when computing $n_{\text{diff}}$ and $n_{\text{same}}$ using consecutive syndromes $s_k$ and $s_{k+1}$, we are assuming that no flags occurred during rounds $k$ and $k+1$.

In each case of the protocol, the correction sets correspond to those data errors which could arise from up to two faults which are consistent with the conditions of the case.   
As the elements are logically equivalent (by \cref{eq:GeneralLookupTable,Def:Flag2FTECcondition}), which element is applied is unimportant.

The general protocol for codes of arbitrary distance is given in  \cref{app:GeneralFTEC}.

\subsection{Examples of flag 2-FTEC applied to $d=5$ codes}
\label{subsec:ApplicationProtocolColorCode}

\begin{figure}
\centering
\begin{subfigure}{0.22\textwidth}
\includegraphics[width=\textwidth]{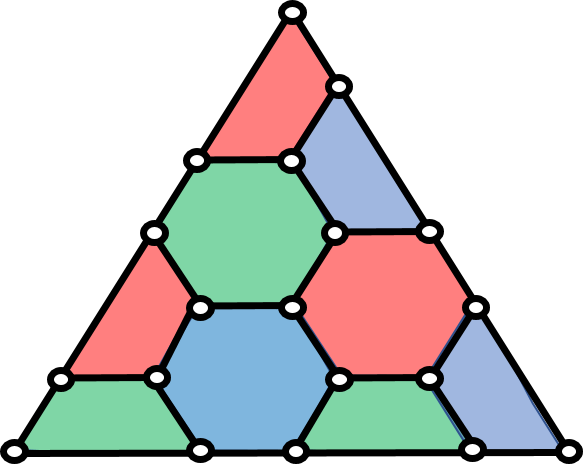}
\caption{}
\label{fig:19qubitLatticeColor}
\end{subfigure}
\begin{subfigure}{0.25\textwidth}
\includegraphics[width=\textwidth]{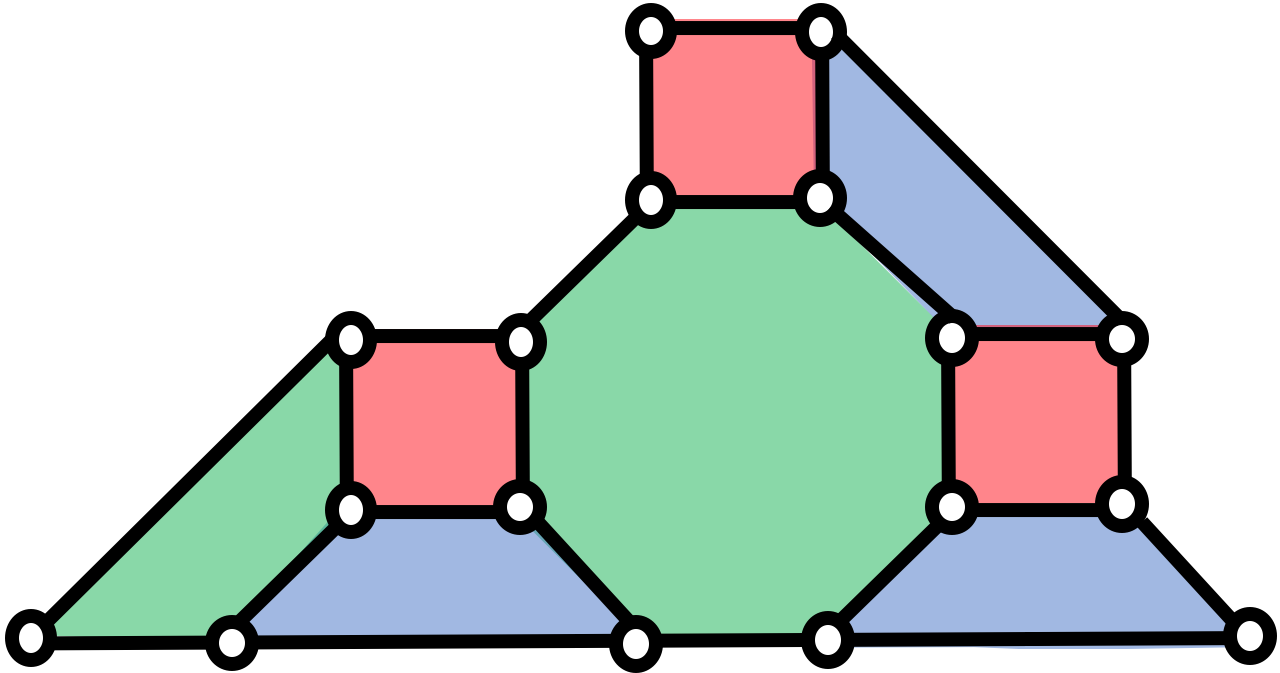}
\caption{}
\label{fig:17qubitLatticeColor}
\end{subfigure}
\caption{Graphical representation of (a) the 19-qubit 2D color code and (b) the 17-qubit 2D color code. The $X$ and $Z$ stabilizers of the code are symmetric, given by the vertices of each plaquette. Both codes have distance-5.}
\label{fig:ColorCodeLattices}
\end{figure}

In this section we give examples of the flag $2$-FTEC protocol applied to the 2-dimensional \codepar{19,1,5} and \codepar{17,1,5} color codes, (see \cref{fig:19qubitLatticeColor,fig:17qubitLatticeColor}). We first find 2-flag circuits for all generators (weight-4 and -6 for the 19-qubit code and weight-4 and -8 for the 17-qubit code). We also show that the flag 2-FTEC condition is satisfied for both codes. 

\begin{table}[t]
\begin{tabular}{ c|c|c|c}
\multicolumn{2}{c|}{Weight-4 measurement} & \multicolumn{2}{|c}{Weight-6 measurement}\\ \hline
1-fault & 2-faults & 1-fault & 2-faults  \\ \hline
$I$,$Z_{1}$ & $I$,$Z_{1}$,$Z_{2}$ & $I$,$Z_{1}$,$Z_{6}$ & $I$,$Z_{1}$,$Z_{2}$\\
$Z_{4}$ & $Z_{3}$,$Z_{4}$&$Z_{1}Z_{2}$  &$Z_{3}$, $Z_{4}$,$Z_{5}$,$Z_{6}$ \\
$Z_{3}Z_{4}$ &$Z_{1}Z_{2}$ & $Z_{5}Z_{6}$ & $Z_{1}Z_{2}$,$Z_{1}Z_{3}$  \\ 
&  $Z_{1}Z_{4}$& $Z_{4}Z_{5}Z_{6}$ & $Z_{1}Z_{4}$,$Z_{1}Z_{5}$  \\   
&  $Z_{2}Z_{4}$& & $Z_{1}Z_{6}$,$Z_{2}Z_{3}$  \\    
&  & & $Z_{2}Z_{6}$,$Z_{3}Z_{4}$    \\  
&  & & $Z_{3}Z_{6}$,$Z_{4}Z_{5}$  \\   
&  & & $Z_{4}Z_{6}$,$Z_{5}Z_{6}$  \\       
&  & & $Z_{1}Z_{2}Z_{3}$,$Z_{1}Z_{5}Z_{6}$  \\   
&  & & $Z_{2}Z_{5}Z_{6}$,$Z_{3}Z_{4}Z_{5}$ \\  
&  & & $Z_{3}Z_{4}Z_{6}$,$Z_{3}Z_{5}Z_{6}$ \\
&  & &  $Z_{4}Z_{5}Z_{6}$\\
\end{tabular}
\caption{$Z$ part of the flag error set of \cref{Def:FlagErrSetDef} for flag circuits used to measure the stabilizers $g_{1} = Z_{1}Z_{2}Z_{3}Z_{4}$ and $g_{3} = Z_{1}Z_{2}Z_{3}Z_{4}Z_{5}Z_{6}$ (we removed errors equivalent up to the stabilizer being measured).} 
\label{tab:PossibleCorrelatedErrors}
\end{table}

For a 2-flag circuit, two faults leading to an error of weight greater or equal to 3 (up to multiplication by the stabilizer) must always cause at least one of the flag qubits to flag. As shown in \cref{app:GeneralTflaggedCircuitConstruction}, a 2-flag circuit satisfying these properties can always be constructed using at most four flag qubits. We show 2-flag circuits for measuring weight six and eight generators in \cref{fig:Flag2CircuitExamples}.

In \cref{subsec:Remarks}, it will be shown that the family of color codes with a hexagonal lattice satisfy a sufficient condition which guarantees that the flag 2-FTEC condition is satisfied. However, there are codes that do not satisfy the sufficient condition but which nonetheless satisfy the 2-Flag FTEC condition. For the 19-qubit and 17-qubit color codes, we verified that the flag 2-FTEC condition was satisfied by enumerating all errors as one would have to for a generic code. In particular, in the case where the 2-flag circuits $C(g_{i})$ and $C(g_{j})$ flag, the resulting errors belonging to the set $\mathcal{E}_{2}(g_{i},g_{j})$ must be logically equivalent or have distinct syndromes (which we verified to be true). If a single circuit $C(g_{i})$ flags, there could either have been two faults in the circuit or a single fault along with another error that did not cause a flag. If the same syndrome is measured twice in a row after a flag, then errors in the set $\mathcal{E}_{2}(g_{i})\cup  (\mathcal{E}_{1}(g_{i}) \times \mathcal{E}_{1})$ must be logically equivalent or have distinct syndromes (which we verified). If there is a flag but two different syndromes are measured in a row, errors belonging to the set  $\mathcal{E}_{1}(g_{i}) \times \mathcal{E}_{1}$ must be logically equivalent or have distinct syndromes (as was already checked). The flag error sets (see \cref{Def:FlagErrSetDef}) for the 19-qubit code can be obtained using the Pauli's shown in \cref{tab:PossibleCorrelatedErrors}.

\begin{figure}
\centering
\begin{subfigure}{0.32\textwidth}
\includegraphics[width=\textwidth]{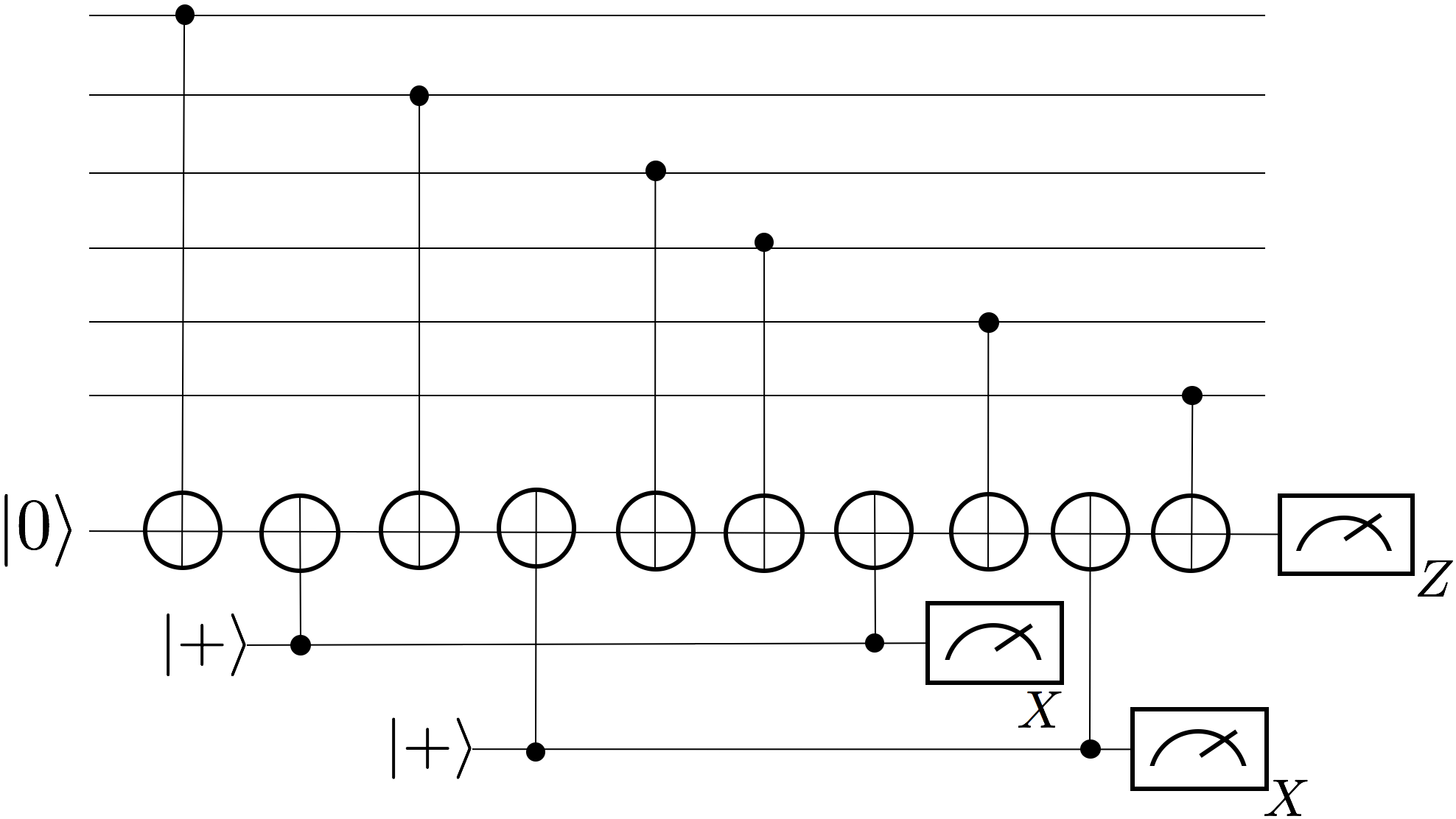}
\caption{}
\label{fig:WeightSixGenerators}
\end{subfigure}
\begin{subfigure}{0.37\textwidth}
\includegraphics[width=\textwidth]{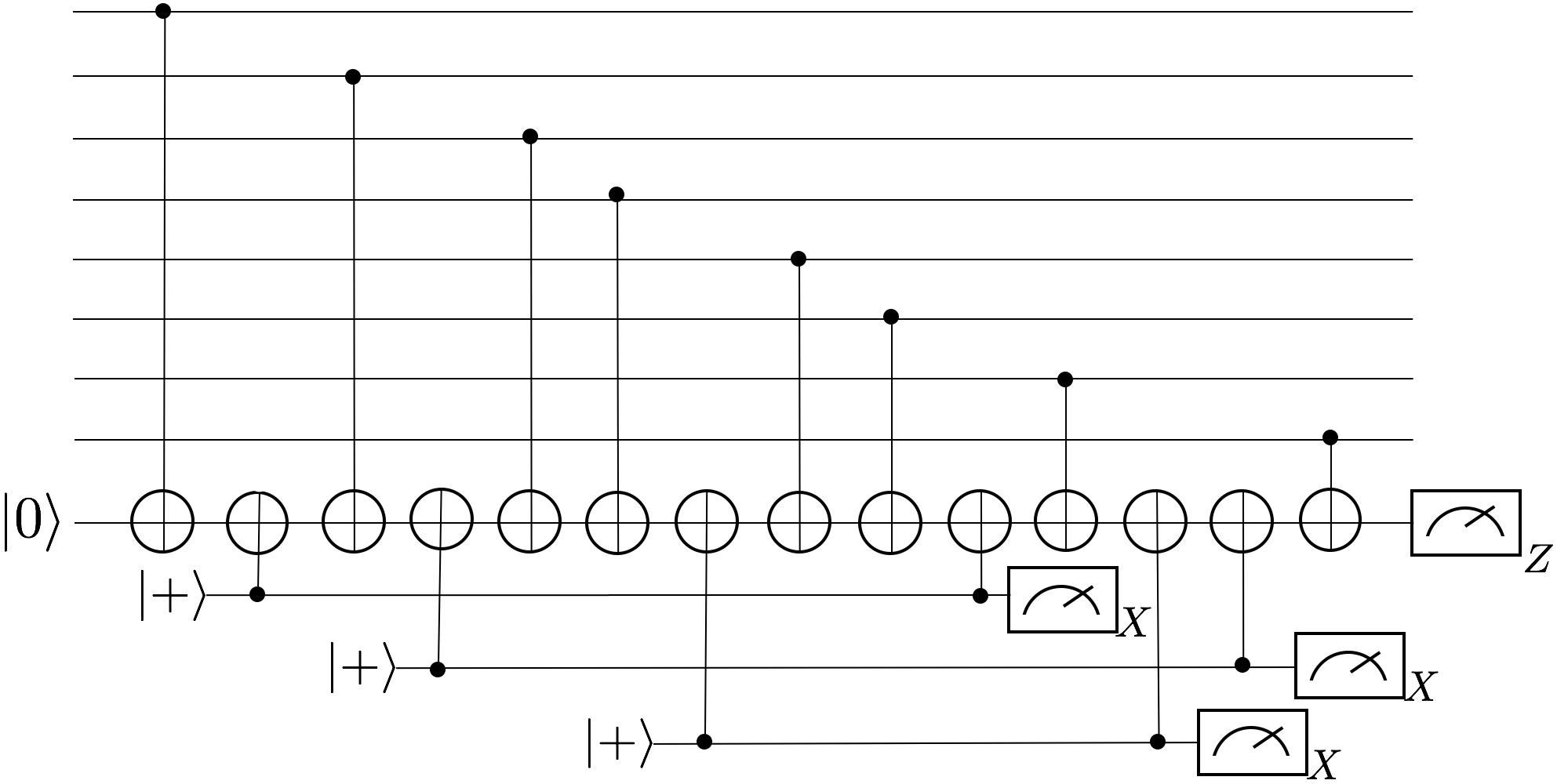}
\caption{}
\label{fig:Weight8Generator}
\end{subfigure}
\caption{
	Illustration of 2-flag circuits for measuring (a) $Z^{\otimes 6}$ requiring only two flag qubits and (b) $Z^{\otimes 8}$ requiring only three flag qubits. Flag qubits are prepared in the $\ket{+}$ state, and measurement qubits in the  $\ket{0}$ state.	
}
\label{fig:Flag2CircuitExamples}
\end{figure}

Given that the flag 2-FTEC condition is satisfied, the flag 2-FTEC protocol can be implemented following the steps of \cref{subsec:Distance5protocol} and the tree diagram illustrated in \cref{fig:TreeDiagramD5}. 

\section{Flag error correction protocol for arbitrary distance codes}
\label{app:GeneralFTEC}

In this section we first provide the general flag $t$-FTEC protocol in \cref{subsec:GeneralProtocol}. In \cref{subsec:Remarks} we give a sufficient condition for stabilizer codes that allow us to easily prove that flag FTEC can be applied to a number of infinite code families. We show that the families of surface codes, hexagonal lattice color codes and quantum Reed-Muller codes satisfy the sufficient condition. Lastly, in \cref{app:GeneralTflaggedCircuitConstruction}, we give general $t$-flag circuit constructions which are applicable to the code families described in \cref{subsec:Remarks}.  

We assume the reader is familiar with all previous definitions. However, to make this section reasonably self contained, we repeat some key definitions below.

\begin{definitionBob}{6}{\underline{$t$-flag ciruit}}
A circuit $C(P)$ which, when fault-free, implements a projective measurement of a weight-$w$ Pauli $P$ without flagging is a $t$-flag circuit if the following holds: For any set of $v$ faults at up to $t$ locations in $C(P)$ resulting in an error $E$ with $\text{min}(\text{wt}(E),\text{wt}(E P)) > v$, the circuit flags. 
\end{definitionBob}

\begin{definitionBob}{9}{\underline{Flag error set}}

Let $\mathcal{E}_{m}(g_{i_{1}},\cdots , g_{i_{k}})$ be the set of all errors caused by precisely $m$ faults spread amongst the circuits $C(g_{i_{1}}),C(g_{i_{2}}), \cdots , C(g_{i_{k}})$ which all flagged. 
\end{definitionBob}

We also remind the reader of the correction set
\begin{align}
\tilde{E}_{t}^{m}(g_{i_{1}},\cdots , g_{i_{k}},s) =
\begin{cases}
\{ E \in \mathcal{E}_{m}(g_{i_{1}},\cdots , g_{i_{k}}) \times \mathcal{E}_{t-m}  \\
 \text{ such that } s(E) = s \} \\
\{ E_{\text{min}}(s) \} \text{ if above set empty. }
\end{cases}
\label{eq:GeneralLookupTableV2}
\end{align}

\subsection{Conditions and protocol}
\label{subsec:GeneralProtocol}

In what follows we generalize the fault-tolerant error correction protocol presented in \cref{subsec:Distance5protocol} to stabilizer codes of arbitrary distance. 

\begin{definition}{\underline{\textbf{Flag $t$-FTEC condition:}}}

Consider a stabilizer code $\mathcal{S} =  \langle g_{1},g_{2},\cdots , g_{r} \rangle$ and $t$-flag circuits $\{ C(g_{1}),C(g_{2}), \cdots , C(g_{r}) \}$. For any set of $m$ stabilizer generators $\{ g_{i_{1}},\cdots ,g_{i_{m}} \}$ such that $1 \le m \le t$, every pair of elements $E,E' \in \bigcup_{j=0}^{t-m}\mathcal{E}_{t-j}(g_{i_{1}},\cdots ,g_{i_{m}})\times \mathcal{E}_{j}$ either satisfy $s(E)\neq s(E')$ or $E \sim E'$. 
\label{Def:FlagtFTECcondition}
\end{definition}

The above conditions ensure that if there are at most $t=\lfloor (d-1)/2 \rfloor$ faults, the protocol described below will satisfy the fault-tolerance conditions of \cref{Def:FaultTolerantDef}. 

In order to state the protocol, we define an update rule given a sequence of syndrome measurements using $t$-flag circuits for the counters $n_{\text{diff}}$ and $n_{\text{same}}$ as follows (see also \cref{subsec:Distance5protocol} and the associated footnote):

\vspace{10px}

\fbox{\begin{minipage}{23em}
\underline{\textbf{Flag $t$-FTEC protocol -- update rules:}} 

Given a sequence of consecutive syndrome measurement outcomes  $s_{k}$ and $s_{k+1}$:
\begin{enumerate}
\item If $n_{\text{diff}}$ didn't increase in the previous round, and $s_{k}\neq s_{k+1}$, increase $n_{\text{diff}}$ by one. 

\item If a flag occurs, reset $n_{\text{same}}$ to zero.

\item If $s_{k} = s_{k+1}$, increase $n_{\text{same}}$ by one.
\end{enumerate}
\end{minipage}}

\vspace{10px}

\fbox{\begin{minipage}{23em}
\underline{\textbf{Flag $t$-FTEC protocol -- corrections:}} 

Set $n_{\text{diff}}=0$ and $n_{\text{same}} = 0$.

Repeat the syndrome measurement using flag circuits until one of the following is satisfied: 
   \begin{enumerate}
   \item The same syndrome $s$ is repeated $t-n_{\text{diff}}+1$ times in a row and there are no flags, apply the correction $E_{\text{min}}(s)$.  
   \item There were no flags and $n_{\text{diff}}=t$. Repeat the syndrome measurement using non-flag circuits yielding the syndrome $s$. Apply the correction $E_{\text{min}}(s)$.
   \item Some set of $t$ circuits $\{ C(g_{i_{1}}), \cdots ,  C(g_{i_{t}}) \}$ have flagged. Repeat the syndrome measurement using non-flag circuits yielding syndrome $s$. Apply any correction from the set $\tilde{E}_{t}^{t}(g_{i_{1}},\cdots , g_{i_{t}},s)$.
    \item Some set of $m$ circuits $\{ C(g_{i_{1}}), \cdots ,  C(g_{i_{m}}) \}$ have flagged with $1 \le m < t$ and $n_{\text{diff}} = t -m $. Repeat the syndrome measurement using non-flag circuits yielding syndrome $s$. Apply any correction from the set $\tilde{E}_{t}^{m}(g_{i_{1}},\cdots , g_{i_{m}},s)$.
    \item Some set of $m$ circuits $\{ C(g_{i_{1}}), \cdots ,  C(g_{i_{m}}) \}$ have flagged with $1 \le m < t$; $n_{\text{diff}} <t-m$ and $n_{\text{same}} = t-m-n_{\text{diff}}+1$. Use the syndrome $s$ obtained during the last round and apply any correction from the set $\bigcup_{j=0}^{t-m-n_{\text{diff}}}\tilde{E}_{t}^{t-j-n_{\text{diff}}}(g_{i_{1}},\cdots ,g_{i_{m}}, s)$.        
         \end{enumerate}         
\end{minipage}}

\vspace{10px}

In each case of the protocol, the correction sets correspond to those data errors which could arise from up to $t$ faults which are consistent with the conditions of the case. As the elements are logically equivalent (by \cref{eq:GeneralLookupTableV2,Def:FlagtFTECcondition}), which element is applied is unimportant.

For the protocol to satisfy the fault-tolerance criteria, the syndrome measurement needs to be repeated a minimum of $t+1$ times. In the scenario where the most syndrome measurement rounds are performed, $t$ identical syndromes are obtained before a fault causes the $t+1$'th syndrome to change (in which case $n_{\text{diff}}$ would increase by one). Afterwords, one measures the same syndrome $t-1$ times in a row until another fault causes the syndrome to change. This continues until all of the $t$ possible faults have been exhausted. At this stage, $n_{\text{diff}}=t$ so an extra syndrome measurement round will be performed using non-flag circuits. Thus the maximum number of syndrome measurement rounds $n_{\text{max}}$ is given by
\begin{align}
n_{\text{max}} = \sum_{j=0}^{t-1}(t-j) + t+1 = \frac{1}{2}(t^{2}+3t+2). 
\label{Eq:Nmax}
\end{align}
Note that a similar approach by repeating syndrome measurements is used for Shor error correction \cite{AGP06,Gottesman2010}. However, our scheme requires fewer syndrome measurement repetitions than is often described for Shor error correction and does not require the preparation and verification of a $w$-qubit cat state when measuring a stabilizer of weight-$w$. \footnote{One could also define update rules analogous to those for $n_{\text{diff}}$ and $n_{\text{same}}$ when implementing Shor-EC which would only require at most $\frac{1}{2}(t^{2}+3t+2)$ syndrome measurement repetitions as in the flag $t$-FTEC protocol.}

For codes that satisfy the flag $t$-FTEC condition, we also show in \cref{app:StatePrepAndMeasure} how to fault-tolerantly prepare and measure logical states using the flag $t$-FTEC protocol.

\subsection{Sufficient condition and satisfying code families}
\label{subsec:Remarks}

The general flag $t$-FTEC condition can be difficult to verify for a given code since it depends on precisely which $t$-flag circuits are used.
A sufficient (but not necessary) condition that implies the flag $t$-FTEC condition is as follows:

\textbf{Sufficient flag $t$-FTEC condition:}

Given a stabilizer code with distance $d>1$, and $\mathcal{S} =  \langle g_{1},g_{2},\cdots , g_{r} \rangle$, we require that for all $v = 0,1, \dots t$, all choices $Q_{t-v}$ of $2(t-v)$ qubits, and all subsets of $v$ stabilizer generators $\{ g_{i_1},\dots ,g_{i_v} \} \subset \{ g_{1},\cdots , g_{r} \}$, there is no logical operator $l \in N(\mathcal{S}) \setminus \mathcal{S}$ such that
\begin{align}
\text{supp}(l) \subset \text{supp}(g_{i_1}) \cup \cdots \cup \text{supp}(g_{i_v}) \cup Q_{t-v},
\end{align}
where $N(\mathcal{S})$ is the normalizer of the stabilizer group.

If this condition holds, then the flag $t$-FTEC condition is implied for any choice of $t$-flag circuits $\{ C(g_{1}),C(g_{2}), \cdots , C(g_{r}) \}$.

To prove this, we must show that it implies that none of the sets appearing in the $t$-FTEC condition contain elements that differ by a logical operator. 
Consider the set $\bigcup_{j=0}^{t-m}\mathcal{E}_{t-j}(g_{i_{1}},\cdots ,g_{i_{m}})\times \mathcal{E}_{j}$ for some set of $m$ stabilizer generators $\{ g_{i_{1}},\cdots ,g_{i_{m}} \}$ with $1 \le m \le t$. 
An error $E$ from this set will have support in the union of the support of the $m$ stabilizer generators $\{ g_{i_{1}},\cdots ,g_{i_{m}} \}$, along with up to $t-m$ other single qubits. 
Another error $E'$ from this set will have support in the union of support of the same $m$ stabilizer generators $\{ g_{i_{1}},\cdots ,g_{i_{m}} \}$, along with up to $t-m$ other \textit{potentially different} single qubits. 
If the sufficient condition holds, then $\text{supp}(E E')$ cannot contain a logical operator.

The sufficient flag $t$-FTEC condition is straightforward to verify for a number of code families with a lot of structure in their stabilizer generators and logical operators.
We briefly provide a few examples.

\textbf{Surface codes flag $t$-FTEC:}

The rotated surface code \cite{KITAEV97Surface,TS14,BK98,PhysRevLett.90.016803} family \codepar{d^2,1,d} for all odd $d=2t+1$ (see \cref{fig:surfacecodeproof}) satisfies the flag $t$-FTEC condition using any 4-flag circuits.

Firstly, by performing an exhaustive search, we verified that the circuit of \cref{fig:StabFTwithAncilla} is a 4-flag circuit.

As a CSS code, we can restrict our attention to purely $X$-type and $Z$-type logical operators.
An $X$ type logical operator must have at least one qubit in each of the $2t+1$ rows of the lattice shown. 
However, each stabilizer only contains qubits in two different rows. 
Therefore, with $v$ stabilizer generators, at most $2v$ of the rows could have support.
With an additional $2(t-v)$ qubits, at most $2t$ rows can be covered, which is fewer than the number of rows, and therefore no logical $X$ operator is supported on the union of the support of $v$ stabilizers and $2(t-v)$ qubits.
An analogous argument holds for $Z$-type logical operators, therefore the sufficient $t$-FTEC condition is satisfied.

\begin{figure}
	\centering
	\includegraphics[width=0.3\textwidth]{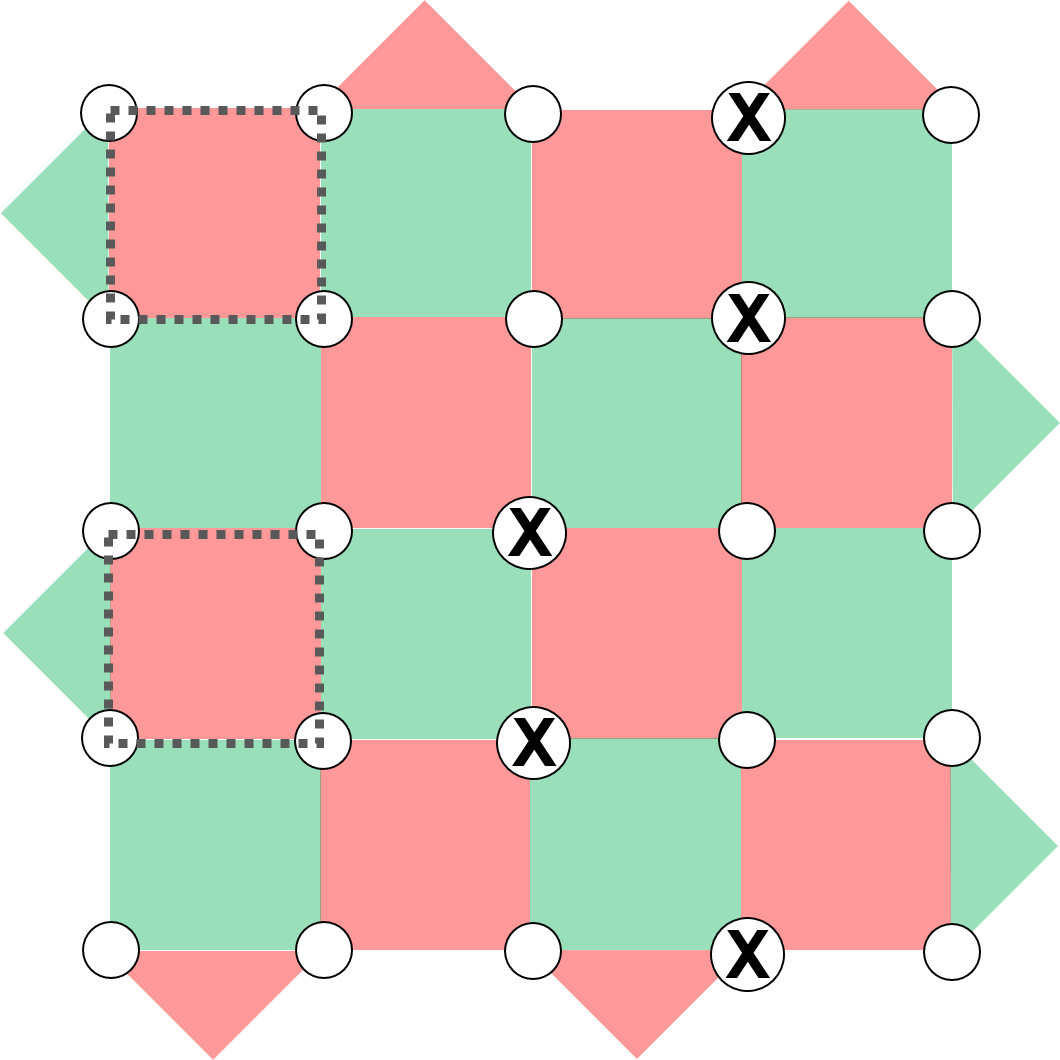}
	\caption{
		The $d=5$ rotated surface code. Qubits are represented by white circles, and $X$ and $Z$ stabilizer generators are represented by red and green faces. 
		As in the example, any logical $X$ operator has $X$ operators acting on at least five qubits, with at least one in each row of the lattice, involving an even number in any green face.
		In this case, no two stabilizer generators can have qubits in five rows, and therefore cannot contain an $X$ type logical operator.
		The argument is analogous for logical $Z$ operators.
	}
	\label{fig:surfacecodeproof}
\end{figure}

\textbf{Color codes flag $t$-FTEC:}

Here we show that any distance $d=(2t+1)$ self-dual CSS code with at most weight-6 stabilizer generators satisfies the flag $t$-FTEC condition using any 6-flag circuits (see \cref{fig:6flagCircuit} for an example).
Examples include the hexagonal color code \cite{Bombin06TopQuantDist} family \codepar{(3d^2+1)/4,1,d} (see \cref{fig:19qubitLatticeColor}).

As a self-dual CSS code, $X$ and $Z$ type stabilizer generators are identically supported and we can consider a pure $X$-type logical operator without loss of generality. 

Consider an $X$ type logical operator $l$ such that
\begin{align}
\text{supp}(l) \subset \text{supp}(g_{i_1}) \cup \cdots \cup \text{supp}(g_{i_v}) \cup Q_{t-v},
\label{eq:SuppRepeated}
\end{align}
for some set of $v$ stabilizer generators $\{ g_{i_1},\dots ,g_{i_v} \} \subset \{ g_{1},\cdots , g_{r} \}$ along with $2(t-v)$ other qubits $Q_{t-v}$.
Restricted to the support of any of the $v$ stabilizers $g_i$, $l|_{g_i}$ must have weight 0, 2, 4, or 6 (otherwise it would anti-commute with the corresponding $Z$ type stabilizer). 
If the restricted weight is 4 or 6, we can produce an equivalent lower weight logical operator $l' = g_i l$, which still satisfies \cref{eq:SuppRepeated}.
Repeating this procedure until the weight of the logical operator can no longer be reduced yields a logical operator $l_{\text{min}}$ which has weight either 0 or 2 when restricted to the support of any of the $v$ stabilizer generators. 
The total weight of $l_{\text{min}}$ is then at most $2v+2(t-v) =2t$, which is less than the distance of the code, giving a contradiction which therefore implies that $l$ could not have been a logical operator. 
An analogous arguments holds for $Z$-type logical operators, therefore the sufficient $t$-FTEC condition is satisfied.

This proof can be easily extended to show that any distance $d=(2t+1)$ self-dual CSS code with at most weight-$2 v$ stabilizer generators for some integer $v$ satisfies the flag $t'$-FTEC condition using any $(v-1)$-flag circuits, where $t'= t/\lfloor v/2\rfloor$. 

\textbf{Quantum Reed-Muller codes flag $1$-FTEC:}

The \codepar{n=2^m-1,k=1,d=3} quantum Reed-Muller code family for every integer $m\geq 3$ satisfies the flag 1-FTEC condition using any 1-flag circuits for the standard choice of generators.

We use the following facts about the Quantum Reed-Muller code family (see \cref{app:QRMcodes} and \cite{ADP14} for proofs of these facts): (1) The code is CSS, allowing us to restrict to pure $X$ type and pure $Z$ type logical operators, (2) all pure $X$ or $Z$ type logical operators have odd support, (3) every $X$-type stabilizer generator has the same support as some $Z$-type stabilizer generator, and (4) every $Z$-type stabilizer generator is contained within the support of an $X$ type generator. 

We only need to prove the sufficient condition for $v=0,1$ in this case. For $v=0$, no two qubits can support a logical operator, as any logical operator has weight at least three.
For $v=1$, assume the support of an $X$-type stabilizer generator contains a logical operator $l$. 
That logical operator $l$ cannot be $Z$ type or it would anti-commute with the $X$-stabilizer due to its odd support. 
However, by fact (3), there is a $Z$ type stabilizer with the same support as the $X$ type stabilizer, therefore implying $l$ cannot be $X$ type either.
Therefore, by contradiction we conclude that no logical operator can be contained in the support of an $X$ stabilizer generator. 
Since every other stabilizer generator is contained within the support of an $X$-type stabilizer generator, a logical operator cannot be contained in the support of any stabilizer generator.

Note that the Hamming code family has a stabilizer group which is a proper subgroup of that of the quantum Reed-Muller codes described here. The $X$-type generators of each Hamming code are the same as for a quantum Reed-Muller code, and the Hamming codes are self-dual CSS codes. It is clear that the sufficient condition cannot be applied to the Hamming code since it has even-weight $Z$-type logical operators (which are stabilizers for the quantum Reed-Muller code) supported within the support of some stabilizer generators. 

\textbf{Codes which satisfy flag $t$-FTEC condition but not the sufficient flag $t$-FTEC condition:}

\begin{figure}
	\centering
	\begin{subfigure}{0.3\textwidth}
		\includegraphics[width=\textwidth]{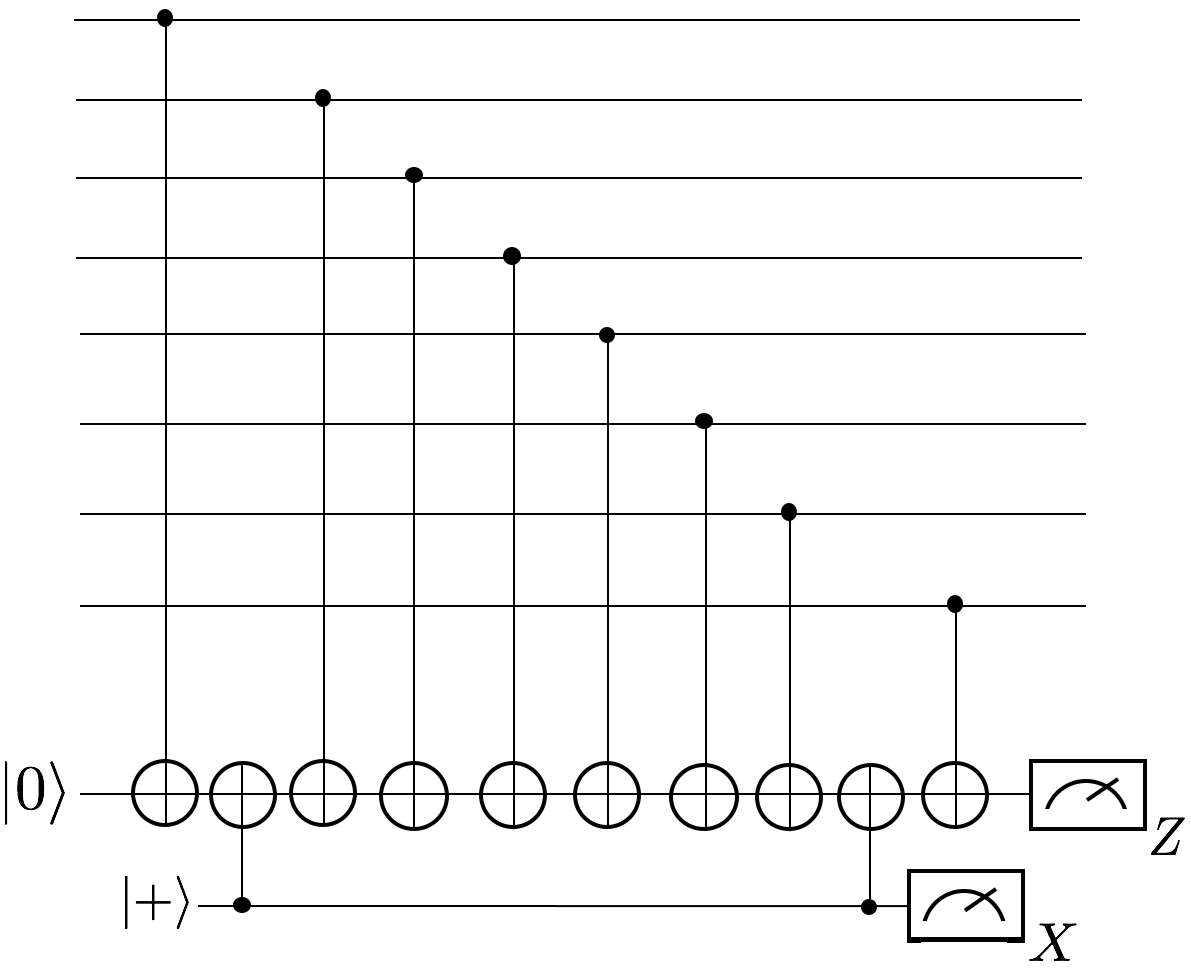}
		\caption{}
		\label{fig:CircuitReichardt1}
	\end{subfigure}
	\begin{subfigure}{0.3\textwidth}
		\includegraphics[width=\textwidth]{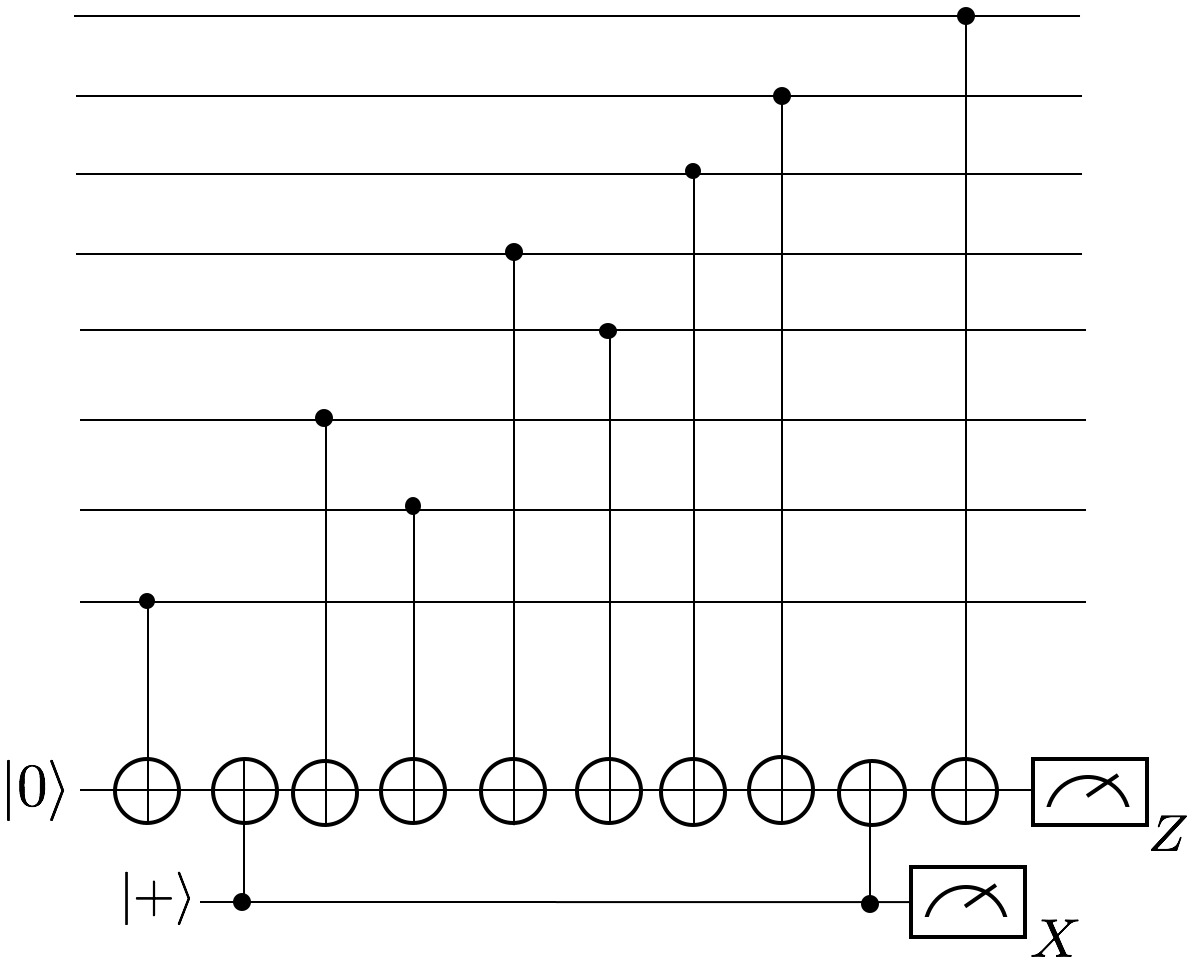}
		\caption{}
		\label{fig:CircuitReichardt2}
	\end{subfigure}
	\caption{(a) A 1-flag circuit for measuring the stabilizer $Z_{8}Z_{9}Z_{10}Z_{11}Z_{12}Z_{13}Z_{14}Z_{15}$ of the \codepar{15,7,3} Hamming code. However a single fault on the fourth or fifth CNOT can lead to the error $Z_{12}Z_{13}Z_{14}Z_{15}$ on the data which is a logical fault. With the CNOT gates permuted as shown in (b), the \codepar{15,7,3} satisfies the general flag 1-FTEC condition.}
\end{figure}

Note that there are codes which satisfy the general flag $t$-FTEC condition but not the sufficient condition presented in this section. An example of such a code is the \codepar{5,1,3} code (see \cref{tab:StabilizerGeneratorsLists} for the codes stabilizer generators and logical operators). Another example includes the Hamming codes as was explained in the discussion on quantum Reed-Muller codes. For instance, consider the \codepar{15,7,3} Hamming code. Using the 1-flag circuit shown in \cref{fig:CircuitReichardt1}, the \codepar{15,7,3} will not satisfy the general flag 1-FTEC condition since a single fault can lead to a logical error on the data. As was shown in \cite{CR17v1}, by permuting the CNOT gates resulting in the circuit illustrated in \cref{fig:CircuitReichardt2}, the flag 1-FTEC condition is satisfied.

\subsection{Circuits}
\label{app:GeneralTflaggedCircuitConstruction}

\begin{figure}
	\centering
	\begin{subfigure}{0.35\textwidth}
		\includegraphics[width=\textwidth]{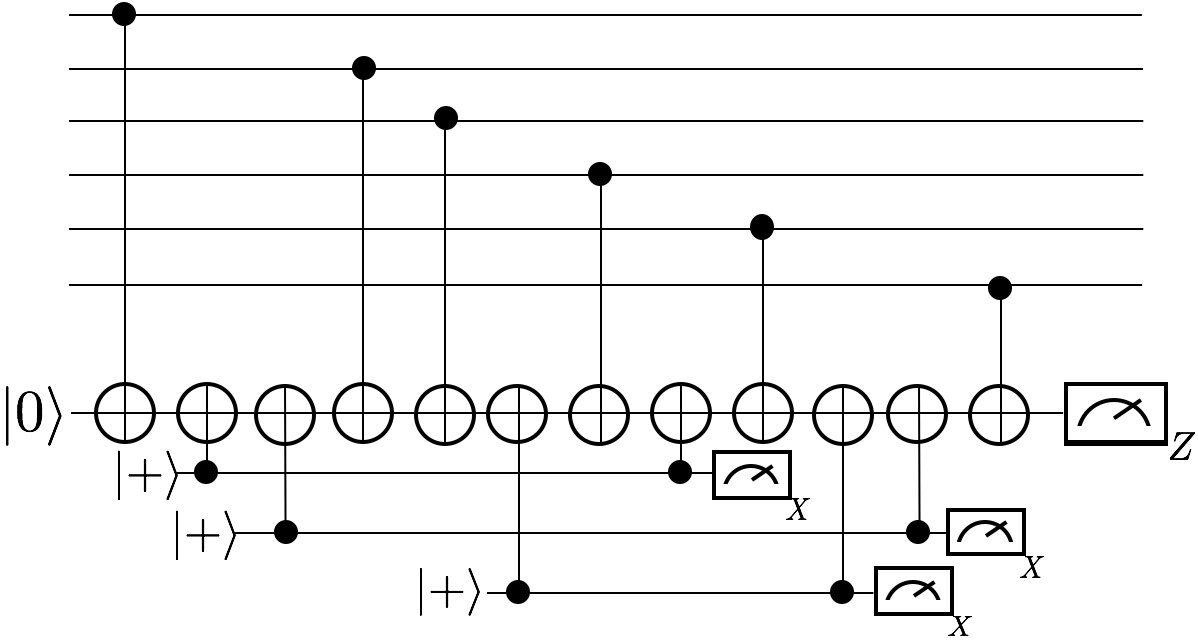}
		\caption{}
		\label{fig:6flagCircuit}
	\end{subfigure}
	\begin{subfigure}{0.35\textwidth}
		\includegraphics[width=\textwidth]{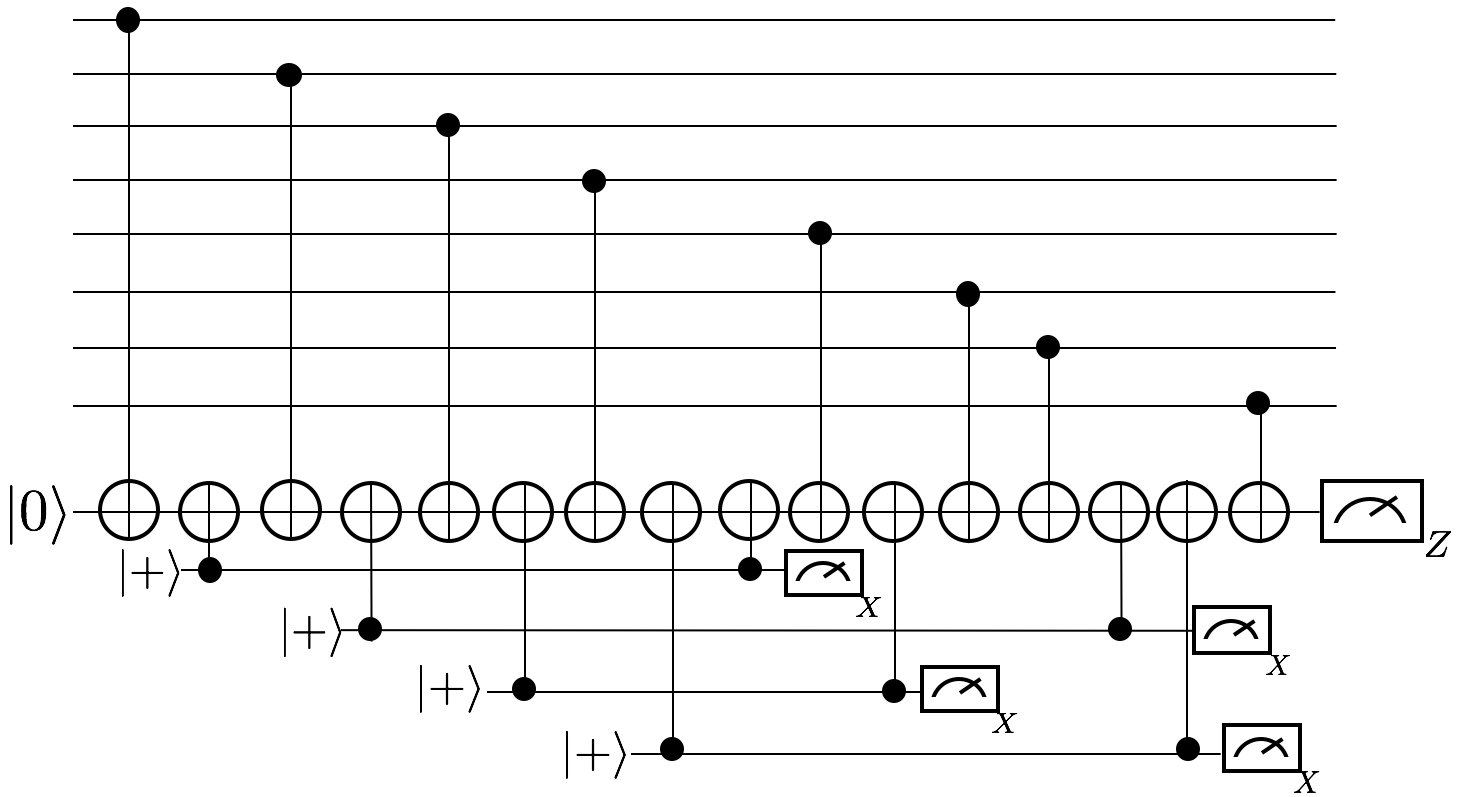}
		\caption{}
		\label{fig:3flagWeight8}
	\end{subfigure}
	\caption{(a) Illustration of a w-flag circuit for measuring the operator $Z^{\otimes w}$ where $w=6$ using the smallest number of flag qubits. (b) Illustration of a 3-flag circuit for measuring $Z^{\otimes 8}$ using the smallest number of flag qubits.}
	\label{fig:ExamplesOfLargeFlagCircuits}
\end{figure}

In \cref{subsec:Remarks} we showed that the family of surface codes, color codes with a hexagonal lattice and quantum Reed-Muller codes satisfied a sufficient condition allowing them to be used in the flag $t$-FTEC protocol. Along with the general 1-flag circuit construction of \cref{fig:General1FlagCircuitSecCircuit}, the $6$-flag circuit for measuring $Z^{\otimes 6}$ of \cref{fig:6flagCircuit} can be used as $t$-flag circuits for all of the codes in \cref{subsec:Remarks}. Note that the circuit in \cref{fig:StabFTwithAncilla} (which is a special case of \cref{fig:General1FlagCircuitSecCircuit} when $w=4$) is a 4-flag circuit which is used for measuring $Z^{\otimes 4}$. 

Before describing general 1- and 2-flag circuit constructions, we give the following two definitions which we will frequently use: Any CNOT that couples a data qubit to the measurement qubit will be referred to as $\text{CNOT}_{dm}$ and any CNOT coupling a measurement qubit to a flag qubit will be referred to as $\text{CNOT}_{fm}$. In both cases the target qubit will always be the measurement qubit. 

\textbf{1- and 2-flag circuits for weight $w$ stabilizer measurements:}

We provide 1- and 2-flag circuit constructions for measuring a weight-$w$ stabilizer.
The 1-flag circuit requires a single flag qubit, and the 2-flag circuit requires at most four flag qubits.

\begin{figure}
	\centering
	\begin{subfigure}{0.35\textwidth}
		\includegraphics[width=\textwidth]{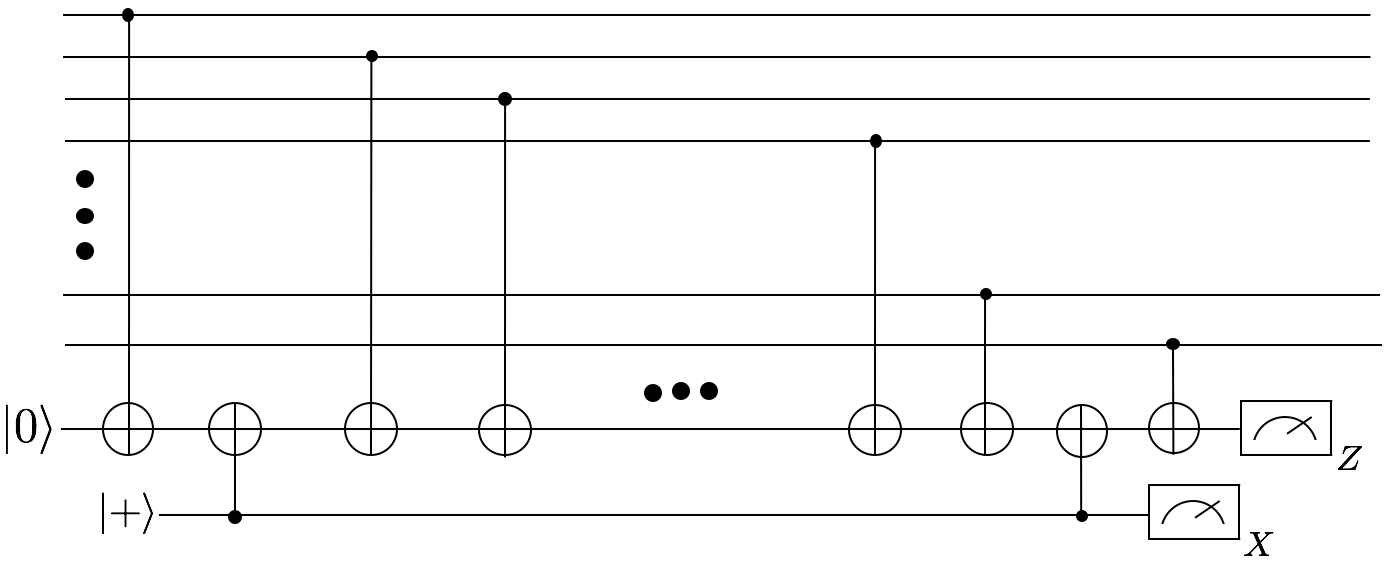}
		\caption{}
		\label{fig:General1FlagCircuitSecCircuit}
	\end{subfigure}
	\begin{subfigure}{0.43\textwidth}
		\includegraphics[width=\textwidth]{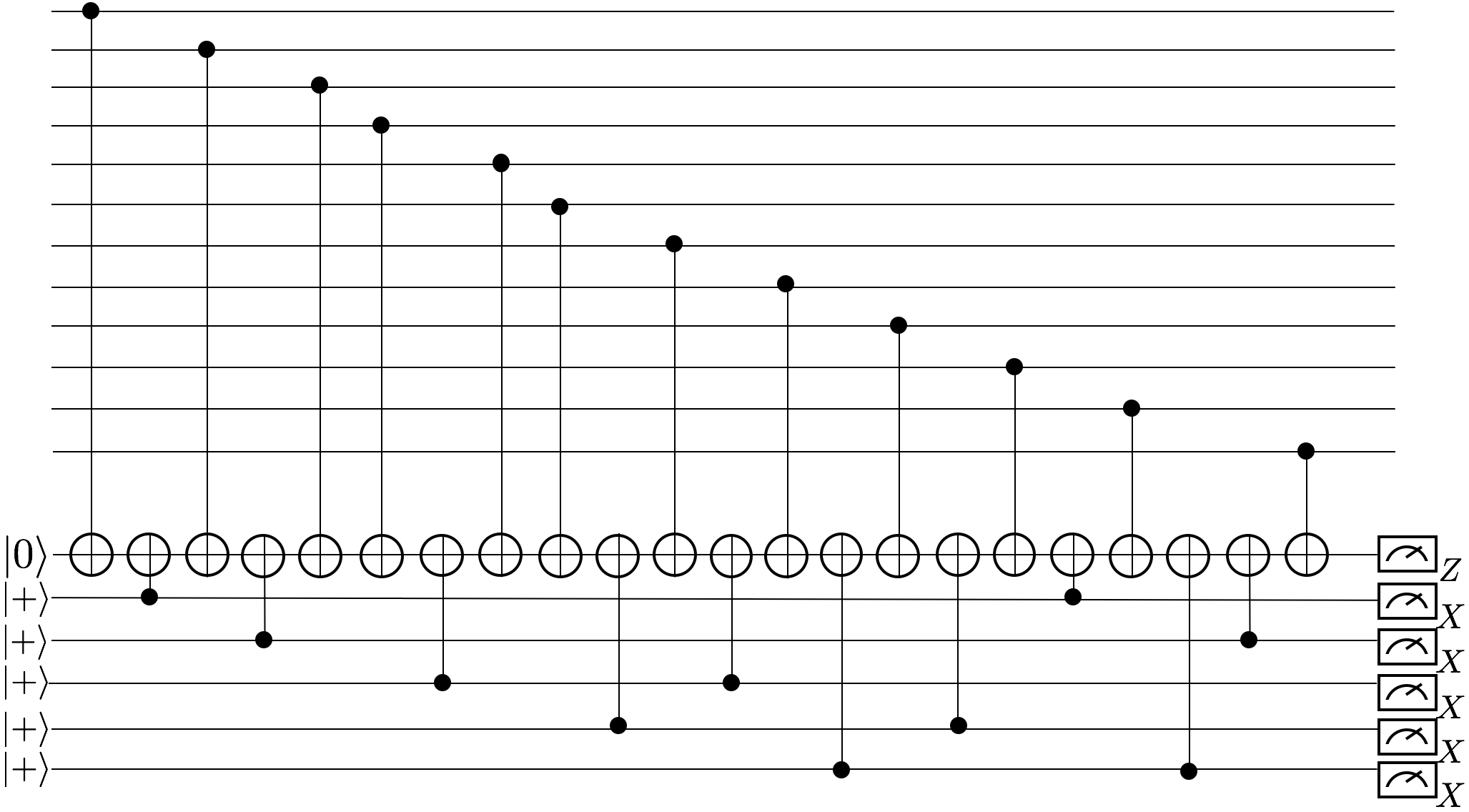}
		\caption{}
		\label{fig:General2FlagCircuitSecCircuit}
	\end{subfigure}
	\begin{subfigure}{0.43\textwidth}
		\includegraphics[width=\textwidth]{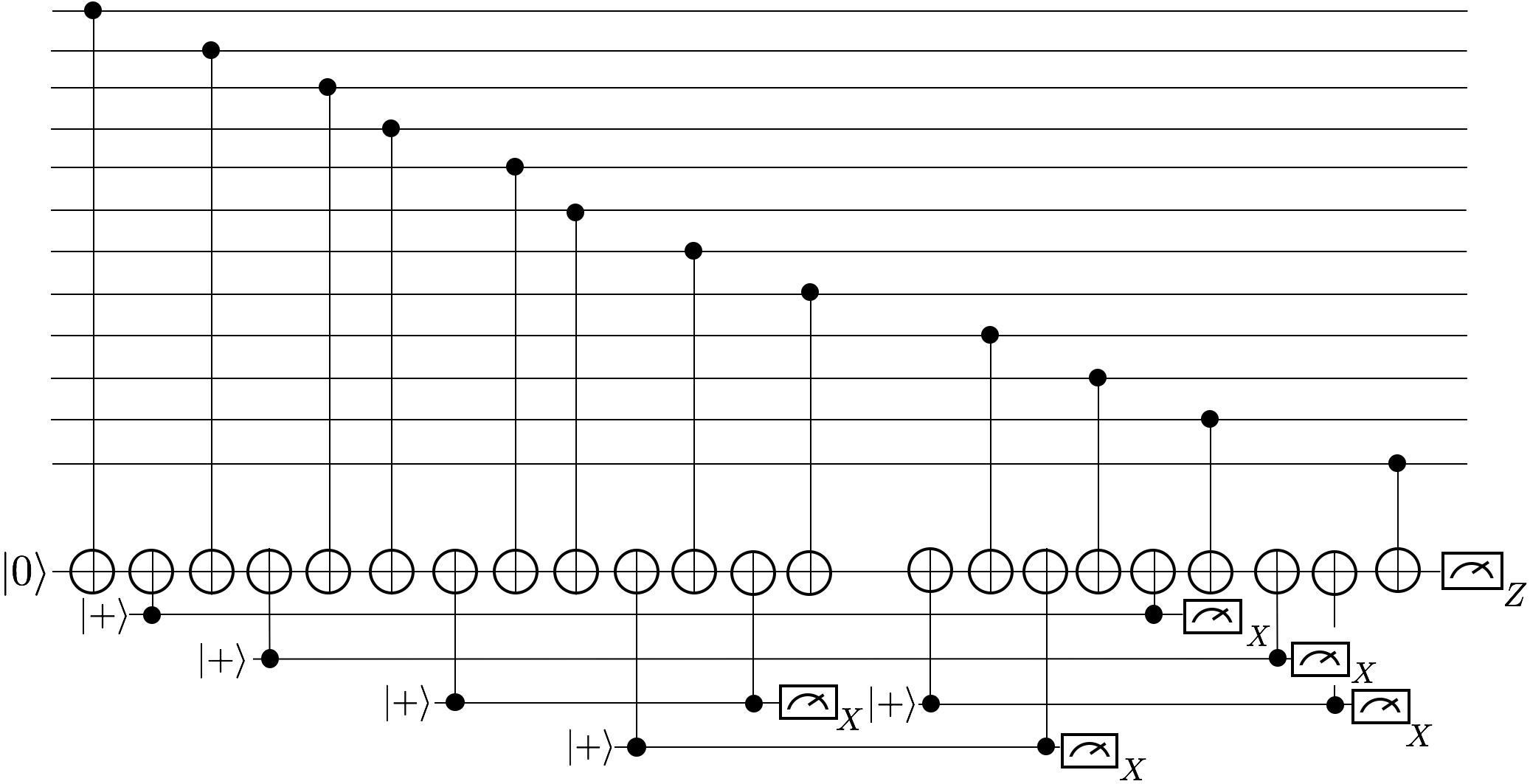}
		\caption{}
		\label{fig:General2FlagCircuitSecCircuitCompressed}
	\end{subfigure}
	\caption{(a) General 1-flag circuit for measuring the stabilizer $Z^{\otimes w}$. (b) Example of a 2-flag circuit for measuring $Z^{\otimes 12}$ using our general 2-flag circuit construction. (c) An equivalent circuit using fewer flag qubits by reusing a measured flag qubit and reinitializing it in the $\ket{+}$ state for use in another pair of  $\text{CNOT}_{\text{fm}}$ gates.}
	\label{fig:Generall1And2FlagCircuits}
\end{figure}

Without loss of generality, in proving that the circuit constructions described below are 1- and 2-flag circuits, we can assume that all faults occurred on CNOT gates. This is because any set of $v$ faults (including those at idle, preparation or measurement locations) will have the same output Pauli operator and flag measurement results as some set of at most $v$ faults on CNOT gates (since every qubit is involved in at least one CNOT).

As was shown in Ref.~\cite{CR17v1}, \cref{fig:General1FlagCircuitSecCircuit} illustrates a general 1-flag circuit construction for measuring the stabilizer $Z^{\otimes w}$ which requires only two $\text{CNOT}_{\text{fm}}$ gates. To see that the first construction is a 1-flag circuit, note that an $IZ$ error occurring on any CNOT will give rise to a flag unless it occurs on the first or last $\text{CNOT}_{\text{dm}}$ gates or the last $\text{CNOT}_{\text{fm}}$ gate. However, such a fault on any of these three gates can give rise to an error of weight at most one (after multiplying by the stabilizer $Z^{\otimes w}$). One can also verify that if there are no faults, the circuit in \cref{fig:General1FlagCircuitSecCircuit} implements a projective measurement of $Z^{\otimes w}$ without flagging. Following the approach in \cite{LAR11}, one simply needs to check that the circuit preserves the stabilizer group generated by $Z^{\otimes w}$ and $X$ on each ancilla prepared in the $\ket{+}$ state and $Z$ on each ancilla prepared in the $\ket{0}$ state. By using pairs of $\text{CNOT}_{\text{fm}}$ gates, this construction satisfies the requirement. 

We now give a general 2-flag circuit construction for measuring $Z^{\otimes w}$ for arbitrary $w$ (see \cref{fig:General2FlagCircuitSecCircuit} for an example). The circuit consists of pairs of $\text{CNOT}_{\text{fm}}$ gates each connected to a different flag qubit prepared in the $\ket{+}$ state and measured in the $X$ basis. The general 2-flag circuit construction involves the following placement of $w/2-1$ pairs of  $\text{CNOT}_{\text{fm}}$ gates: 

\begin{enumerate}
\item Place a $\text{CNOT}_{\text{fm}}$ pair between the first and second last $\text{CNOT}_{\text{dm}}$ gates.
\item Place a $\text{CNOT}_{\text{fm}}$ pair between the second and last $\text{CNOT}_{\text{dm}}$ gates.
\item After the second $\text{CNOT}_{\text{fm}}$ gate, place the first $\text{CNOT}_{\text{fm}}$ gate of the remaining  pairs after every two $\text{CNOT}_{\text{dm}}$ gates. The second $\text{CNOT}_{\text{fm}}$ gate of a pair is placed after every three $\text{CNOT}_{\text{dm}}$ gates.
\end{enumerate}
As shown in \cref{fig:General2FlagCircuitSecCircuitCompressed}, it is possible to reuse some flag qubits to measure multiple pairs of $\text{CNOT}_{\text{fm}}$ gates at the cost of introducing extra time steps into the circuit. For this reason, at most four flag qubits will be needed, however, if $w \le 8$, then $w/2-1$ flag qubits are sufficient. 

We now show that the above construction satisfies the requirements of a 2-flag circuit. If one CNOT gate fails, by an argument analogous to that used for the 1-flag circuit, there will be a flag or an error of at most weight-one on the data. If the first pair of $\text{CNOT}_{\text{fm}}$ gates fail causing no flag qubits to flag, after multiplying the data qubits by $Z^{\otimes w}$, the resulting error $E_{r}$ will have $\text{wt}(E_{r}) \le 2$. For any other pair of $\text{CNOT}_{\text{fm}}$ gates that fail causing an error of weight greater than two on the data, by construction there will always be another $\text{CNOT}_{\text{fm}}$ gate between the two that fail which will propagate a $Z$ error to a flag qubit causing it to flag. Similarly, if pairs of $\text{CNOT}_{\text{dm}}$ gates fail resulting in the data error $E_{r}$ with $\text{wt}(E_{r}) \ge 2$, by construction there will always be an odd number of $Z$ errors propagating to a flag qubit due to the  $\text{CNOT}_{\text{fm}}$ gates in between the $\text{CNOT}_{\text{dm}}$ gates that failed causing a flag qubit to flag. The same argument applies if a failure occurs between a $\text{CNOT}_{\text{dm}}$ and $\text{CNOT}_{\text{fm}}$ gate. 

Lastly, a proposed general $w$-flag circuit construction for arbitrary $w$ is provided in \cref{App:GeneralwFlagCircuitConstruction}. 

\textbf{Use of flag information:}
  
As seen in \cref{fig:6flagCircuit,fig:3flagWeight8,fig:General2FlagCircuitSecCircuit,fig:General2FlagCircuitSecCircuitCompressed}, in general $t$-flag circuits require more than one flag qubit. Apart from their use in satisfying the $t$-flag circuit properties, the extra flag qubits could be used to reduce the size of the flag error sets (defined in \cref{Def:FlagErrSetDef}) when verifying the Flag $t$-FTEC condition of \cref{app:GeneralFTEC}. To do so, we first define $f$, where $f$ is a bit string of length $u$ (here $u$ is the number of flag qubits) with $f_{i} = 1$ if the i'th flag qubit flagged and 0 otherwise. In this case, the correction set of \cref{eq:GeneralLookupTableV2} can be modified to include flag information as follows: 

\begin{align}
&\tilde{E}_{t}^{m}(g_{i_{1}},\cdots , g_{i_{k}},s,f_{i_{1}},\cdots ,f_{i_{k}}) = \nonumber \\
&\begin{cases}
\{ E \in \mathcal{E}_{m}(g_{i_{1}},\cdots , g_{i_{k}},f_{i_{1}},\cdots ,f_{i_{k}}) \times \mathcal{E}_{t-m}  \\
 \text{ such that } s(E) = s \} \\
\{ E_{\text{min}}(s) \} \text{ if above set empty., }
\end{cases}
\label{eq:CorrectionSetWithFlagInfo}
\end{align}
where $\mathcal{E}_{m}(g_{i_{1}},\cdots , g_{i_{k}},f_{i_{1}},\cdots,f_{i_{k}})$ is the new flag error set containing only errors caused by precisely $m$ faults spread amongst the circuits $C(g_{i_{1}}),C(g_{i_{2}}), \cdots , C(g_{i_{k}})$ which each gave rise to the flag outcomes $f_{i_{1}},\cdots,f_{i_{k}}$. 

Hence only errors which result from the measured flag outcome would be stored in the correction set. With enough flag qubits, this could potentially broaden the family of codes which satisfy the Flag $t$-FTEC condition.

\section{Circuit level noise analysis}
\label{sec:CircuitLevelNoiseFTEC}

The purpose of this section is to demonstrate explicitly the flag 2-FTEC protocol, and to identify parameter regimes in which flag FTEC presented both here and in other works offers advantages over other existing FTEC schemes.
In \cref{subsec:Numerics19} we analyze the logical failure rates of the \codepar{19,1,5} color code and compute it's pseudo-threshold for the three choices of $\tilde{p}$. In \cref{subsec:CompareFlagecschemes} we compare logical failure rates of several fault-tolerant error correction schemes applied to distance-three and distance-five stabilizer codes. The stabilizers for all of the studied codes are given in \cref{tab:StabilizerGeneratorsLists}. Logical failure rates are computed using the full circuit level noise model and simulation methods described in \cref{subsec:NoiseAndNumerics}.

\subsection{Numerical analysis of the \codepar{19,1,5} color code}
\label{subsec:Numerics19}

The full circuit-level noise analysis of the flag 2-FTEC protocol applied to the \codepar{19,1,5} color code was performed using the stabilizer measurement circuits of \cref{fig:StabFTwithAncilla,fig:WeightSixGenerators}. 

In the weight-six stabilizer measurement circuit of \cref{fig:WeightSixGenerators}, there are 10 CNOT gates, three measurement and state-preparation locations, and 230 resting qubit locations. When measuring all stabilizer generators using non-flag circuits, there are 42 CNOT and 42 XNOT gates, 18 measurement and state-preparation locations, and 2196 resting qubit locations. Consequently, we expect the error suppression capabilities of the flag EC scheme to depend strongly on the number of idle qubit locations. 

\begin{table}[t]
\begin{tabular}{ c|c}
 three-qubit flag EC & pseudo-threshold  \\ \hline
\codepar{19,1,5} and $\tilde{p}=p$    & $p_{\mathrm{pseudo}} = (1.14 \pm 0.02) \times 10^{-5}$ \\
\codepar{19,1,5} and $\tilde{p}=\frac{p}{10}$    & $p_{\mathrm{pseudo}} = (6.70 \pm 0.07) \times 10^{-5}$  \\
\codepar{19,1,5} and $\tilde{p}=\frac{p}{100}$    & $p_{\mathrm{pseudo}} = (7.74 \pm 0.16) \times 10^{-5}$  \end{tabular}
\caption{Table containing pseudo-threshold values for the flag 2-FTEC protocol applied to the \codepar{19,1,5} color code for $\tilde{p}=p$, $\tilde{p}=p/10$ and $\tilde{p}=p/100$.}
\label{tab:PseudoThresholAllThree1915}
\end{table}

\begin{figure}
\centering
\includegraphics[width=0.5\textwidth]{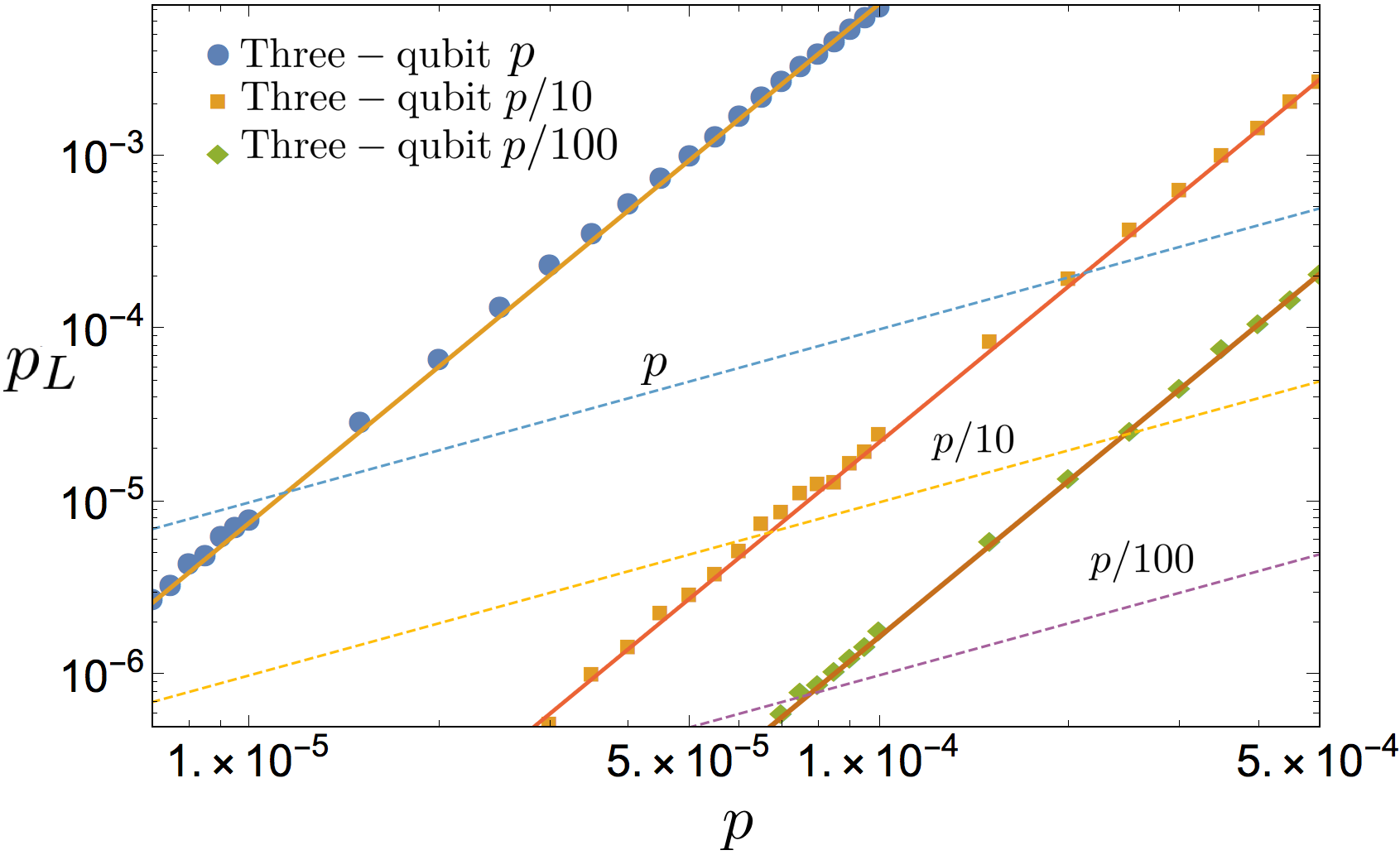}
\caption{Logical failure rates of the \codepar{19,1,5} color code after implementing the flag 2-FTEC protocol presented in \cref{subsec:Distance5protocol} for the three noise models described in \cref{subsec:NoiseAndNumerics}. The dashed curves represent the lines $\tilde{p}=p$, $\tilde{p}=p/10$ and $\tilde{p}=p/100$. The crossing point between $\tilde{p}$ and the curve corresponding to $p^{(\codepar{19,1,5})}_{L}(\tilde{p})$ in \cref{Def:PseudoThreshDef} gives the pseudo-threshold.}
\label{fig:PseudoThreshPlots19ColorAllThree}
\end{figure}

Pseudo-thresholds of the \codepar{19,1,5} code were obtained using the methods of \cref{subsec:NoiseAndNumerics}. Recall that for extending the lifetime of a qubit (when idle qubit locations fail with probability $\tilde{p}$), the probability of failure after implementing an FTEC protocol should be smaller than $\tilde{p}$. We calculated the pseudo-threshold using \cref{Def:PseudoThreshDef} for the three cases were idle qubits failed with probability $\tilde{p}=p$, $\tilde{p}=p/10$ and $\tilde{p}=p/100$. The results are shown in \cref{tab:PseudoThresholAllThree1915}.

The logical failure rates for the three noise models are shown in \cref{fig:PseudoThreshPlots19ColorAllThree}. It can be seen that when the probability of error on a resting qubit decreases from $p$ to $p/10$, the pseudo-threshold improves by nearly a factor of six showing the strong dependence of the scheme on the probability of failure of idle qubits. 

\subsection{Comparison of flag 1- and 2-FTEC with other FTEC schemes}
\label{subsec:CompareFlagecschemes}

\begin{figure*}[t!]
\begin{subfigure}{0.33\textwidth}
\begin{center}
\includegraphics[width=\textwidth]{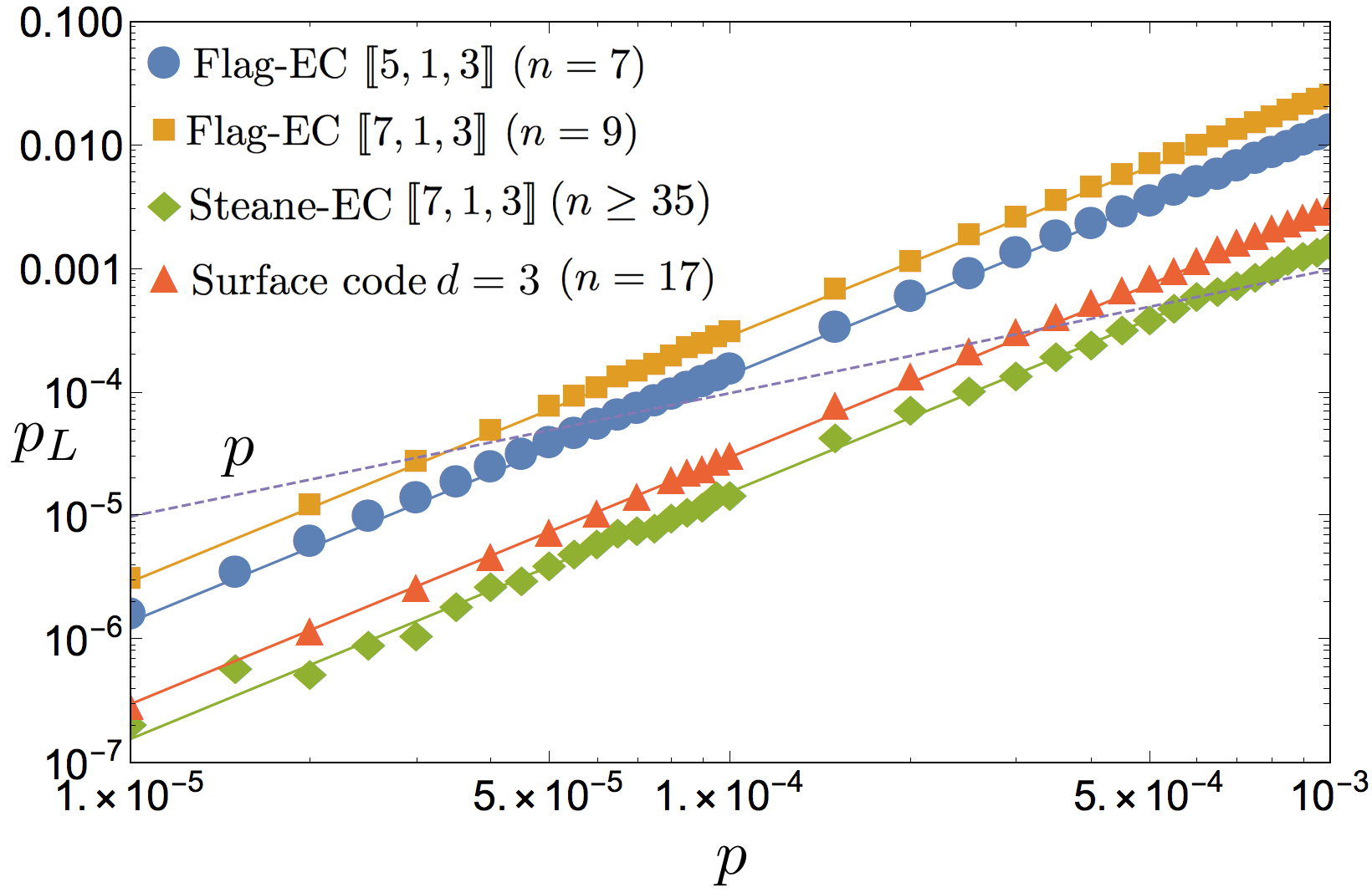}
\caption{}
\includegraphics[width=\textwidth]{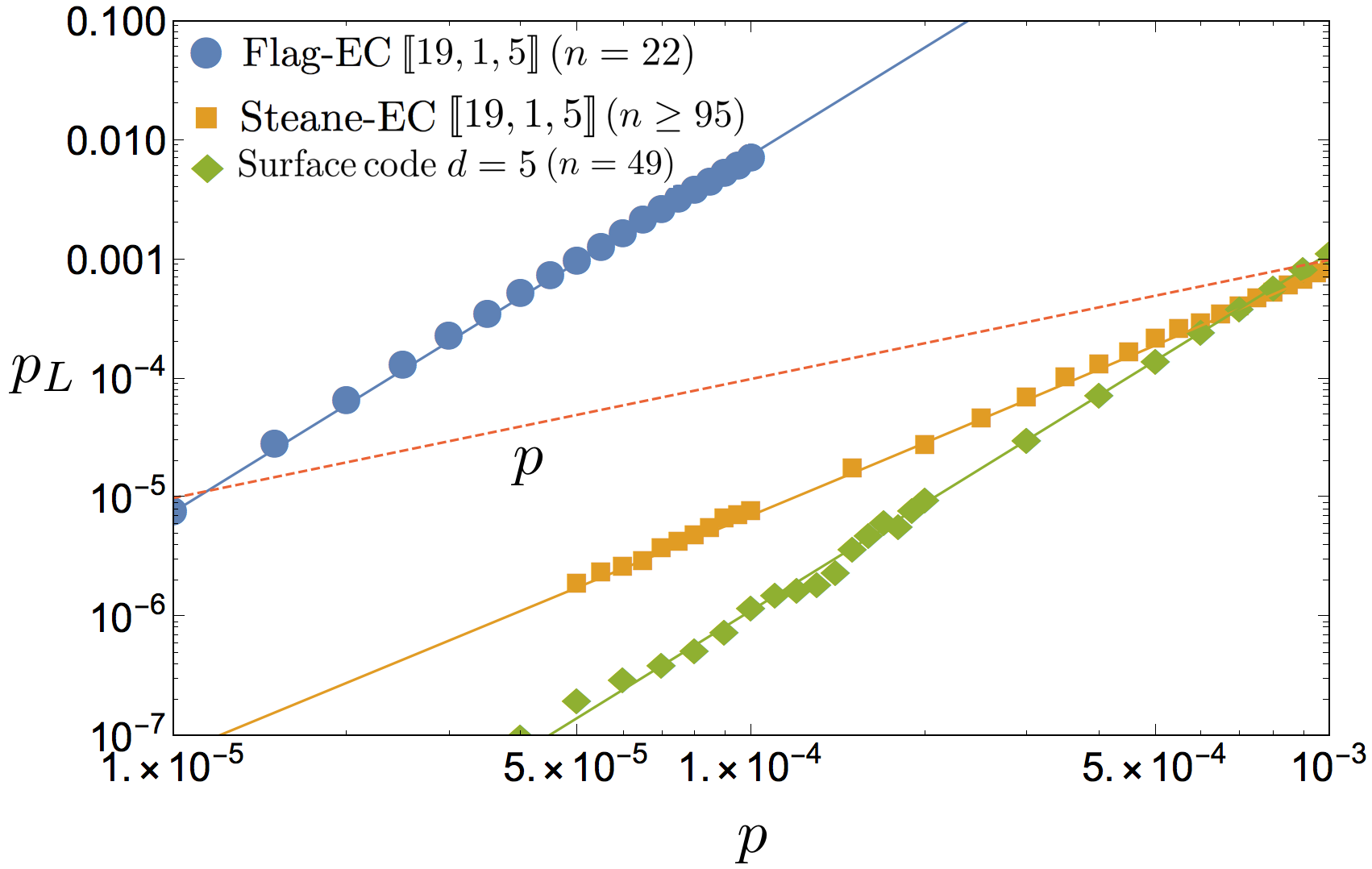}
\caption{}
\end{center}
\end{subfigure}\hfill
\begin{subfigure}{0.33\textwidth}
\includegraphics[width=\textwidth]{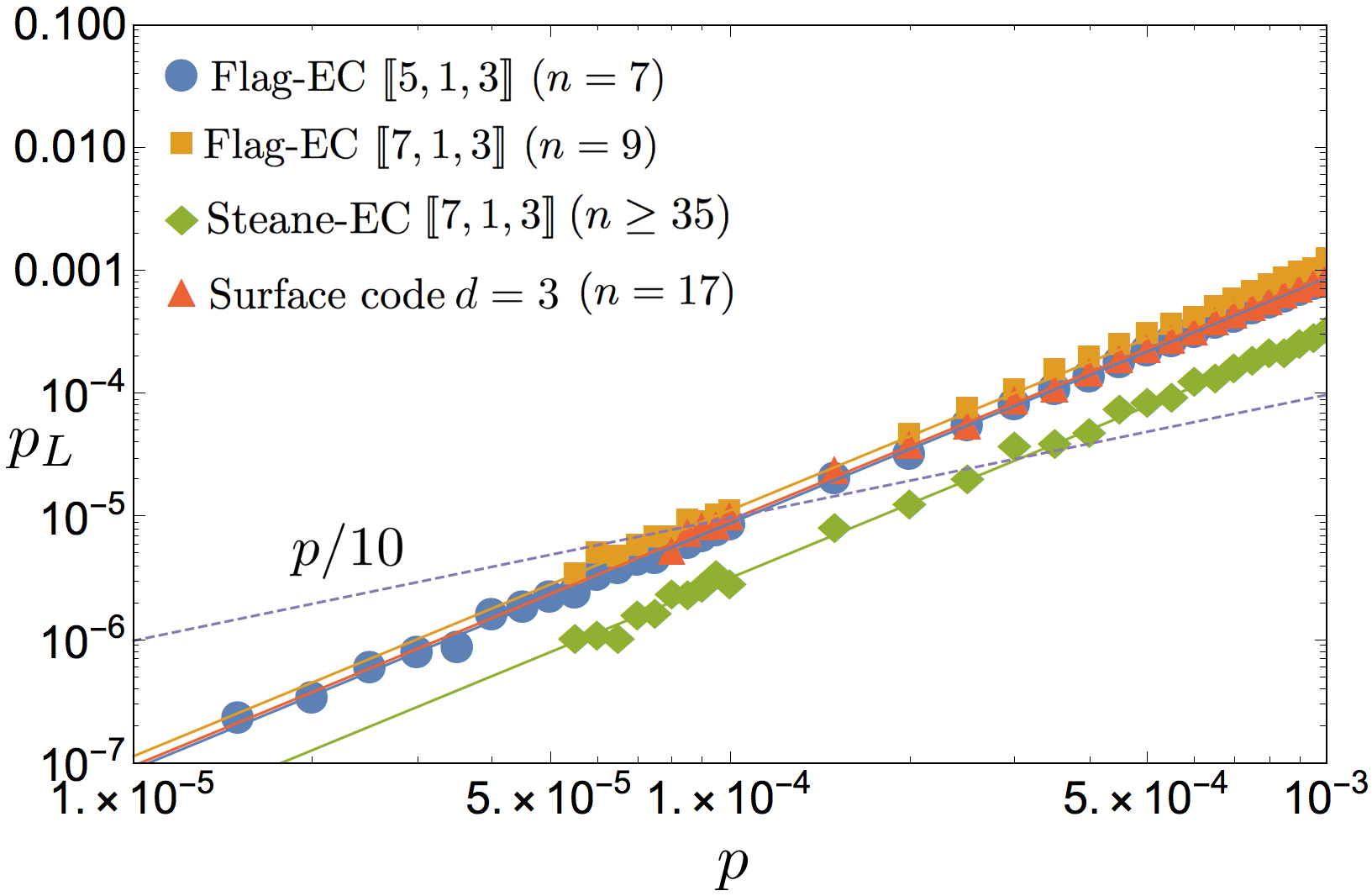}
\caption{}
\includegraphics[width=\textwidth]{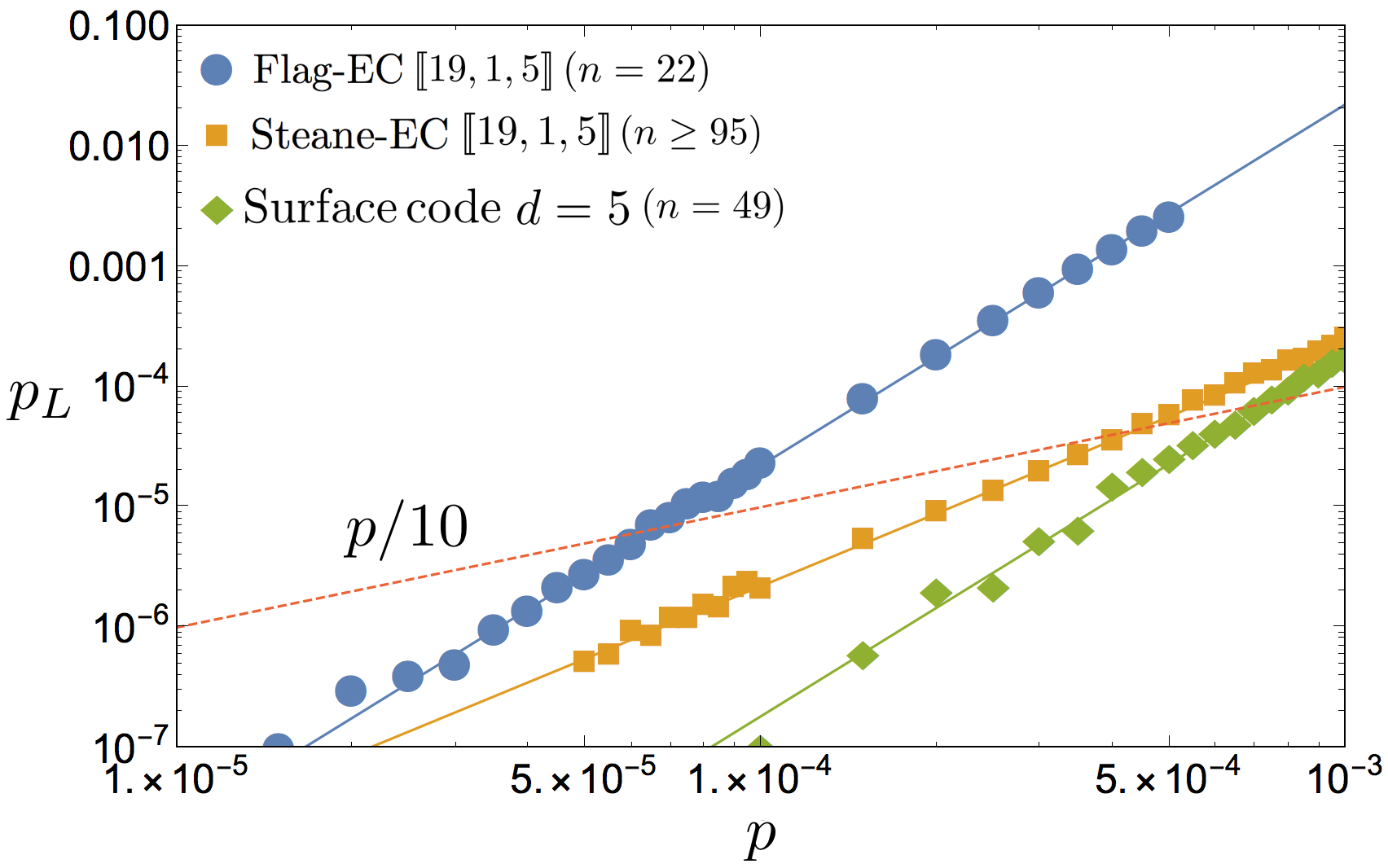}
\caption{}
\end{subfigure}\hfill
\begin{subfigure}{0.33\textwidth}
\includegraphics[width=\textwidth]{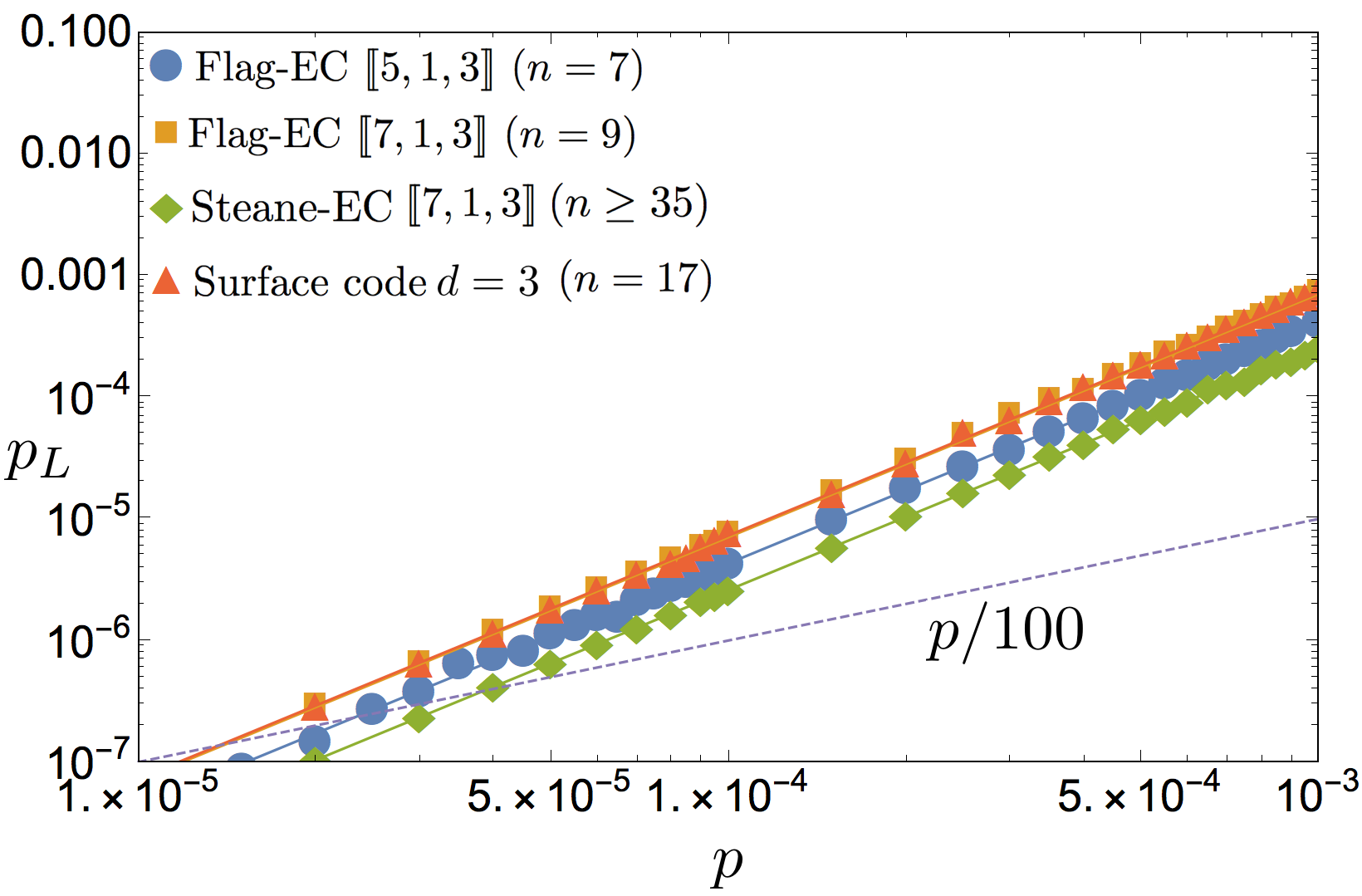}
\caption{}
\includegraphics[width=\textwidth]{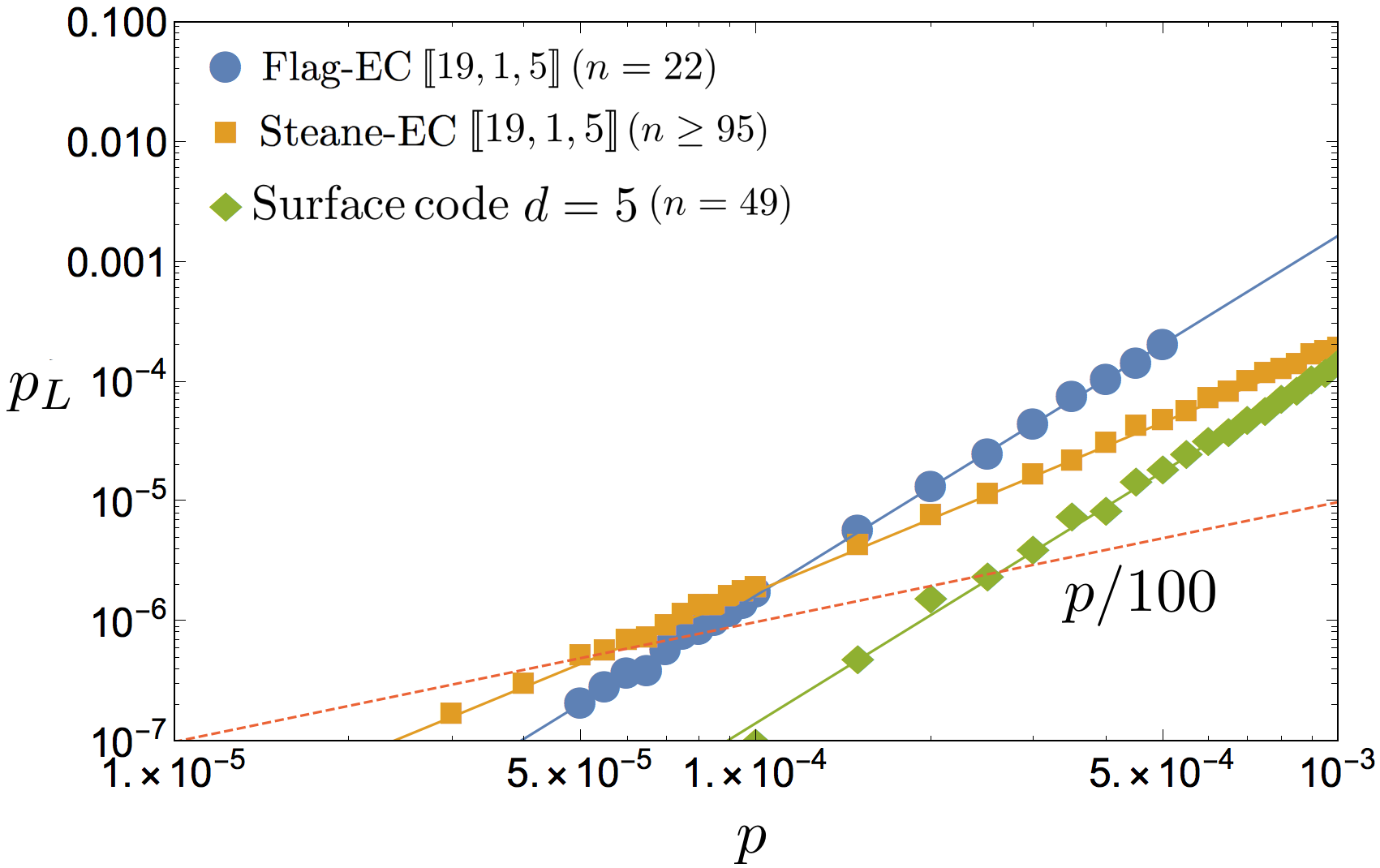}
\caption{}
\end{subfigure}\hfill
\caption{Logical failure rates for various fault-tolerant error correction methods applied to the \codepar{5,1,3} code, \codepar{7,1,3} Steane code and the \codepar{19,1,5} color code. The dashed curves correspond to the lines $\tilde{p}=p$, $\tilde{p}=p/10$ and $\tilde{p}=p/100$. In (a), (c) and (e), the flag 1-FTEC protocol is applied to the \codepar{5,1,3} and Steane code and the results are compared with the $d=3$ surface code and Steane error correction applied to the Steane code. In (b), (d) and (f), the flag 2-FTEC protocol is applied to the \codepar{19,1,5} color code, and the results are compared with the $d=5$ surface code and Steane error correction applied to the \codepar{19,1,5} color code. 
These numerical results suggest the following fault-tolerant experiments of the schemes we consider for extending the fidelity of a qubit. (1) If $7 \le n \le16$, only the 5 and 7 qubit codes with flag 1-FTEC are accessible. However, the performance is much worse than higher qubit alternatives unless $\tilde{p}/p$ is small. (2) For $17\le n \le34$, the $d=3$ surface code seems most promising, unless $\tilde{p}/p$ is small, in which case flag 2-FTEC with the 19-qubit code should be better. (3) For $35\le n \le48$, Steane EC applied to distance-three codes is better than all other approaches studied, except for very low $p$ where flag 2-FTEC should be better due to ability to correct two rather than just one fault. (4) For $n \ge 49$, the d=5 surface code is expected to perform better than the other alternatives below pseudo-threshold.}
\label{fig:AllComparisonPlotsCombined}
\end{figure*}

The most promising schemes for testing fault-tolerance in near term quantum devices are those which achieve high pseudo-thresholds while maintaining a low qubit overhead. The flag FTEC protocol presented in this paper uses fewer qubits compared to other well known fault-tolerance schemes but typically has increased circuit depth. In this section we apply the flag FTEC protocol of \cref{subsec:ReviewChaoReichardt,subsec:Distance5protocol} to the \codepar{5,1,3}, \codepar{7,1,3} and \codepar{19,1,5} codes. We compare logical failure rates for three values of $\tilde{p}$ with Steane error correction applied to the \codepar{7,1,3} and \codepar{19,1,5} codes and with the $d=3$ and $d=5$ rotated surface code. More details on Steane error correction and surface codes are provided in \cref{app:SurfaceECSection,app:SteaneECSection}. Note that recent work by Goto has provided optimizations to prepare Steane ancillas \cite{Goto16}. However, our numerical results for Steane-EC were produced using the methods presented in \cref{app:SteaneECSection}.

Results of the logical failure rates for $\tilde{p}=p$, $\tilde{p}=p/10$ and $\tilde{p}=p/100$ are shown in \cref{fig:AllComparisonPlotsCombined}. Various pseudo-thresholds and required time-steps for the considered fault-tolerant error correction methods are given in \cref{tab:PseudoThresholdsAllECschemesD3,tab:PseudoThresholdsAllECschemesD5}. 

The circuits for measuring the stabilizers of the 5-qubit code were similar to the ones used in \cref{fig:StabFTwithAncilla} (for an $X$ Pauli replace the CNOT by an XNOT). For flag-FTEC methods, it can be seen that the \codepar{5,1,3} code always achieves lower logical failure rates compared to the \codepar{7,1,3} code. However, when $\tilde{p}=p$, both the $d=3$ surface code as well as Steane-EC achieves lower logical failure rates (with Steane-EC achieving the best performance). For $\tilde{p}=p/10$, flag-EC applied to the \codepar{5,1,3} code achieves nearly identical logical failure rates compared to the $d=3$ surface code. For $\tilde{p}=p/100$, flag 1-FTEC applied to the \codepar{5,1,3} code achieves lower logical failure rates than the $d=3$ surface code but still has higher logical failure rates compared to Steane-EC.

\begin{table*}[!t]
\begin{tabular}{ c|c|c|c|c}
 FTEC scheme & Noise model & Number of qubits & Time steps ($T_{\mathrm{time}}$) &Pseudo-threshold  \\ \hline
 
Flag-EC \codepar{5,1,3} & $\tilde{p} = p$ & $7$ & $64 \le T_{\mathrm{time}} \le 88$ & $p_{\mathrm{pseudo}} = (7.09 \pm 0.03) \times 10^{-5}$ \\
Flag-EC \codepar{7,1,3} &  & $9$ & $72 \le T_{\mathrm{time}} \le 108$ & $p_{\mathrm{pseudo}} = (3.39 \pm 0.10) \times 10^{-5}$ \\
$d=3$ Surface code  &  & 17 & $\ge 18$ & $p_{\mathrm{pseudo}} = (3.29 \pm 0.16) \times 10^{-4}$ \\
Steane-EC \codepar{7,1,3} & &$\ge 35$ & 15 & $p_{\mathrm{pseudo}} = (6.29 \pm 0.13) \times 10^{-4}$ \\ \hline

Flag-EC \codepar{5,1,3} & $\tilde{p} = p/10$ & $7$ & $64 \le T_{\mathrm{time}} \le 88$ & $p_{\mathrm{pseudo}} = (1.11 \pm 0.02) \times 10^{-4}$ \\
Flag-EC \codepar{7,1,3} &  & $9$ & $72 \le T_{\mathrm{time}} \le 108$ & $p_{\mathrm{pseudo}} = (8.68 \pm 0.15) \times 10^{-5}$ \\
$d=3$ Surface code  &  & 17 & $\ge 18$ & $p_{\mathrm{pseudo}} = (1.04 \pm 0.02) \times 10^{-4}$ \\
Steane-EC \codepar{7,1,3} & &$\ge 35$ & 15 & $p_{\mathrm{pseudo}} = (3.08 \pm 0.01) \times 10^{-4}$ \\ \hline

Flag-EC \codepar{5,1,3} & $\tilde{p} = p/100$ & $7$ & $64 \le T_{\mathrm{time}} \le 88$ & $p_{\mathrm{pseudo}} = (2.32 \pm 0.03) \times 10^{-5}$ \\
Flag-EC \codepar{7,1,3} &  & $9$ & $72 \le T_{\mathrm{time}} \le 108$ & $p_{\mathrm{pseudo}} = (1.41 \pm 0.05) \times 10^{-5}$ \\
$d=3$ Surface code  &  & 17 & $\ge 18$ & $p_{\mathrm{pseudo}} = (1.37 \pm 0.03) \times 10^{-5}$ \\
Steane-EC \codepar{7,1,3} & &$\ge 35$ & 15 & $p_{\mathrm{pseudo}} = (3.84 \pm 0.01) \times 10^{-5}$ \\

\end{tabular}
\caption{Distance-three pseudo-threshold results for various FTEC protocols and noise models applied to the \codepar{5,1,3}, \codepar{7,1,3} and $d=3$ rotated surface code. We also include the number of time steps required to implement the protocols.}
\label{tab:PseudoThresholdsAllECschemesD3}
\end{table*}

We also note that the pseudo-threshold increases when $\tilde{p}$ goes from $p$ to $p/10$ for both the \codepar{5,1,3} and \codepar{7,1,3} codes when implemented using the flag 1-FTEC protocol. This is primarily due to the large circuit depth in flag-EC protocols since idle qubits locations significantly outnumber other gate locations. For the surface code, the opposite behaviour is observed. As was shown in \cite{FMMC12}, CNOT gate failures have the largest impact on the pseudo-threshold of the surface code. Thus, when idle qubits have lower failure probability, lower physical error rates will be required in order to achieve better logical failure rates. For instance, if idle qubits never failed, then performing error correction would be guaranteed to \textit{increase} the probability of failure due to the non-zero failure probability of other types of locations (CNOT, measurements and state-preparation). Lastly, the pseudo-threshold for Steane-EC also decreases with lower idle qubit failure rates, but the change in pseudo-threshold is not as large as the surface code. This is primarily due to the fact that all CNOT gates are applied transversally in Steane-EC, so that the pseudo-threshold is not as sensitive to CNOT errors compared to the surface code. Furthermore, most high-weight errors arising during the state-preparation of the logical ancilla's will be detected (see \cref{app:SteaneECSection}). Hence, idle qubit errors play a larger role than in the surface code, but Steane-EC has fewer idle qubit locations compared to flag-EC (see \cref{tab:PseudoThresholdsAllECschemesD3} for the circuit depths of all schemes).

\begin{table*}
\begin{tabular}{ c|c|c|c|c}
 FTEC scheme & Noise model & Number of qubits & Time steps ($T_{\mathrm{time}}$) &Pseudo-threshold  \\ \hline
 
Flag-EC \codepar{19,1,5} & $\tilde{p} = p$  & $22$ & $504 \le T_{\mathrm{time}} \le 960$ & $p_{\mathrm{pseudo}} = (1.14 \pm 0.02) \times 10^{-5}$ \\
$d=5$ Surface code  &  & 49 & $\ge 18$ & $p_{\mathrm{pseudo}} = (9.41 \pm 0.17) \times 10^{-4}$ \\
Steane-EC \codepar{19,1,5} & &$\ge 95$ & 15 & $p_{\mathrm{pseudo}} = (1.18 \pm 0.02) \times 10^{-3}$ \\ \hline

Flag-EC \codepar{19,1,5} & $\tilde{p} = p/10$  & $22$ & $504 \le T_{\mathrm{time}} \le 960$ & $p_{\mathrm{pseudo}} = (6.70 \pm 0.07) \times 10^{-5}$ \\
$d=5$ Surface code  &  & 49 & $\ge 18$ & $p_{\mathrm{pseudo}} = (7.38 \pm 0.22) \times 10^{-4}$ \\
Steane-EC \codepar{19,1,5} & &$\ge 95$ & 15 & $p_{\mathrm{pseudo}} = (4.42 \pm 0.27) \times 10^{-4}$ \\ \hline

Flag-EC \codepar{19,1,5} & $\tilde{p} = p/100$  & $22$ & $504 \le T_{\mathrm{time}} \le 960$ & $p_{\mathrm{pseudo}} = (7.74 \pm 0.16) \times 10^{-5}$ \\
$d=5$ Surface code  &  & 49 & $\ge 18$ & $p_{\mathrm{pseudo}} = (2.63 \pm 0.18) \times 10^{-4}$ \\
Steane-EC \codepar{19,1,5} & &$\ge 95$ & 15 & $p_{\mathrm{pseudo}} = (5.60 \pm 0.43) \times 10^{-5}$ \\

\end{tabular}
\caption{Distance-five pseudo-threshold results for various FTEC protocols and noise models applied to the \codepar{19,1,5} color code and $d=5$ rotated surface code. We also include the number of time steps required to implement the protocols.}
\label{tab:PseudoThresholdsAllECschemesD5}
\end{table*}

Although Steane-EC achieves the lowest logical failure rates compared to the other fault-tolerant error correction schemes, it requires a minimum of 35 qubits (more details are provided in \cref{app:SteaneECSection}). In contrast, the $d=3$ surface code requires 17 qubits, and flag 1-FTEC applied to the \codepar{5,1,3} code requires only 7 qubits. Therefore, if the probability of idle qubit errors is much lower than gate, state preparation and measurement errors, flag-FTEC methods could be good candidates for early fault-tolerant experiments. 

It is important to keep in mind that for the flag 1-FTEC protocol applied to the distance-three codes considered in this section, the same ancilla qubits are used to measure all stabilizers. A more parallelized version of flag-FTEC applied to the \codepar{7,1,3} code using four ancilla qubits instead of two is considered in \cref{app:CompactRepFlagQubit}. 

In computing the number of time steps required by the flag $t$-FTEC protocols, a lower bound is given in the case where there are no flags and the same syndrome is repeated $t+1$ times. In \cref{app:GeneralFTEC} it was shown that the full syndrome measurement for flag-FTEC is repeated at most $\frac{1}{2}(t^{2} + 3t + 2)$ times where $t = \lfloor (d-1)/2 \rfloor$. An upper bound on the total number of required time steps is thus obtained from a worst case scenario where syndrome measurements are repeated $\frac{1}{2}(t^{2} + 3t + 2)$ times.

For distance-five codes, the first thing to notice from \cref{fig:AllComparisonPlotsCombined} is that the slopes of the logical failure rate curves of flag-EC applied to the \codepar{19,1,5} code and $d=5$ surface code are different from the slopes of Steane-EC applied to the \codepar{19,1,5} code. In particular, $p_{\text{L}} = cp^{3} + \mathcal{O}(p^{4})$ for flag-EC and the surface code whereas  $p_{\text{L}} = c_{1}p^{2} + c_{2}p^{3} + \mathcal{O}(p^{4})$ for Steane-EC ($c$, $c_{1}$ and $c_{2}$ are constants that depend on the code and FTEC method). The reason that Steane-EC has non-zero $\mathcal{O}(p^{2})$ contributions to the logical failure rates is that there are instances where errors occurring at two different locations can lead to a logical fault. Consequently, the Steane-EC method that was used is not strictly fault-tolerant according to \cref{Def:FaultTolerantDef}. In \cref{app:SteaneECSection}, more details on the fault tolerant properties of Steane-EC are provided and a fully fault-tolerant implementation of Steane-EC is analyzed (at the cost of using more qubits). 

For $d=5$, the surface code achieves significantly lower logical failure rates compared to all other distance 5 schemes but uses 49 qubits instead of 22 for the \codepar{19,1,5} code. Furthermore, due the differences in the slopes of flag-2 FTEC protocol compared with Steane-EC applied to the \codepar{19,1,5} code, there is a regime where flag-2 FTEC achieves lower logical failure rates compared to Steane-EC. For $\tilde{p}=p/100$, it can be seen in \cref{fig:AllComparisonPlotsCombined} that this regime occurs when $p \lesssim 10^{-4}$. We also note that the pseudo-threshold of flag-EC applied to the \codepar{19,1,5} color code increases for all noise models whereas the pseudo-threshold decreases for the other FTEC schemes. Again, this is due to the fact that flag-EC has a larger circuit depth compared to the other FTEC methods and is thus more sensitive to idle qubit errors. 

Comparing the flag 2-FTEC protocol (applied to the \codepar{19,1,5} color code) to all the $d=3$ schemes that were considered in this section, due to the higher distance of the 19-qubit code, there will always be a parameter regime where the 19-qubit color code acheives lower logical failure rates than both the $d=3$ surface code and Steane-EC applied to the \codepar{7,1,3} code. In the case where $\tilde{p}=p/100$ and with $p \lesssim 1.5 \times10^{-4}$, using flag error correction with only 22 qubits outperforms Steane error correction (which uses a minimum of 35 qubits) and the $d=3$ rotated surface code (which uses 17 qubits).

Note the considerable number of time steps involved in a round of flag-EC, particularly in the $d=5$ case (see \cref{tab:PseudoThresholdsAllECschemesD5}).
For many applications, this is a major drawback, for example for quantum computation when the time of an error correction round dictates the time of a logical gate.
However there are some cases in which having a larger number of time-steps in an EC round while holding the logical error rate fixed is advantageous as it corresponds to a longer physical lifetime of the encoded information.
Such schemes could be useful for example in demonstrating that encoded logical quantum information can be stored for longer time scales in the lab using repeated rounds of FTEC.

\section{Conclusion}
\label{sec:Conclusion}

Building on definitions and a new flag FTEC protocol applied to distance-three and -five codes presented in \cref{sec:Section1}, in \cref{subsec:GeneralProtocol} we presented a general flag FTEC protocol, which we called flag $t$-FTEC, and which is applicable to stabilizer codes of distance $d = 2t+1$ that satisfy the flag $t$-FTEC condition. The protocol makes use of flag ancilla qubits which signal when $v$ faults lead to errors of weight greater than $v$ on the data when performing stabilizer measurements. In \cref{subsec:ApplicationProtocolColorCode,app:GeneralTflaggedCircuitConstruction} we gave explicit circuit constructions, including those needed for distance 3 and 5 codes measuring stabilizers of weight 4, 6 and 8. 
In \cref{subsec:Remarks} we gave a sufficient condition for codes to satisfy the requirements for flag $t$-FTEC. Quantum Reed-Muller codes, Surface codes and hexagonal lattice color codes were shown to be families of codes that satisfy the sufficient condition. 

The flag $t$-FTEC protocol could be useful for fault-tolerant experiments performed in near term quantum devices since it tends to use fewer qubits than other FTEC schemes such as Steane, Knill and Shor EC. In \cref{subsec:CompareFlagecschemes} we provided numerical evidence that with only 22 qubits, the flag $2$-FTEC protocol applied to the \codepar{19,1,5} color code can achieve lower logical failure rates than other codes using similar numbers of qubits such as the rotated distance-3 surface code and Steane-EC applied to the Steane code. 

A clear direction of future work would be to find optimal general constructions of $t$-flag circuits for stabilizers of arbitrary weight that improve upon the general construction given in \cref{App:GeneralwFlagCircuitConstruction}. Of particular interest would be circuits using few flag qubits and CNOT gates while minimizing the probability of false-positives (i.e. when the circuit flags without a high-weight error occurring). Finding other families of stabilizer codes which satisfy the sufficient or more general condition for flag $t$-FTEC would also be of great interest. One could also envisage hybrid schemes combining flag EC with other FTEC approaches. 

Another direction of future research would be to find general circuit constructions for simultaneously measuring multiple stabilizers while minimizing the number of required ancilla qubits. Further, we believe performing a rigorous numerical analysis to understand the impact of more compact circuit constructions on the codes threshold is of great interest.

Lastly, the decoding complexity (i.e. generating the flag error set lookup tables) is limited by the decoding complexity of the code. In some cases, for example concatenated codes, it may be possible to exploit some structure to generate the flag error sets more efficiently. In the case of concatenated code, the decoding complexity would be reduced to the decoding complexity of the codes used at every level. Finding other scalable constructions for efficient decoding schemes using flag error correction remains an open problem.

\section{Acknowledgements}
The authors would like to thank Krysta Svore, Tomas Jochym-O'Connor, Nicolas Delfosse and Jeongwan Haah for useful discussions.  We also thank Steve Weiss for providing the necessary computational tools that allowed us to complete our work in a timely manner. C. C. would like to acknowledge the support of QEII-GSST and thank Microsoft and the QuArC group for its hospitality where all of this work was completed. 

\newpage 

\bibliographystyle{unsrtnat} 
\bibliography{bibtex_chamberland}

\clearpage
\appendix

\section{Proof that the flag $t$-FTEC protocol satisfies the fault-tolerance criteria of \cref{Def:FaultTolerantDef}}
\label{app:ProtocolGeneralProof}

Consider the flag $t$-FTEC protocol described in \cref{subsec:GeneralProtocol}. 
\begin{claim}
If the flag $t$-FTEC condition is satisfied, then both fault-tolerance criteria of \cref{Def:FaultTolerantDef} will be satisfied.
\end{claim}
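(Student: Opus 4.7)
The plan is a case analysis on the five termination clauses of the flag $t$-FTEC protocol. Throughout, I will track a ``fault budget'': the total number of fault locations (including input errors treated as virtual faults at a single time slice before the protocol begins) is at most $t$ for Criterion~1, and at most $t$ internal faults for Criterion~2. The central accounting identity is that each unit of fault budget can ``spend'' itself in exactly one of three ways during a given syndrome extraction round: (i) it can cause a flag in some $t$-flag circuit $C(g_i)$, (ii) it can change the measured syndrome between consecutive rounds (thus incrementing $n_{\text{diff}}$), or (iii) it can leave a residual error on the data that is invisible to both flags and syndrome comparison. I will argue that each clause of the protocol triggers only when the observed flag/syndrome pattern is consistent with having spent the full budget of $t$ faults.

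First I would verify termination and uniqueness of case applicability: the bound $n_{\max} = \tfrac12(t^2+3t+2)$ from equation~\eqref{Eq:Nmax} follows from the update rules because after $n_{\max}$ rounds either some clause trigger fires or the protocol has witnessed more than $t$ distinct syndrome-changing events, which is impossible under the $t$-fault hypothesis. Next, for each of clauses 1--5 I would show that the data error $E_{\text{data}}$ actually present on the codeword at termination lies in the correction set from which the protocol draws its correction. The $t$-flag property of the circuits is exactly what is needed: any $v$ faults inside a single $C(g_i)$ that produce a residual data error of weight greater than $v$ (modulo the stabilizer being measured) must raise a flag, so in clauses 1--2 (no flags at termination) every fault is either a measurement flip (contributing to $n_{\text{diff}}$) or a weight-at-most-one data error, which leaves $E_{\text{data}} \in \mathcal{E}_{j}$ for some $j \le t - n_{\text{diff}}$, with syndrome matching the repeated or final measured $s$. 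In clauses 3--5, writing $m$ for the number of flagged circuits, the residual data error decomposes as a product of an element of $\mathcal{E}_{t-j}(g_{i_1},\dots,g_{i_m})$ with an element of $\mathcal{E}_j$ for some admissible $j$, which is exactly the form of $\tilde{E}_t^{m}$ in equation~\eqref{eq:GeneralLookupTableV2}.

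Now I would invoke the flag $t$-FTEC condition (Definition~\ref{Def:FlagtFTECcondition}) to close the argument. The condition says that within $\bigcup_{j=0}^{t-m} \mathcal{E}_{t-j}(g_{i_1},\dots,g_{i_m}) \times \mathcal{E}_{j}$, any two errors with the same syndrome are logically equivalent. Therefore any element the protocol picks from $\tilde{E}_t^{m}(g_{i_1},\dots,g_{i_m},s)$ differs from the true $E_{\text{data}}$ only by a stabilizer; applying it returns the state to the original codeword up to a stabilizer, which is precisely what ideal decoding requires. This directly gives Criterion~1, since under the hypothesis $s_1 + s_2 \le t$ both the input error and the in-protocol faults fit within the $t$-fault budget. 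For Criterion~2, where only $s \le t$ internal faults occur but the input may be arbitrary, I would observe that the residual data error at termination is the product of (a) the ideally-decoded image of the arbitrary input error, which lies on a codeword, with (b) a correction-residual contributed by the $s$ protocol faults. The same case analysis shows this residual has weight at most $s$, because it is bounded by the weight of the element of the flag error set actually realized by the $s$ faults.

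The main obstacle I anticipate is clause 5, where the correction is drawn from a union $\bigcup_{j=0}^{t-m-n_{\text{diff}}} \tilde{E}_t^{t-j-n_{\text{diff}}}(g_{i_1},\dots,g_{i_m},s)$. Here the protocol commits to a correction before exhausting the $t$-fault budget on syndrome-change evidence, so I must show that the repeated identical syndrome $n_{\text{same}} = t-m-n_{\text{diff}}+1$ times effectively certifies that no further syndrome-changing faults occurred, or that any such fault is already accounted for by the chosen $j$. The counting is: $m$ faults spent on flags, $n_{\text{diff}}$ spent on earlier syndrome changes, $j$ spent on further changes not observed because they cancelled within the $n_{\text{same}}$ identical outcomes, and $t-m-n_{\text{diff}}-j$ leaving residual weight on data. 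Because each syndrome-changing fault that is not the last one in the $n_{\text{same}}$ window would force another change, the requirement $n_{\text{same}} \ge t-m-n_{\text{diff}}+1$ exactly forces the union range. Once this bookkeeping is pinned down, the flag $t$-FTEC condition finishes the argument in the same manner as the other clauses. A secondary subtlety is handling input errors in Criterion~1 uniformly across all five clauses; I would absorb them by extending the set of fault locations to include a pre-protocol slot, noting that any input weight-$s_1$ error can equivalently be regarded as $s_1$ faults in the first round of syndrome extraction for the purposes of the $t$-flag guarantees.
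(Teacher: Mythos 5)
Your proposal follows essentially the same route as the paper's proof in \cref{app:ProtocolGeneralProof}: a case analysis over the five termination clauses, counting faults against $n_{\text{diff}}$, $n_{\text{same}}$ and the number of flagged circuits, using the $t$-flag property to bound the weight of unflagged errors and a fault-free-round (pigeonhole) argument to certify the measured syndrome, and finally invoking the flag $t$-FTEC condition so that the choice of element within each correction set is immaterial. The only caveat is that your opening ``trichotomy'' (each fault causes a flag, \emph{or} a syndrome change, \emph{or} an invisible residual error) is too clean if read literally---a flagged fault generally also leaves a data error, which is precisely what the flag error sets $\mathcal{E}_{m}(g_{i_{1}},\cdots,g_{i_{k}})$ record---but your clause-by-clause accounting treats this correctly, so the argument goes through.
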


\begin{proof}
First note that the protocol always terminates. As was shown in the arguments leading to \cref{Eq:Nmax} presented in \cref{subsec:GeneralProtocol}, the maximum number of syndrome measurement rounds is $\frac{1}{2}(t^{2}+3t+2)$. 

To prove fault-tolerance, in what follows we assume that there are at most $t$-faults during the protocol. Also, we define a benign fault to be a fault that either leaves all syndrome measurements in the protocol unchanged. 

By repeating the syndrome measurement using $t$-flag circuits, the following cases exhaust all possible errors for the occurrence of at most $t$ faults. 

\emph{\underline{Case 1}: The same syndrome is measured $t-n_{\text{diff}}+1$ times in a row and there are no flags.} 

At any time during the protocol, if there are no flags, there can be at most $t-n_{\text{diff}}$ remaining faults that occur (since it is guaranteed that there were at least $n_{\text{diff}}$ faults). Therefore, if the same syndrome was measured $t-n_{\text{diff}}+1$ times in a row, at least one round (say $r$) had to have been fault-free yielding the correct syndrome corresponding to the data qubit errors present at that time. Applying $E_{\text{min}}(s)$ will remove those errors. Furthermore, since all syndrome measurements are identical and there are no flags, there can be at most $t-n_{\text{diff}}$ errors which are introduced on the data blocks from faults during the $t-n_{\text{diff}}+1$ syndrome measurement rounds (excluding round $r$). Since none of the errors change the syndrome, after applying the correction, the output state can differ from the input codeword by an error of weight at most $t-n_{\text{diff}}$ (if the total number of faults \text{and} input errors was $t$). For input states afflicted by an error of arbitrary weight, the output state will differ from a valid codeword (but not necessarily the input codeword) by an error of weight at most $t-n_{\text{diff}}$. Thus both conditions of \cref{Def:FaultTolerantDef} are satisfied.

\emph{\underline{Case 2}: There are no flags and $n_{\text{diff}} = t$.} 

The only way that $n_{\text{diff}} = t$ is if there were $t$-faults that each changed the syndrome measurement outcome. Further since there were no flags, an error $E$ afflicting the data qubits must satisfy $\text{wt}(E) \le t$. Thus repeating the syndrome measurement using non-flag circuits will correctly identify and remove the error in the case where the number of input errors and faults is $t$ or project the system back to the code space (to a possibly differ codeword) if there were $t$ faults and the input state was afflicted by an error of arbitrary weight .

\emph{\underline{Case 3}: A set of $t$ circuits $\{ C(g_{i_{1}}), \cdots ,  C(g_{i_{t}}) \}$ flagged.} 

Since $t$ circuits $\{ C(g_{i_{1}}), \cdots ,  C(g_{i_{t}}) \}$ flagged, then no other faults can occur during the protocol. Hence, when repeating the syndrome measurement using non-flag circuits, the measured syndrome will correspond to an error $E_{r} \in \tilde{E}_{t}^{t}(g_{i_{1}},\cdots g_{i_{t}},s)$. Since from the flag $t$-FTEC condition all elements of $\tilde{E}_{t}^{t}(g_{i_{1}},\cdots g_{i_{t}},s)$ are logically equivalent, the product of errors resulting from the flag circuits $\{ C(g_{i_{1}}), \cdots ,  C(g_{i_{t}}) \}$ will be corrected.

Note that for an input error $E_{\text{in}}$ of arbitrary weight and since the final round must be error free, applying a correction a correction from the set $\tilde{E}_{t}^{t}(g_{i_{1}},\cdots g_{i_{t}},s)$ is guaranteed to return the system to the codespace. Thus both conditions of \cref{Def:FaultTolerantDef} are satisfied.

\emph{\underline{Case 4}: The $m$ circuits $\{ C(g_{i_{1}}), \cdots ,  C(g_{i_{m}}) \}$ flagged with $1 \le m < t$, $n_{\text{diff}} = t -m $.} 

Here we can assume that at any point during the protocol and after the $j$'th flag, the syndrome never repeated more than $t-j-n_{\text{diff}}$ times. Otherwise case 5 of the protocol would already have occurred. 

As $m$ circuits $\{ C(g_{i_{1}}), \cdots ,  C(g_{i_{m}}) \}$ have flagged and $n_{\text{diff}}=t-m$, then there can be no more faults. The final syndrome measurement using non-flag circuits will yield a syndrome corresponding to an error in the set $\tilde{E}_{t}^{m}(g_{i_{1}},\cdots g_{i_{m}},s)$ (and all elements are logically equivalent from the flag $t$-FTEC condition). Applying a recovery operator chosen from this set will thus remove the errors afflicting the data. If the input state differs from a valid codeword by an error of arbitrary weight, by definition of $\tilde{E}_{t}^{m}(g_{i_{1}},\cdots g_{i_{m}},s)$ the output state will be a valid codeword. 

\emph{\underline{Case 5}:  The $m$ circuits $\{ C(g_{i_{1}}), \cdots ,  C(g_{i_{m}}) \}$ flagged with $1 \le m < t$, $n_{\text{same}} = t -m - n_{\text{diff}} + 1$.} 

Given that $m$ circuits  $\{ C(g_{i_{1}}), \cdots ,  C(g_{i_{m}}) \}$ flagged,  there are $r$ remaining faults that don't result in a flag with  $n_{\text{diff}} \le r \le t-m$. In this case, after the $m$'th flag, the syndrome measurement was repeated using $t$-flag circuits $t-m- n_{\text{diff}}+1$ times in a row and all syndromes were the same. It is thus guaranteed that at least one of the syndrome measurements $s$ was fault-free and correctly identified the errors arising from the flags and errors causing the syndrome to change giving $n_{\text{diff}}$ (along with some error $E$ which did not cause the circuits to flag with $\text{wt}(E) \le t-m- n_{\text{diff}}$). Consequently, if there are no errors on the input state, the overall error on the data will be $EE_{r}$ with $E_{r} \in \bigcup_{j=0}^{t-m-n_{\text{diff}}}\tilde{E}_{t}^{t-j-n_{\text{diff}}}(g_{i_{1}},\cdots ,g_{i_{m}}, s)$. Since all elements in $\bigcup_{j=0}^{t-m-n_{\text{diff}}}\tilde{E}_{t}^{t-j-n_{\text{diff}}}(g_{i_{1}},\cdots ,g_{i_{m}}, s)$ are logically equivalent from the flag $t$-FTEC condition, by choosing a correction from this set, the output state can differ from the input codeword by an error of at most weight $t-m- n_{\text{diff}}$. 

If there is an input error of arbitrary weight, then again one of the $t-m- n_{\text{diff}}+1$ rounds will have the correct syndrome $s$. 
The actual state of the data qubits after the protocol can differ from the state which had the correct syndrome by an error of weight at most $t-m- n_{\text{diff}}$.
Therefore applying any correction with syndrome $s$ will return the system to the code space up to an error of weight at most $t-m- n_{\text{diff}}$.

\end{proof}

\section{Fault-tolerant state preparation and measurement using flag $t$-FTEC}
\label{app:StatePrepAndMeasure}

In this section we show how to fault-tolerantly prepare a logical $\ket{\overline{0}}$ state and how to perform fault-tolerant measurements for codes that satisfy the flag $t$-FTEC condition of \cref{app:GeneralFTEC}. Note that there are several methods that can be used for doing so. Here we follow a procedure similar to that shown in \cite{Gottesman2010} when performing Shor EC. However, compared to Shor EC, the flag $t$-FTEC protocol requires fewer qubits. Furthermore, postselection is not necessary.

Consider an $n$-qubit stabilizer code $C$ with stabilizer group $\mathcal{S} = \langle g_{1},\cdots , g_{n-k} \rangle$ that can correct up to $t$ errors. Notice that the encoded  $\ket{\overline{0}}$ state is a $+1$ eigenstate of the logical $\overline{Z}$ operator and all of the codes stabilizer generators. For $k$ encoded qubits, $\ket{\overline{0}}$ would be $+1$ eigenstate of $\{ \overline{Z}_{1}, \cdots \overline{Z}_{k} \}$ and all of the codes stabilizers. For notational simplicity, in what follows we assume $k=1$.  

The state $\ket{\overline{0}}$ is a stabilizer state completely specified by the full stabilizer generators of $\mathcal{S}$ and $\overline{Z}$. We can think of $\mathcal{S}' = \langle g_{1}, \cdots g_{n-1}, \overline{Z} \rangle$ as a stabilizer code with zero encoded qubits and a $2^{0}=1$ dimensional Hilbert space. Thus any state which is a $+1$ eigenstate of all operators in $\mathcal{S}'$ will correspond to the encoded $\ket{\overline{0}}$ state.

Now, suppose we prepare $\ket{\overline{0}}_{\text{in}}$ using a non-fault-tolerant encoding and perform a round of flag $t$-FTEC using the extended stabilizers $\langle g_{1}, \cdots g_{n-1}, \overline{Z} \rangle$. Then by the second criteria of \cref{Def:FaultTolerantDef}, the output state $\ket{\overline{0}}_{\text{out}}$ is guaranteed to be a valid codeword with at most $t$ single-qubit errors. But for the extended stabilizers $\langle g_{1}, \cdots g_{n-1}, \overline{Z} \rangle$ there is only \textit{one} valid codeword which corresponds to the encoded $\ket{\overline{0}}$ state.  In fact, by the second criteria of \cref{Def:FaultTolerantDef}, any $n$-qubit input state prepared using non-fault-tolerant circuits is guaranteed to be an encoded $\ket{\overline{0}}$ state if there are no more than $t$ faults in the EC round.

We point out that the flag $t$-FTEC condition of \cref{subsec:GeneralProtocol} is trivially satisfied for $\mathcal{S}'$ since the codes logical operators are now stabilizers. In other words, if two errors belong to the set $\tilde{E}^{m}_{t}(g_{i_{1}}, \cdots, g_{i_{k}},s)$, then their product will always be a stabilizer. Therefore, the flag $t$-FTEC protocol can always be applied for the code $\mathcal{S}'$.  

To summarize, the encoded $\ket{\overline{0}}$ state can be prepared by first preparing any $n$-qubit state using non-fault-tolerant circuits followed by applying a round of flag $t$-FTEC using the extended stabilizers $\langle g_{1}, \cdots g_{n-1}, \overline{Z} \rangle$. This guarantees that the output state will be the encoded $\ket{\overline{0}}$ state with at most $t$ single-qubit errors. 

Now suppose we want to measure the eigenvalue of a logical operator $\overline{P}$ where $P$ is a Pauli. If $C$ is a CSS code and the logical operator being measured is $X$ or $Z$, one could measure the eigenvalue by performing the measurement transversally. So suppose $C$ is not a CSS code. From \cite{Gottesman2010} we require that performing a measurement with $s$ faults on an input state with $r$ errors ($r+s \le t$) is equivalent to correcting the $r$ errors and performing the measurement perfectly. The protocol for fault-tolerantly measuring the eigenvalue of $\overline{P}$ is described as follows:

\begin{enumerate}
\item Perform flag $t$-FTEC.
\item Use a $t$-flag circuit to measure the eigenvalue of $\overline{P}$.
\item Repeat steps 1 and 2 $2t+1$ times and take the majority of the eigenvalue of $\overline{P}$. 
\end{enumerate}

Step 1 is used to remove input errors to the measurement procedure. However during error correction, a fault can occur which could cause a new error on the data. Thus by repeating the measurement without performing error correction, the wrong state would be measured each time if there were no more faults. But repeating the syndrome $2t+1$ times, it is guaranteed that at least $t+1$ of the syndrome measurements had no faults and that the correct eigenvalue of $\overline{P}$ was measured. Thus taking the majority of the measured eigenvalues will give the correct answer. 

Note that during the fault-tolerant measurement procedure, if there is a flag either during the error correction round or during the measurement of $\overline{P}$, when error correction is performed one corrects based on the possible set of errors resulting from the flag.

\section{Candidate general $w$-flag circuit construction}
\label{App:GeneralwFlagCircuitConstruction}

\begin{figure*}
	\centering
	\includegraphics[width=1.05\textwidth]{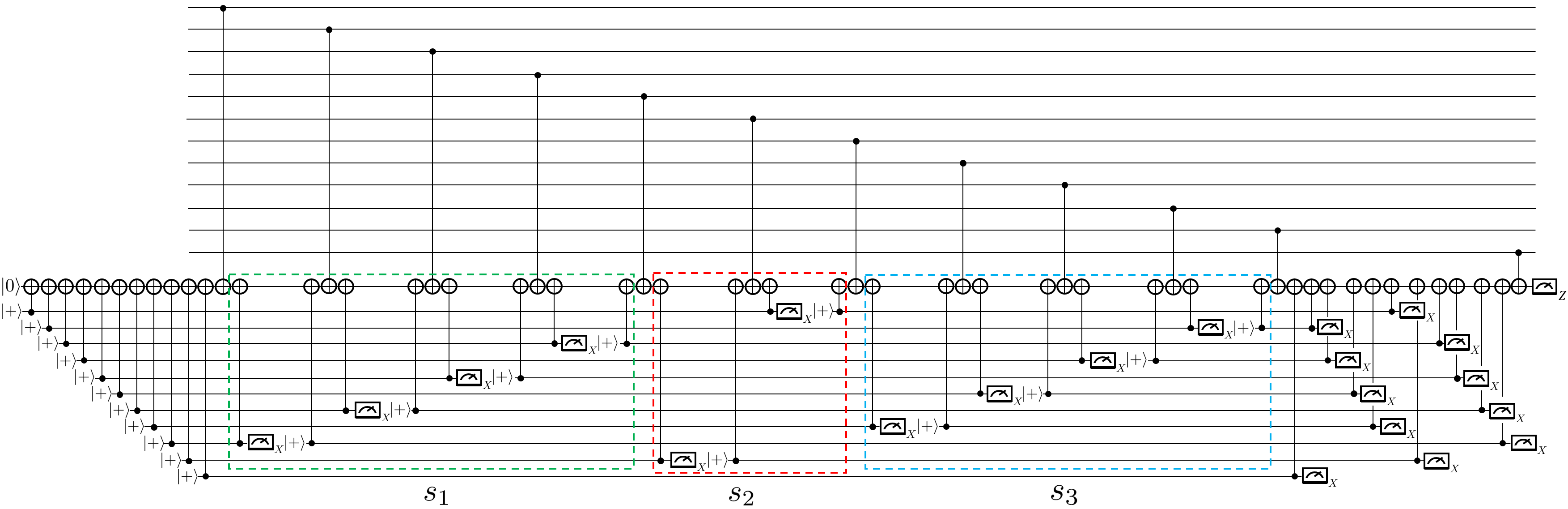}
	\caption{Illustration of the general $w$-flag circuit construction for $w=12$. In general, the circuit requires $w-1$ flag qubits and is implemented using $7w-8$ time steps. The circuit consists of two families of $\text{CNOT}_{\text{fm}}$ gates. For the first family, with the first set of $\text{CNOT}_{\text{fm}}$ gates located before the first $\text{CNOT}_{\text{dm}}$ gate, the partnering $\text{CNOT}_{\text{fm}}$ gates are divided into three sets $s_{1}$, $s_{2}$ and $s_{3}$ which are enclosed in the green, red and blue dashed boxes. In general, $s_{1}$ and $s_{3}$ both contain $(w-4)/2$ $\text{CNOT}_{\text{fm}}$ gates. In $s_{1}$, the $j$'th control qubit is at position $w+2(j+1)$ and in $s_{3}$ it is at position $w+2j+1$ with $j \in \{1,2, \cdots , (w-4)/2 \}$ In $s_{2}$, the control qubits are always located at the $w+2$'th and $2w-1$'th qubits. Lastly, note that qubits are reused for implementing the second family of $\text{CNOT}_{\text{fm}}$ gates. The partnering $\text{CNOT}_{\text{fm}}$ gates are located in between the $w-1$ and $w$'th $\text{CNOT}_{\text{dm}}$ gates following an identical pattern as in $s_{1},s_{2}$ and $s_{3}$ (in $s_{1}$ and $s_{3}$ the CNOT's are implemented in reverse order).
	}
	\label{fig:GeneralWflagFig}
\end{figure*}

In this section we provide a candidate general $w$-flag circuit construction for measuring the stabilizer $Z^{\otimes w}$. Although we do not provide a rigorous proof that our construction results in a $w$-flag circuit, we give several arguments as evidence that it satisfies all the criteria of a $w$-flag circuit. An illustration of the circuit construction (for $w=12$) is given in \cref{fig:GeneralWflagFig} and the description for how the circuit is constructed for arbitrary $w$ is provided in the caption. 

In what follows, we can restrict our attention to the case in which all $v$ faults occur on CNOT gates in the circuit. The effect on the measurement outcomes and data qubits due to a set of $v$ faults that include faults at idle and measurement locations can always occur due to at most $v$ faults at CNOT locations only (as every qubit is involved in at least one CNOT). Moreover, we can assume that for $\text{CNOT}_{\text{fm}}$ gates, the faults belong to the set $\{ IZ, ZI, ZZ \}$ since $X$ errors would never propagate to the data or affect the measurement outcome of a flag qubit. For $\text{CNOT}_{\text{dm}}$ gates, we can assume that faults belong to the set $\{XZ,XI \}$. We only consider $Z$ errors on the target qubit of a $\text{CNOT}_{\text{dm}}$ for the same reason that was given for $\text{CNOT}_{\text{fm}}$ gates. For the control qubit, an X errors guarantees that the weight of the data qubit error increases even after the application of a satbilizer (since we are measuring $Z^{\otimes w}$).

We will use the following useful terminology: we say that a single-qubit Pauli at a time step in the circuit propagates to a qubit at a particular time-step if it would do so in the fault-free circuit. Given a single-qubit Pauli at a time step in the circuit, we say that another qubit is affected by the Pauli if it propagates to that qubit in any time step.

We now provide arguments for why the circuit is a $w$-flag circuit. First, note that every $\text{CNOT}_{\text{fm}}$ gate comes as part of a pair with the measurement qubit being the target qubit. This ensures that when the circuit is fault-free, it implements a projective measurement of $Z^{\otimes w}$ without flagging. Next, notice that apart from the last two $\text{CNOT}_{\text{dm}}$ gates, each $\text{CNOT}_{\text{dm}}$ gate is followed by two $\text{CNOT}_{\text{fm}}$ gates, one with its partnering $\text{CNOT}_{\text{fm}}$ located before the first $\text{CNOT}_{\text{dm}}$ and the other partner is located in between the last two $\text{CNOT}_{\text{dm}}$ gates. Thus if there is a single $Z$ error on the measurement qubit which propagates to any of the data qubits, the circuit will flag. 

In all circuits considered in this section, $s_{0}$ will correspond to the sequence of $\text{CNOT}_{\text{fm}}$ gates that come before the first $\text{CNOT}_{\text{dm}}$ gate. First consider the shorter circuit construction using only the first family of $\text{CNOT}_{\text{fm}}$ gates from the construction in \cref{fig:GeneralWflagFig} (see the example in \cref{fig:ErrorPropLogic}). We can separate the set of all locations into subsets including two $\text{CNOT}_{\text{fm}}$ gates and one $\text{CNOT}_{\text{dm}}$ gate as shown in \cref{fig:CNOTsectionAppendix} (apart from the last $\text{CNOT}_{\text{dm}}$). This circuit segment can increase the weight of the data error by at most one. There are four cases with inputs on the measurement qubit before the first $\text{CNOT}_{\text{fm}}$ and $\text{CNOT}_{\text{dm}}$ being $\{(I,I),(I,Z),(Z,I),(Z,Z)\}$. Note that if the following property held for each segment, then the circuit would be $w$-flag: for all inputs to the segment, if the weight of the data error increases and there are no faults in the segment, the segment flags. Unfortunately, for the input $(Z,Z)$, this is not the case. Both input $Z$ must come from at least two faults. 

\begin{figure}
	\centering
	\includegraphics[width=0.45\textwidth]{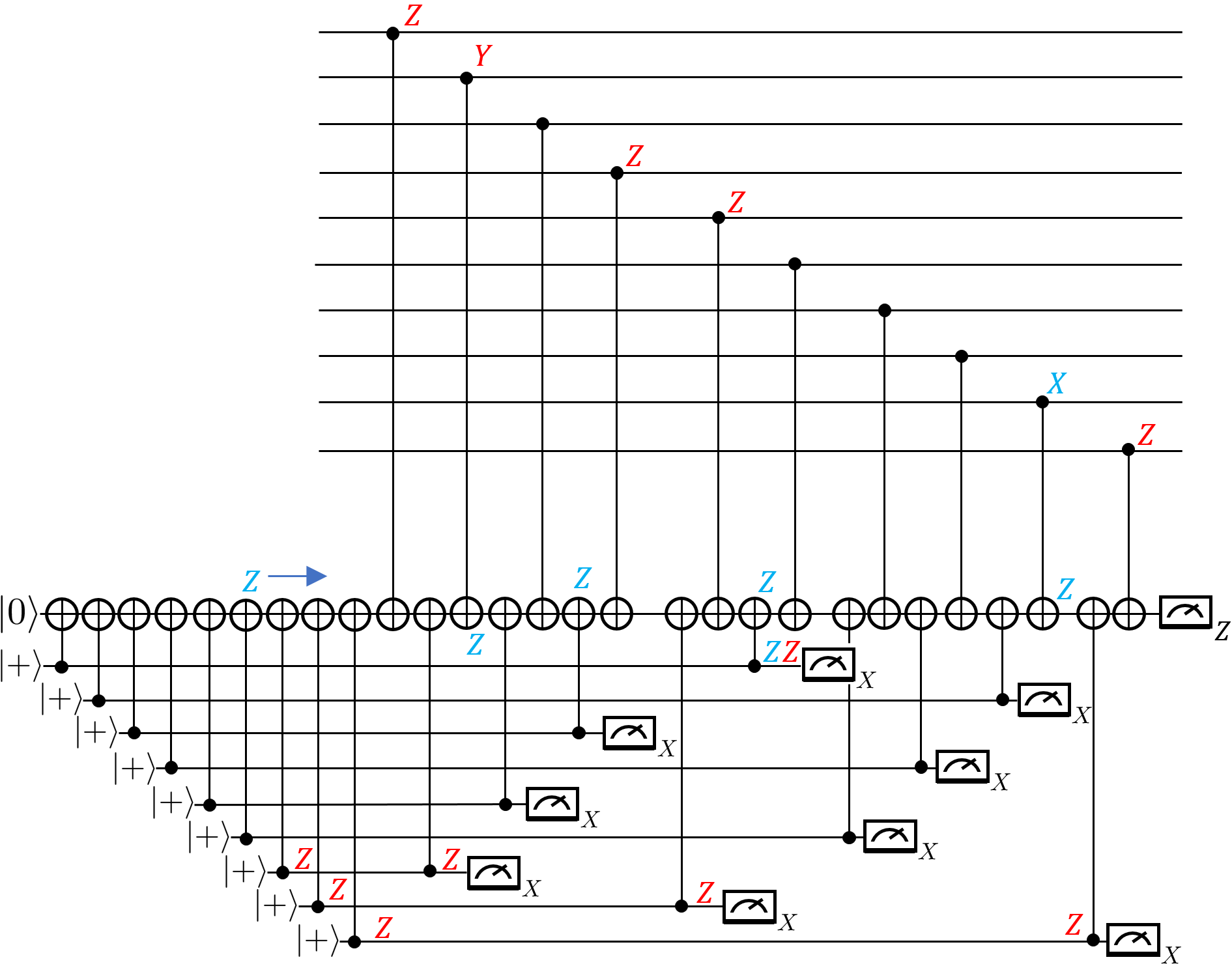}
	\caption{Example of five faults that lead to an error of weight six on the data without causing a flag when only the first family of  $\text{CNOT}_{\text{fm}}$ gates are used in the construction of \cref{fig:GeneralWflagFig} (here $w=10$). Errors arising from faults are shown in blue and the resulting errors after propagating through the CNOT gates are shown in red. 
	}
	\label{fig:ErrorPropLogic}
\end{figure}

Note that if $v$ faults results in a data qubit error of weight greater than $v$ without causing the circuit in \cref{fig:ErrorPropLogic} to flag, there must be either an $IZ$ fault followed by no fault in a consecutive pair of $\text{CNOT}_{\text{fm}}$ gates belonging to $s_{0}$ or a $ZZ$ fault followed by two $\text{CNOT}_{\text{fm}}$ gates that don't fail in $s_{0}$. 

Moreover, a poor choice of ordering of the $\text{CNOT}_{\text{fm}}$ gates in $s_{1},s_{2}$ and $s_{3}$ can result in four faults causing a weight $\frac{w}{2}+1$ error on the data without causing the circuit to flag. Therefore, the ordering of the $\text{CNOT}_{\text{fm}}$ gates in the sets $s_{1},s_{2}$ and $s_{3}$ is chosen such that most $Z$ errors in $s_{0}$ that first propagate to flag qubits connected to gates in $s_{1}$, will then propagate to flag qubits in $s_{3}$ and vice-versa. Typically, if a $Z$ error propagates through multiple $\text{CNOT}_{\text{dm}}$ gates in $s_{1}$, then unless $\text{CNOT}_{\text{fm}}$ gates in $s_{3}$ fail, the flag qubits affected by the $Z$ error would flag. Furthermore, the total number of required failures for gates in $s_{3}$ to cancel the $Z$ errors will typically be equal to the number of times the $Z$ error propagated to the data. 

There are however cases which don't flag in which $v$ faults in the circuit construction presented in \cref{fig:ErrorPropLogic} lead to more than $v$ errors on the data qubit, such as the example given in the figure. All such problematic cases that we found had a $Z$ error on the target qubit in one of the last few $\text{CNOT}_{\text{fm}}$ gates in $s_0$, followed by a $Z$ error on the target qubit in one of the first few $\text{CNOT}_{\text{dm}}$ gates in $s_1$. Then further $Z$ errors occur throughout the remainder of the circuit which propagate to the data while preventing the flag qubits affected by the previous errors from flagging. Further, a $Z$ error on the control qubit of the second $\text{CNOT}_{\text{fm}}$ in $s_{2}$ cancels the $Z$ which propagates to the flag qubit coupled to that  $\text{CNOT}_{\text{fm}}$ gate. 

\begin{figure}
	\centering
	\includegraphics[width=0.4\textwidth]{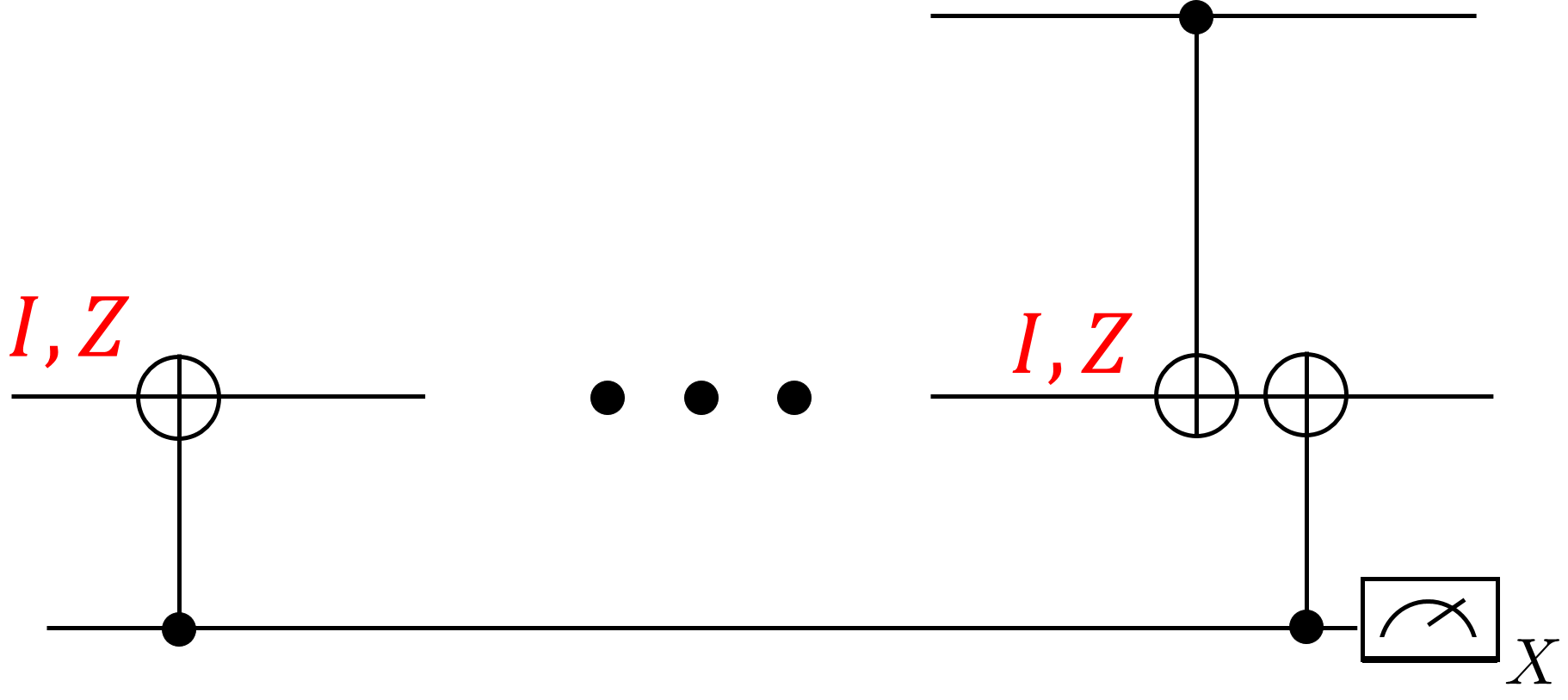}
	\caption{Illustration of a pair of  $\text{CNOT}_{\text{fm}}$ gates as well as a segment of  $\text{CNOT}_{\text{dm}}$ followed by  $\text{CNOT}_{\text{fm}}$ gate. The first  $\text{CNOT}_{\text{fm}}$ gate belongs to the sequence of  $\text{CNOT}_{\text{fm}}$ gates that come before the first  $\text{CNOT}_{\text{dm}}$ gate (see \cref{fig:GeneralWflagFig}).
	}
	\label{fig:CNOTsectionAppendix}
\end{figure}

This particular problematic fault pattern would lead to flags if it occurred within the full circuit construction of \cref{fig:GeneralWflagFig} (if the additional locations of the larger circuit do not fail). As this was the only type of problematic fault pattern that we found, one would hope that all problematic fault patterns are rendered non problematic provided no additional locations fail. Since the additional $\text{CNOT}_{\text{fm}}$ gates always occur immediately after one of the original $\text{CNOT}_{\text{fm}}$ gates (or after the last $\text{CNOT}_{\text{fm}}$ gate), as far as the flag properties of the original circuit are concerned, no new problematic fault patterns are introduced.

We conclude this section by noting that our candidate general $w$-flag circuit construction requires $w-1$ flag qubits and is implemented in $7w-8$ time steps. This is clearly not optimal in general since for example, as shown in \cref{fig:6flagCircuit}, a $w$-flag circuit was found (for $w=6$) which requires only three flag qubits instead of five and the circuit is implemented in 14 time steps instead of 34. It is thus still an open problem to find optimal $w$-flag circuits for arbitrary $w$. 

\section{Quantum Reed-Muller codes}
\label{app:QRMcodes}

In this section we first describe how to construct the family of quantum Reed-Muller codes $\text{QRM}(m)$ with code parameters $[\![ 2^{m}-1,k=1,d=3 ]\!]$ following \cite{ADP14}. We then show that the family of $\text{QRM}(m)$ codes satisfy the sufficient flag 1-FTEC condition of \cref{subsec:Remarks}.

Reed-Muller codes of order $m$ ($\text{RM}(1,m)$) are defined recursively from the following generator matrices:
First, $\text{RM}(1,1)$ has generator matrix

\begin{align}
G_{1}=\left( \begin{array}{cc}
	1 & 1\\
	0 & 1\\                             
\end{array} \right),
\label{eq:G1mat}
\end{align}
and $\text{RM}(1,m+1)$ has generator matrix 
\begin{align}
G_{m+1}=\left( \begin{array}{cc}
	G_{m} & G_{m}\\
	0 & 1\\                             
\end{array} \right),
\label{eq:GMmat}
\end{align}
where 0 and 1 are vectors of zeros and ones in \cref{eq:GMmat}.
The dual of $\text{RM}(1,m+1)$ is given by the higher order Reed-Muller code $\text{RM}(m-2,m)$. In general, the generator matrices for higher-order Reed-Muller codes $\text{RM}(r,m)$ are given by 

\begin{align}
H_{r,m+1}=\left( \begin{array}{cc}
	H_{r,m} & H_{r,m}\\
	0 & H_{r-1,m}\\                             
\end{array} \right).
\label{eq:GRMmat}
\end{align}

with 
\begin{align}
H_{2,1}=H_{1,1} = \left( \begin{array}{cc}
	1 &1\\
	0 & 1\\                             
\end{array} \right),
\label{eq:G21mat}
\end{align}

The $X$ stabilizer generators of $\text{QRM}(m)$ are derived from shortened Reed-Muller codes where the first row and column of $G_{m}$ are deleted. We define the resulting generator matrix as $\overline{G}_{m}$. The $Z$ stabilizer generators are obtained by deleting the first row and column of $H_{m-2,m}$. Similarly, we define the resulting generator matrix as $\overline{H}_{m-2,m}$.

As was shown in \cite{ADP14}, $\text{rows}(\overline{G}_{m}) \subset \text{rows}(\overline{H}_{m-2,m})$ and each row has weight $2^{m-1}$. Therefore, all the $X$-type stabilizer generators of $\text{QRM}(m)$ have corresponding $Z$-type stabilizers. By construction, the remaining rows of $\overline{H}_{m-2,m}$ will have weight $2^{m-2}$. Furthermore, every weight $2^{m-2}$ row has support contained within some weight $2^{m-1}$ row of the generator matrix $\overline{H}_{m-2,m}$. Therefore, every $Z$-type stabilizer generator has support within the support of an $X$ generator.

\section{Implementation of Steane error correction}
\label{app:SteaneECSection}

In this section we describe how to implement Steane error correction and discuss its fault-tolerant properties. We also provide a comparison of a version of Steane error correction with flag 2-FTEC protocol described in \cref{subsec:Distance5protocol} applied to the \codepar{19,1,5} code.

Steane error correction is a fault-tolerant scheme that applies to the Calderbank-Shor-Steane (CSS) family of stabilizer codes \cite{Steane97}. In Steane error correction, the idea is to use encoded $\ket{\overline{0}}$ and $\ket{\overline{+}} = (\ket{\overline{0}} + \ket{\overline{1}})/\sqrt{2}$ ancilla states to perform the syndrome extraction. The ancilla's are encoded in the same error correcting code that is used to protect the data. The $X$ stabilizer generators are measured by preparing the encoded $\ket{\overline{0}}$ state and performing transversal CNOT gates between the ancilla and the data, with the ancilla acting as the control qubits and the data acting as the target qubits. After applying the transversal CNOT gates, the syndrome is obtained by measuring $\ket{\overline{0}}$ transversally in the $X$-basis. The code construction for CSS codes is what guarantees that the correct syndrome is obtained after applying a transversal measurement (see \cite{Gottesman2010} for more details). 

\begin{figure}
\centering
\begin{subfigure}{0.25\textwidth}
\includegraphics[width=\textwidth]{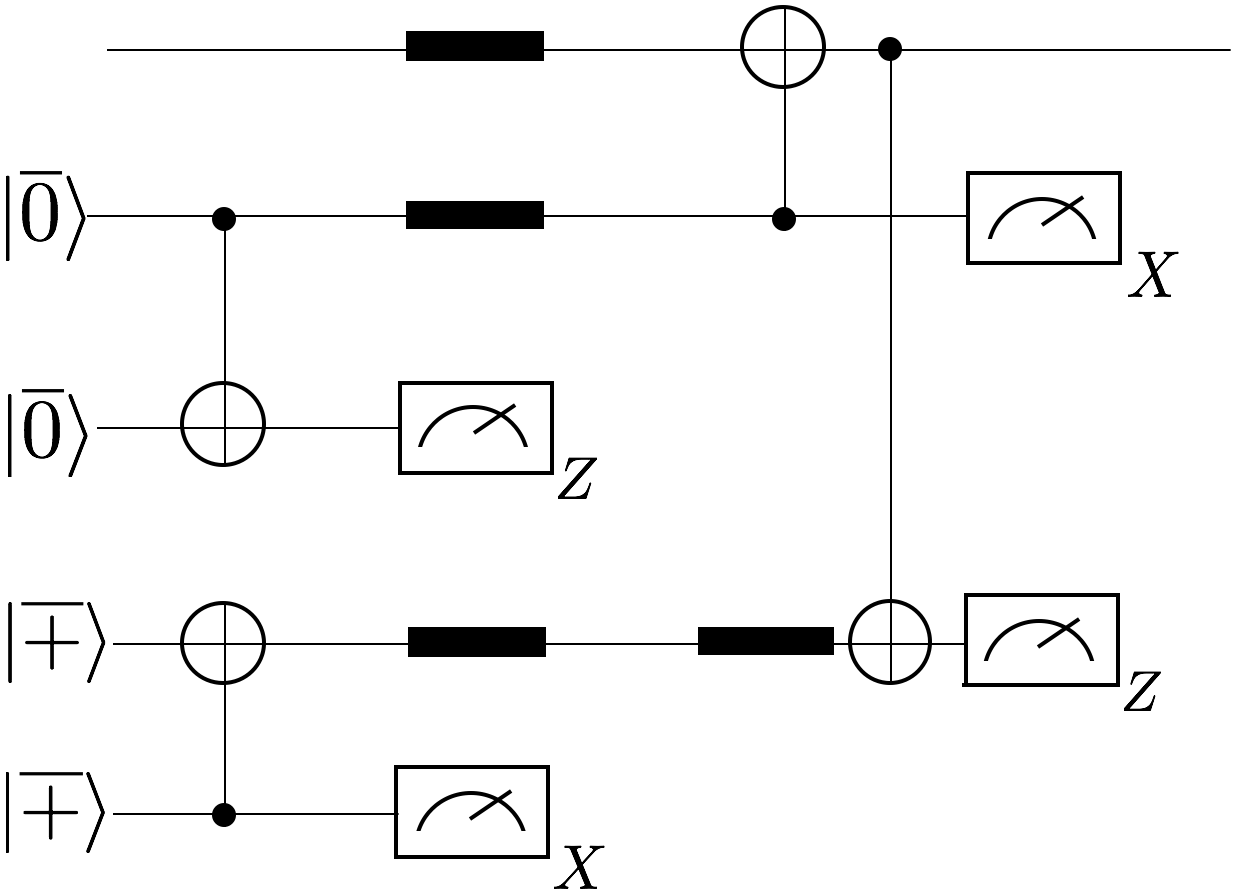}
\caption{}
\label{fig:SteaneCSSforPerfect}
\end{subfigure}
\begin{subfigure}{0.25\textwidth}
\includegraphics[width=\textwidth]{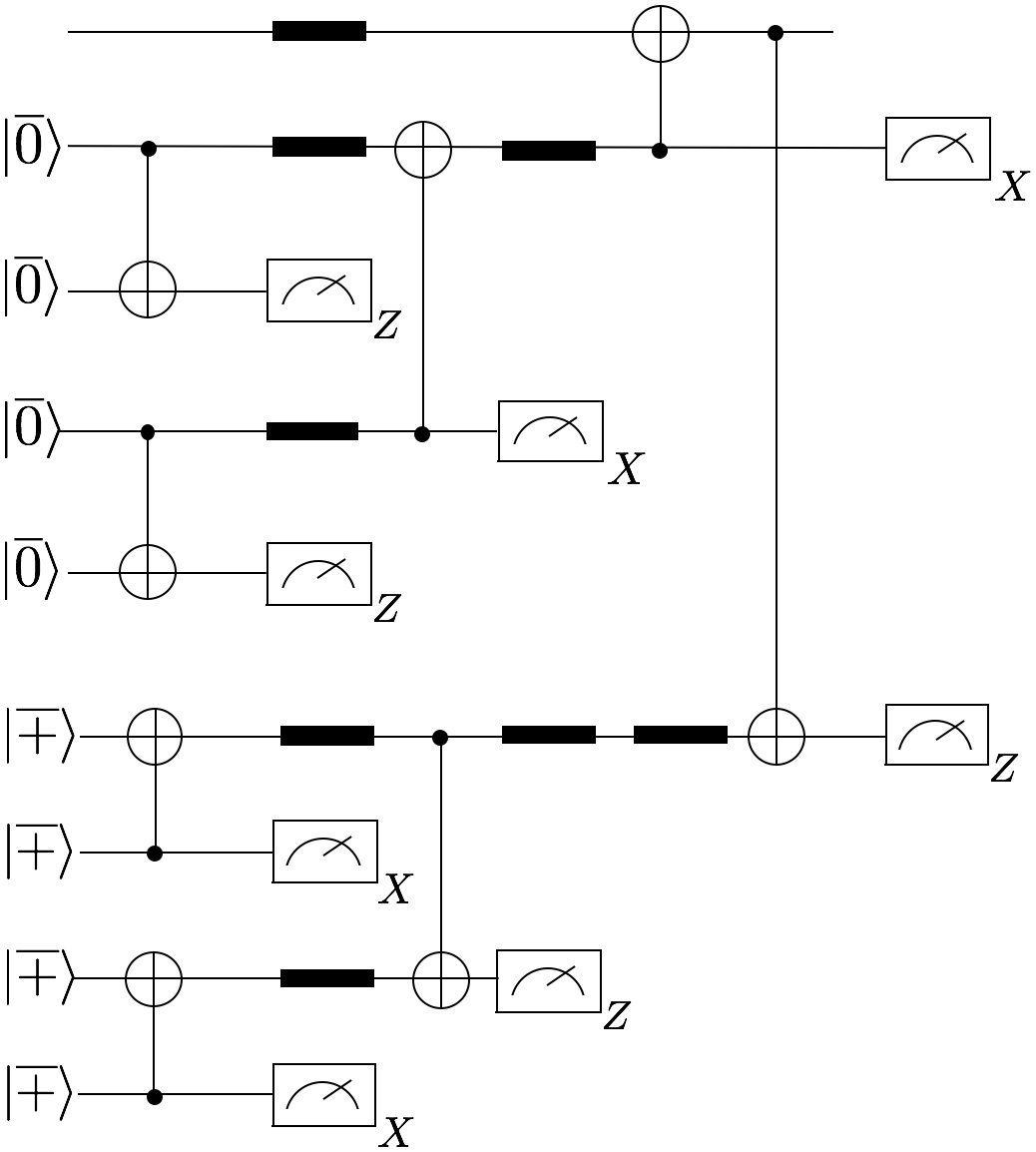}
\caption{}
\label{fig:SteaneCSSforNonPerfect}
\end{subfigure}
\caption{(a) Fault-tolerant Steane error correction circuit for distance-three CSS codes. Each line represents an encoded qubit. The circuit uses only two encoded $\ket{\overline{0}}$ and $\ket{\overline{+}}$ ancilla states (encoded in the same error correcting code which protects the data) to ensure that faults in the preparation circuits of the ancilla's don't spread to the data block. (b) Fault-tolerant Steane error correction circuit which can be used for any distance-three CSS stabilizer code encoding the data. There are a total of eight encoded ancilla qubits instead of four. The dark bold lines represent resting qubits. Note that the circuit in \cref{fig:SteaneCSSforNonPerfect} could in some cases be used for higher distance CSS codes with appropriately chosen circuits for $\ket{\overline{0}}$ and $\ket{\overline{+}}$ ancilla states (see \cite{PR12}).}
\label{fig:SteaneECcircuitsAll}
\end{figure}

\begin{table*}[t]
\begin{tabular}{ c|c|c|c|c}
 FTEC scheme & Noise model & Number of qubits & Time steps ($T_{\mathrm{time}}$) &Pseudo-threshold  \\ \hline
Full Steane-EC & $\tilde{p} = p$ & $\ge 171$ & 15 & $p_{\mathrm{pseudo}} = (3.50 \pm 0.14) \times 10^{-3}$ \\
Full Steane-EC & $\tilde{p} = p/100$ & $\ge 171$ & 15 & $p_{\mathrm{pseudo}} = (1.05 \pm 0.04) \times 10^{-3}$ \\
Flag-EC \codepar{19,1,5} & $\tilde{p} = p$ & 22 & $504 \le T_{\mathrm{time}} \le 960$ & $p_{\mathrm{pseudo}} = (1.14 \pm 0.02) \times 10^{-5}$ \\
Flag-EC \codepar{19,1,5} & $\tilde{p} = p/100$ & 22 & $504 \le T_{\mathrm{time}} \le 960$ & $p_{\mathrm{pseudo}} = (7.74 \pm 0.16) \times 10^{-5}$ \\
  \end{tabular}
\caption{Pseudo-threshold results for the Full Steane and flag 2-FTEC protocol applied to the \codepar{19,1,5} code. Since the Steane error correction protocol is non-deterministic, the number of qubits will depend on how many times the encoded states are rejected. For low error rates, the states are accepted with high probability so that the average number of qubits is $\approx 171$. Our three qubit flag error correction protocol requires at most six rounds of syndrome measurements, with each round using flag circuits requiring 168 time steps and the round using non-flag circuits requiring 120 time steps. However, for low noise rates, the average number of time steps will be close to 504 (since at least three rounds are required for the protocol to be fault-tolerant).}
\label{tab:PseudoThreshFullSteane}
\end{table*}

Similarly, the $Z$-stabilizer generators are measured by preparing the encoded $\ket{\overline{+}}$, applying CNOT gates transversally between the ancilla and the data with the data acting as the control qubits and the ancilla's acting as the target qubits. The syndrome is then obtained by measuring $\ket{\overline{+}}$ transversally in the $Z$-basis. 

The above protocol as stated is not sufficient in order to be fault-tolerant. The reason is that in general the circuits for preparing the encoded $\ket{\overline{0}}$ and $\ket{\overline{+}}$ are not fault-tolerant in the sense that a single error can spread to a multi-weight error which could then spread to the data block when applying the transversal CNOT gates. To make the protocol fault-tolerant, extra $\ket{\overline{0}}$ and $\ket{\overline{+}}$ ancilla states (which we call verifier qubits) are needed to check for multi-weight errors at the output of the ancilla states. 

\begin{figure}
\centering
\begin{subfigure}{0.3\textwidth}
\includegraphics[width=\textwidth]{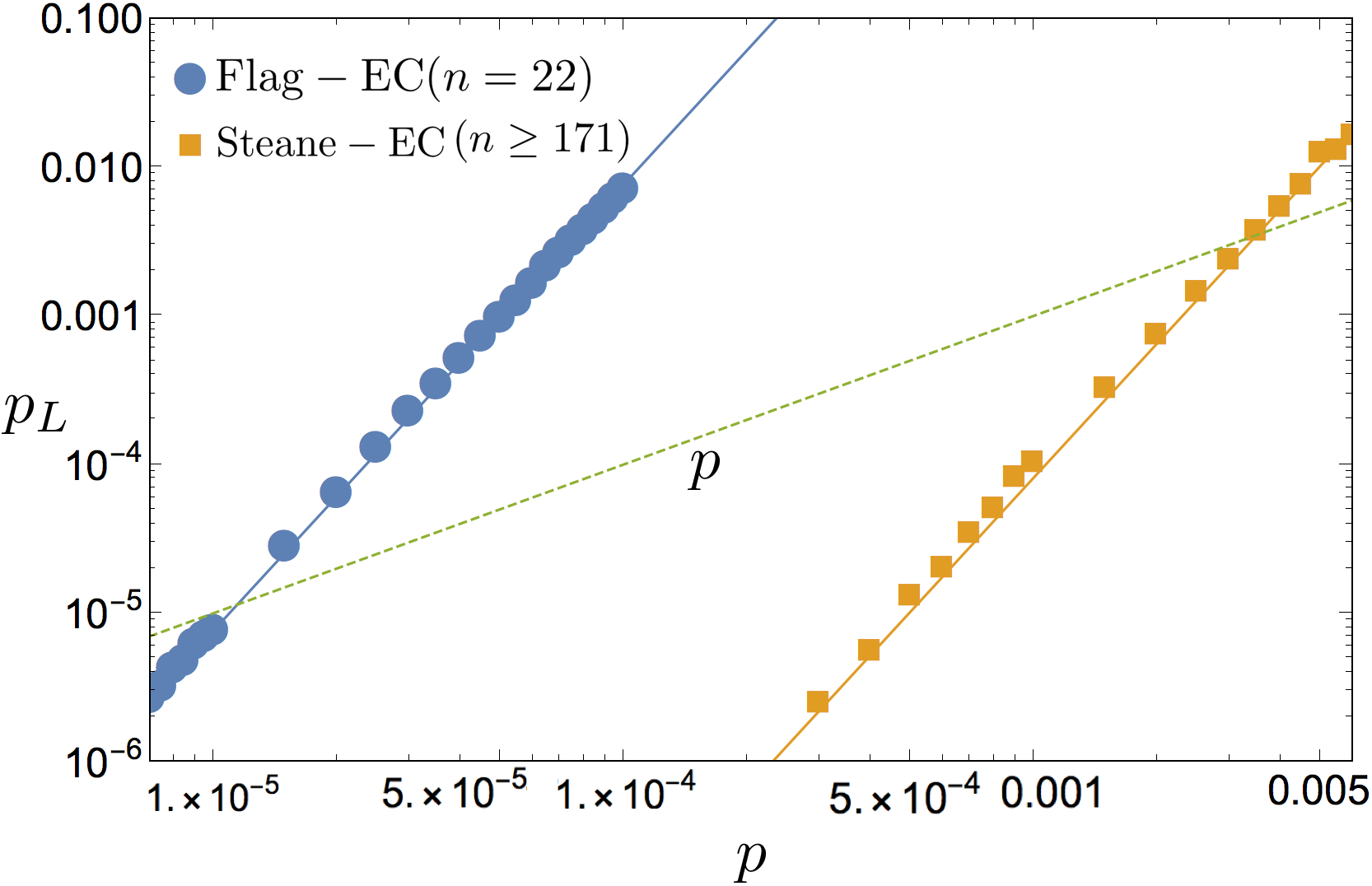}
\caption{}
\label{fig:ResultFullFaultTolerantSteanep}
\end{subfigure}
\begin{subfigure}{0.3\textwidth}
\includegraphics[width=\textwidth]{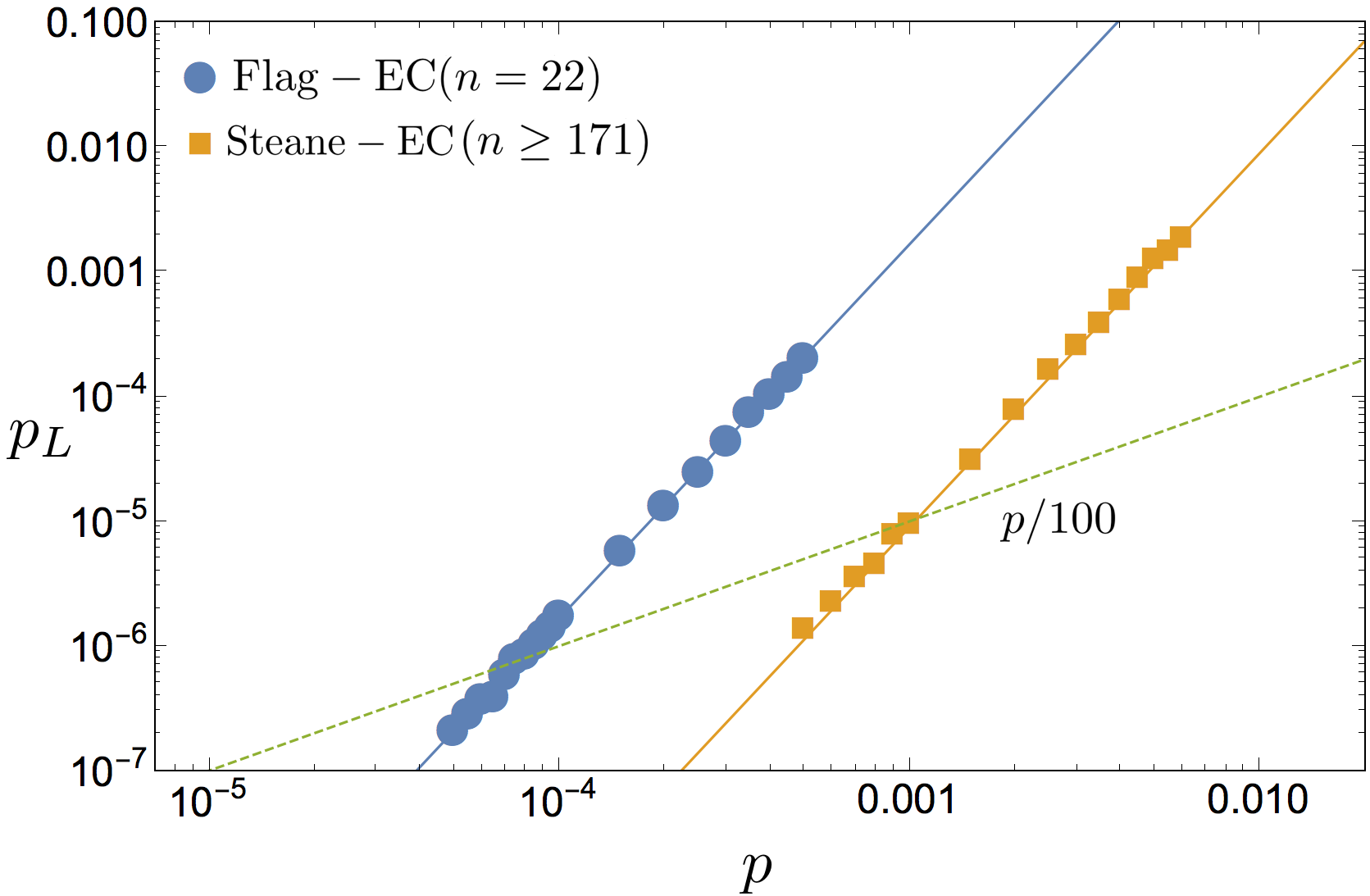}
\caption{}
\label{fig:ResultFullFaultTolerantSteanep100}
\end{subfigure}
\caption{Logical failure rate of the full fault-tolerant Steane error correction approach of \cref{fig:SteaneCSSforNonPerfect} and the flag 2-FTEC protocol of \cref{subsec:Distance5protocol} applied to the \codepar{19,1,5} code. In (a) idle qubits are chosen to fail with a total probability $\tilde{p} = p$ while in (b) idle qubits fail with probability $\tilde{p} = p/100$. The intersection between the dashed curve and solid lines represent the pseudo-threshold of both error correction schemes. }
\label{fig:ResultFullFaultTolerantSteanevsCB}
\end{figure}

For the $\ket{\overline{0}}$ ancilla, multiple $X$ errors can spread to the data if left unchecked. Therefore, another encoded $\ket{\overline{0}}$ ancilla is prepared and a transversal CNOT gate is applied between the two states with the ancilla acting as the control and the verifier state acting as target. Anytime $X$ errors are detected the state is rejected and the error correction protocol start over. Further, if the verifier qubit measures a $-1$ eigenvalue of the logical $Z$ operator, the ancilla qubit is also rejected. A similar technique is used for verifying the $\ket{\overline{+}}$ state (see \cref{fig:SteaneCSSforPerfect}).

For the \codepar{7,1,3} Steane code, an error $E=Z_{i}Z_{j}$ can always be written as $E=\overline{Z}Z_{k}$ where $\overline{Z}$ is the logical $Z$ operator (this is not true for general CSS codes). But $\ket{\overline{0}}$ is a $+1$ eigenstate of $\overline{Z}$. Therefore, we don't need to worry about $Z$ errors of weight greater than one occurring during the preparation of the $\ket{\overline{0}}$ state. 

In \cite{AGP06} it was shown that unlike for the \codepar{7,1,3} code, for general CSS codes, the encoded ancilla states need to be verified for both $X$ and $Z$ errors in order for Steane error correction to satisfy the fault-tolerant properties of \cref{Def:FaultTolerantDef}. We show the general distance-three fault-tolerant scheme in \cref{fig:SteaneCSSforNonPerfect}. Note that the circuit in \cref{fig:SteaneCSSforPerfect} will only satisfy the fault-tolerant criteria of \cref{Def:FaultTolerantDef} for perfect distance-three CSS codes (see \cite{AGP06} for more details).

In \cref{subsec:CompareFlagecschemes} we computed logical failure rates for Steane error correction applied to the \codepar{19,1,5} code using the circuit of figure \cref{fig:SteaneCSSforPerfect} in order to minimize the number of physical qubits. However, since the \codepar{19,1,5} code is not a perfect CSS code, only the circuit in \cref{fig:SteaneCSSforNonPerfect} satisfies all the criteria of \cref{Def:FaultTolerantDef}. This explains why the leading order contributions to the logical failure was of the form $p_{\text{L}} = c_{1}p^{2}+c_{2}p^{3} + \mathcal{O}(p^{4})$ instead of $p_{\text{L}} = cp^{3} + \mathcal{O}(p^{4})$ (which would be the case for a distance-5 code).

In \cref{fig:ResultFullFaultTolerantSteanevsCB} we applied Steane error correction using the circuit of \cref{fig:SteaneCSSforNonPerfect} to achieve the full error correcting capabilities of the \codepar{19,1,5} code. We used methods presented in \cite{PR12,CJL16b} in order to obtain the encoded $\ket{\overline{0}}$ state (since the \codepar{19,1,5} code is self-dual, the $\ket{\overline{+}}$ state is obtain by interchanging all physical $\ket{0}$ and $\ket{+}$ states and reversing the direction of the CNOT gates). Note that not all $\ket{\overline{0}}$ and $\ket{\overline{+}}$ circuits had the same sequence of CNOT gates. This was to ensure that a single fault in two different preparation circuits, i.e. for $\ket{\overline{0}}$ and for $\ket{\overline{+}}$, would not lead to uncorrectable $X$ or $Z$ errors that would go undetected by the verifier ancillas and at the same time propagate to the data block. The results are compared with the flag 2-FTEC protocol of \cref{subsec:Distance5protocol} applied to the \codepar{19,1,5} for the noise models where idle qubits fail with probability $\tilde{p}=p$ and $\tilde{p}=p/100$. In both cases the logical failure rates have a leading order $p^{3}$ contribution (which is determined from finding the best fit curve to the data). The pseudo-threshold results are given in \cref{tab:PseudoThreshFullSteane}. 

As can be seen, the full Steane-EC protocol using the circuit of \cref{fig:SteaneCSSforNonPerfect} achieves significantly lower logical failure rates compared to Steane-EC using the circuit in \cref{fig:SteaneCSSforPerfect} at the cost of using a minimum of 171 qubits compared to a minimum of 95 qubits. In contrast, the flag 2-FTEC scheme of \cref{subsec:Distance5protocol} has a pseudo-threshold that is one to two orders of magnitude lower than than the full Steane-EC scheme but requires only 22 qubits. 

\section{Implementation of Surface code error correction}
\label{app:SurfaceECSection}

\begin{figure}
	\centering
	\begin{subfigure}{0.2\textwidth}
		\includegraphics[width=\textwidth]{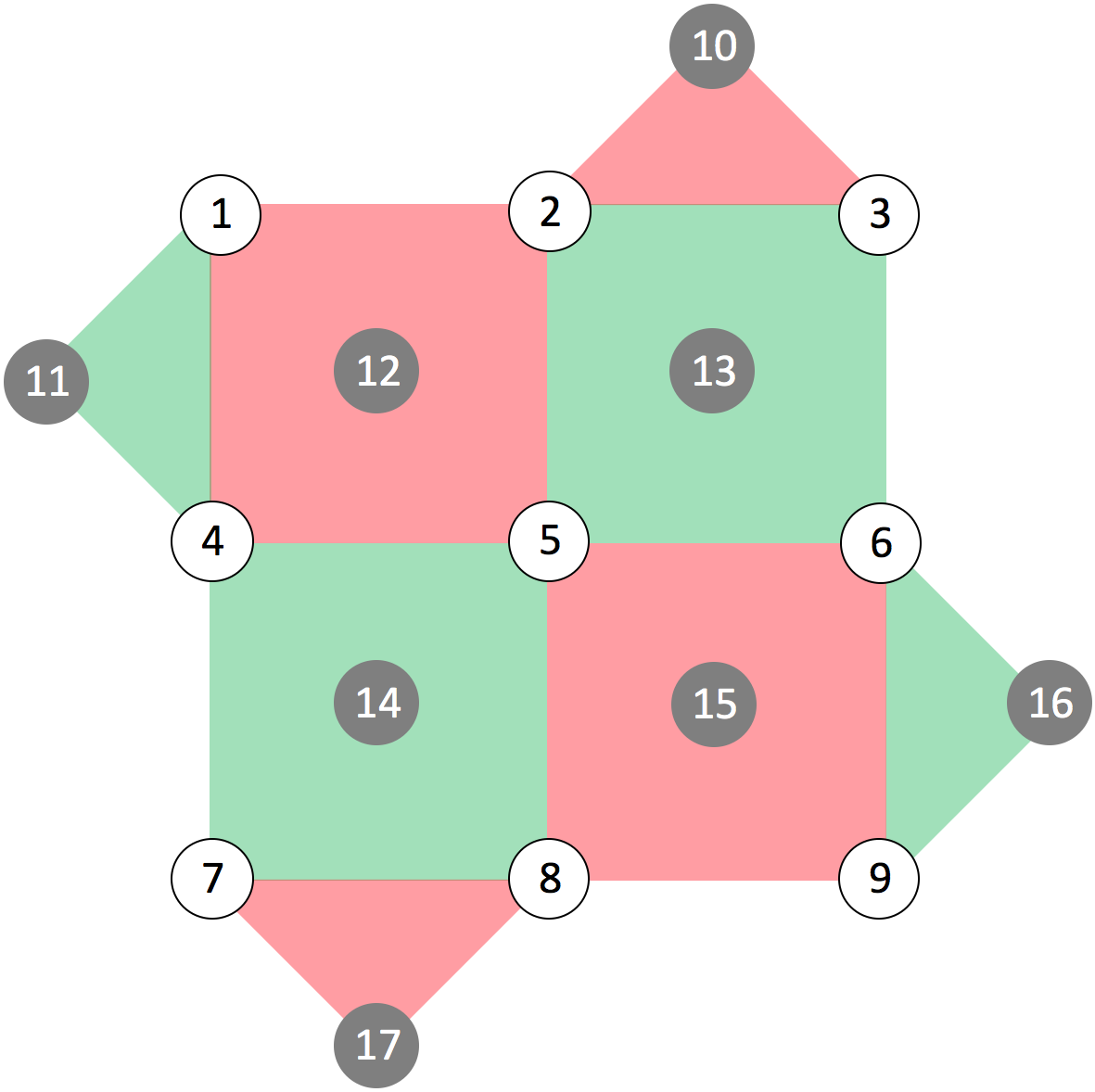}
		\caption{}
		\label{fig:Surface17Lattice}
	\end{subfigure}
	\begin{subfigure}{0.2\textwidth}
		\includegraphics[width=\textwidth]{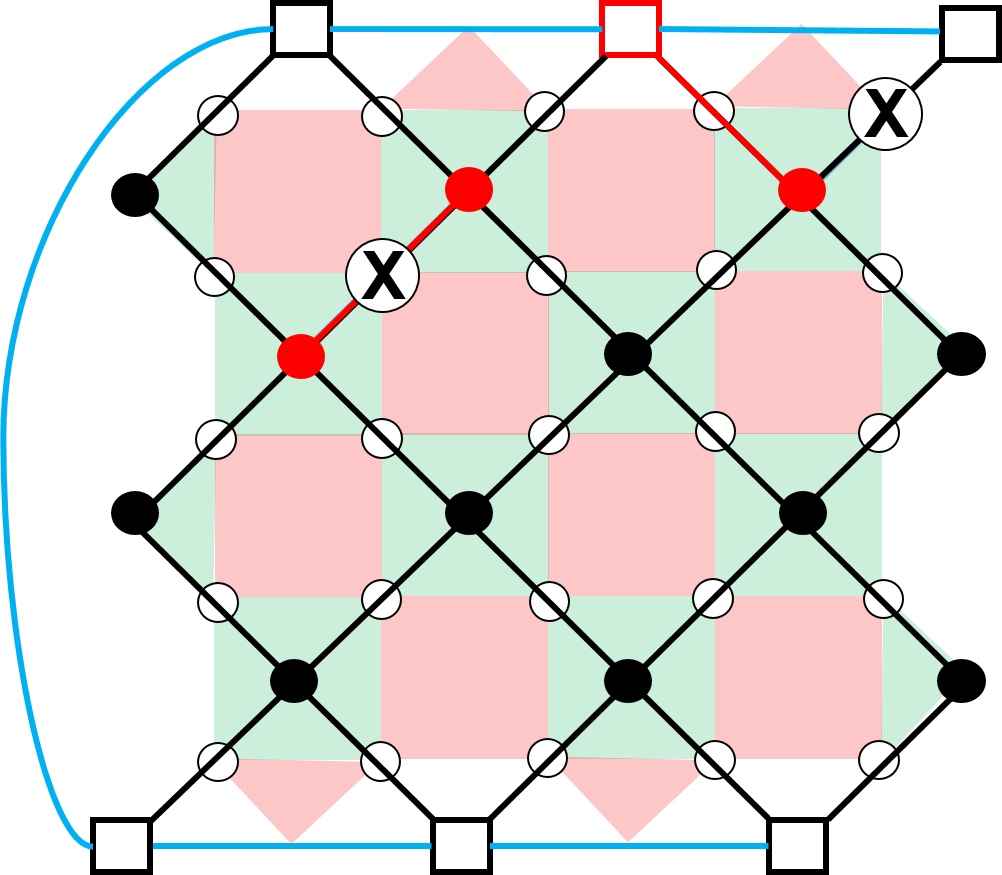}
		\caption{}
		\label{fig:Surfaced5Graph}
	\end{subfigure}
	\caption{(a) The $d=3$ surface code, with data qubits represented by white circles. The $X$ ($Z$) stabilizer generators are measured with measurement ancillas (gray) in red (green) faces (b) For perfectg measurements, the graph $G_{\text{2D}}$ used to correct $X$ type errors (here for $d=5$) consists of a black node for each $Z$-stabilizer, and a black edge for each data qubit in the surface code. White boundary nodes and blue boundary edges are added. Black and blue edges are given weight one and zero respectively. In this example, a two qubit $X$ error has occurred causing three stabilizers to be violated (red nodes). A boundary node is also highlighted and a minimum weight correction (red edges) which terminates on highlighted nodes is found. The algorithm succeeds as the error plus correction is a stabilizer.}
	\label{fig:Surface17Circuits}
\end{figure}

We consider the rotated surface code \cite{KITAEV97Surface,TS14,BK98,DKLP02,FMMC12,PhysRevLett.90.016803} as shown in Fig.~\ref{fig:Surface17Lattice}, which has $n=d^2$ data qubits for distance $d$. 
Although we are concerned with error correction under the circuit level noise model described in \cref{subsec:NoiseAndNumerics}, it is useful to build intuition by first considering the idealized noise model in which stabilizer measurements are perfect, and single qubit $X$ errors occur with probability $2p/3$ ($Z$ errors can be treated in the same way).
An $X$ type error $E$ occurs with probability $\mathcal{O}(p^{\text{wt}(E)})$, and has syndrome $s(E)$.

The minimum weight $X$-type correction can be found efficiently for the surface code in terms of the graph $G_{\text{2D}}$ shown in Fig.~\ref{fig:Surfaced5Graph}.
The graph $G_{\text{2D}}$  has a bulk node (black circle) for each $Z$ stabilizer generator, and a bulk edge (black) for each data qubit.
A bulk edge coming from a bulk node corresponds to the edge's data qubit being in the support of the node's stabilizer.
The graph also contains boundary nodes (white boxes) and boundary edges (blue), which do not correspond to stabilizers or data qubits. 
Each bulk and boundary edge is assigned weight one and zero respectively.
The minimum weight decoder is then implemented as follows.
After the error $E$ is applied, the nodes corresponding to unsatisfied stabilizers are highlighted. 
If an odd number of stabilizers was unsatisfied, one of the boundary nodes is also highlighted.
Highlighted nodes are then efficiently paired together by the minimum weight connections in the graph, by Edmonds' algorithm \cite{Edmonds65,Kolmogorov09}.
The correction $C$ is applied to the edges in the connection.
Note that any single $\mathcal{O}(p)$ fault in this noise model corresponds to a weight one edge on the graph.

\begin{figure}
	\centering
	\begin{subfigure}{0.3\textwidth}
		\includegraphics[width=\textwidth]{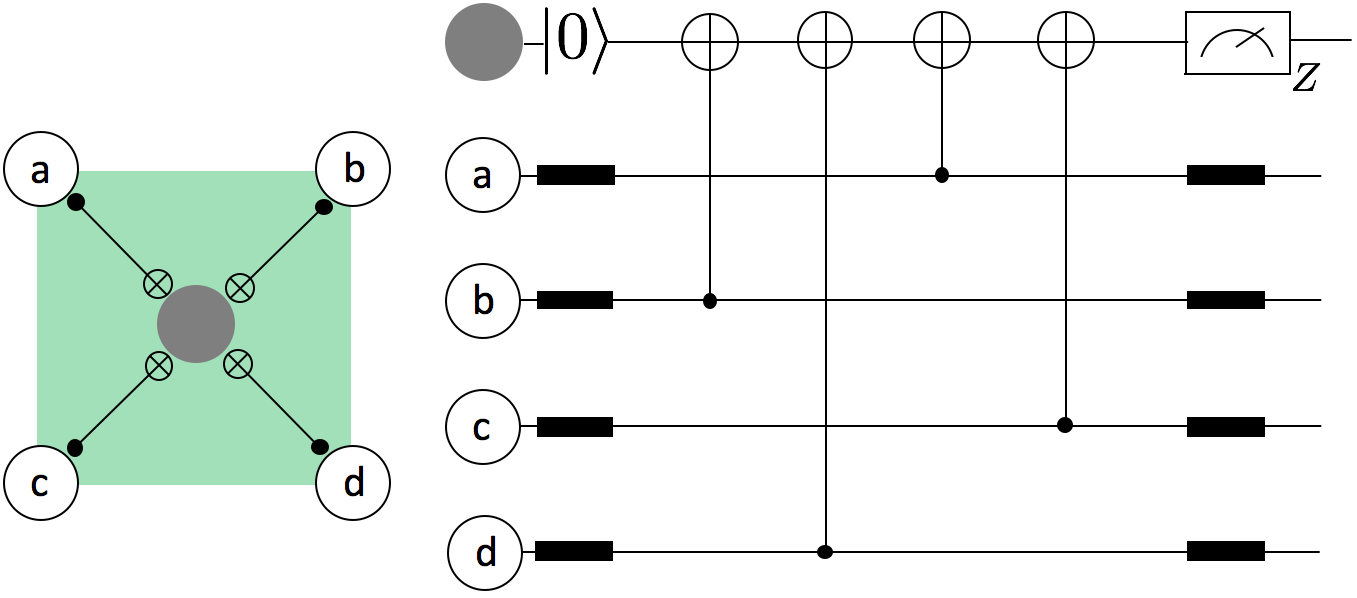}
		\caption{}
		\label{fig:ZstabMeasurementCircuitSurfaceCode}
	\end{subfigure}
	\begin{subfigure}{0.3\textwidth}
		\includegraphics[width=\textwidth]{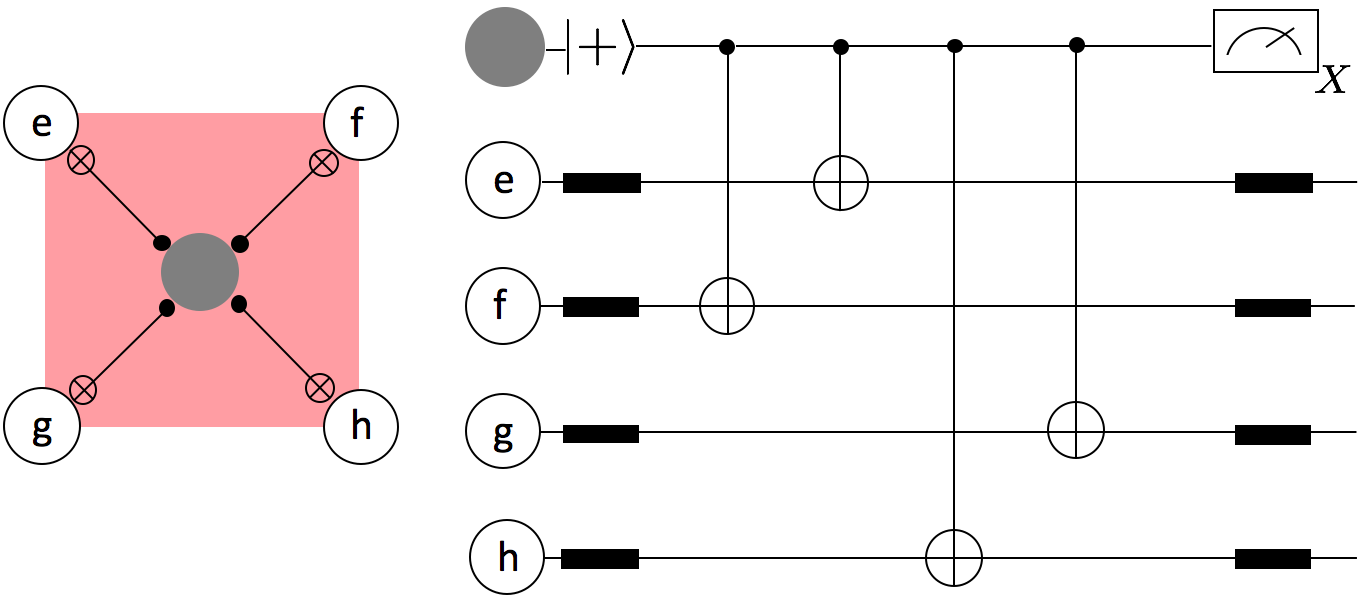}
		\caption{}
		\label{fig:XstabMeasurementCircuitSurfaceCode}
	\end{subfigure}
	\caption{Circuits for measuring (a) $Z$-type, and (b) $X$-type generators. Identity gates (black rectangles) are inserted in the $Z$-type stabilizer measurement circuits to ensure that all measurements are synchronized. Note that unlike in \cite{FMMC12}, to be consistent with the other schemes in this paper, we assume that we can prepare and measure in both the $X$ and $Z$ basis.}
	\label{fig:Surface17Circuits}
\end{figure}

For circuit noise, we introduce a measurement qubit for each stabilizer generator, as represented by gray circles in Fig.~\ref{fig:Surface17Lattice}, and circuits must be specified to implement the measurements, such as those in Fig.~\ref{fig:Surface17Circuits}. 
The performance of the code is sensitive to the choice of circuit \cite{TS14}, for example a poor choice could allow a single fault to cause a logical failure for $d=3$ for any choice of decoder.

To implement the decoder, first construct a new three dimensional graph $G_{\text{3D}}$ by stacking $d$ copies of the planar graph $G_{\text{2D}}$ that was shown in Fig.~\ref{fig:Surfaced5Graph}, and adding new bulk (boudnary) edges to connect bulk (boudnary) nodes in neighboring layers.
We also add additional diagonal edges such that any single $\mathcal{O}(p)$ fault in the measurement circuits corresponds to a weight-one edge in $G_{\text{3D}}$ (see Fig.~\ref{fig:LowerRightAndLowerLeft}).
For simplicity, we do not involve further possible optimizations such as setting edge weights based on precise probabilities and including $X$-$Z$ correlations  \cite{PhysRevA.83.020302}.

\begin{figure}
	\centering
	\begin{subfigure}{0.11\textwidth}
		\includegraphics[width=\textwidth]{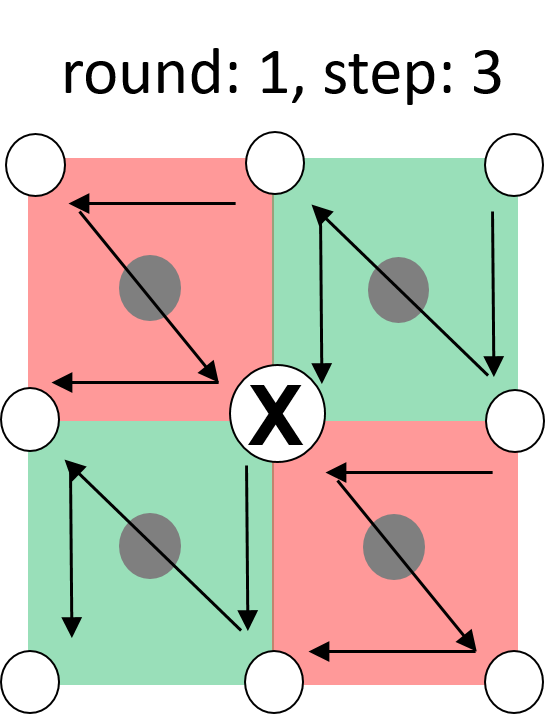}
		\caption{}
		\label{fig:DiagonalEdge1}
	\end{subfigure}
	\begin{subfigure}{0.11\textwidth}
		\includegraphics[width=\textwidth]{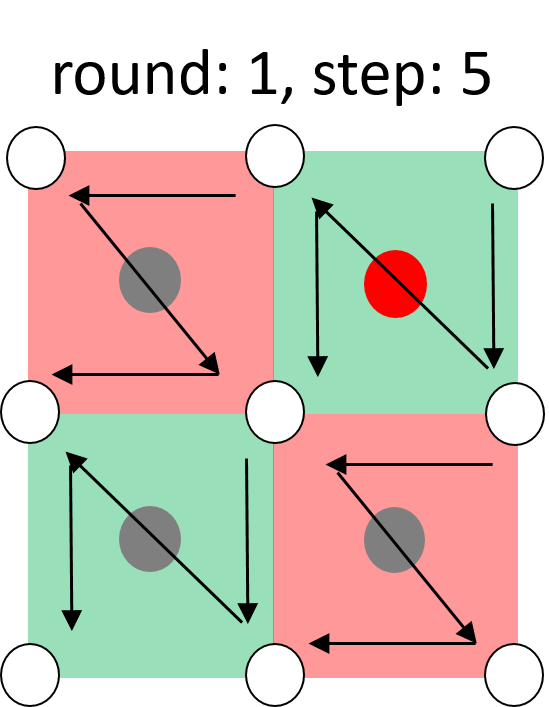}
		\caption{}
		\label{fig:DiagonalEdge2}
	\end{subfigure}
	\begin{subfigure}{0.11\textwidth}
		\includegraphics[width=\textwidth]{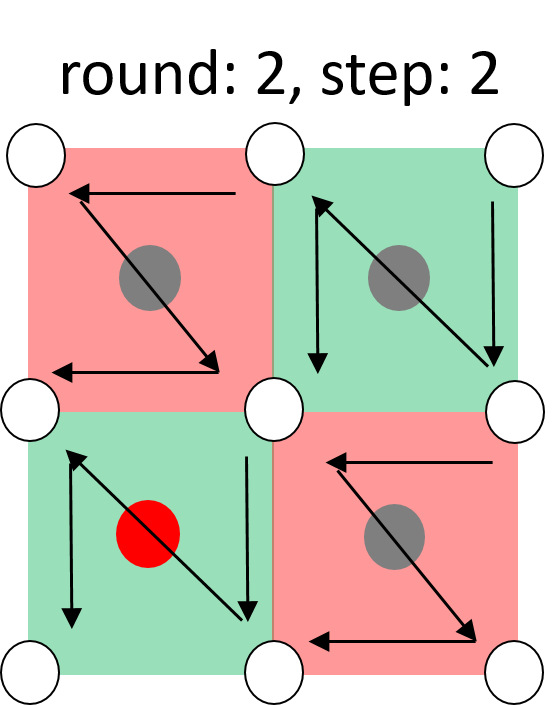}
		\caption{}
		\label{fig:DiagonalEdge3}
	\end{subfigure}
	\begin{subfigure}{0.15\textwidth}
		\includegraphics[width=\textwidth]{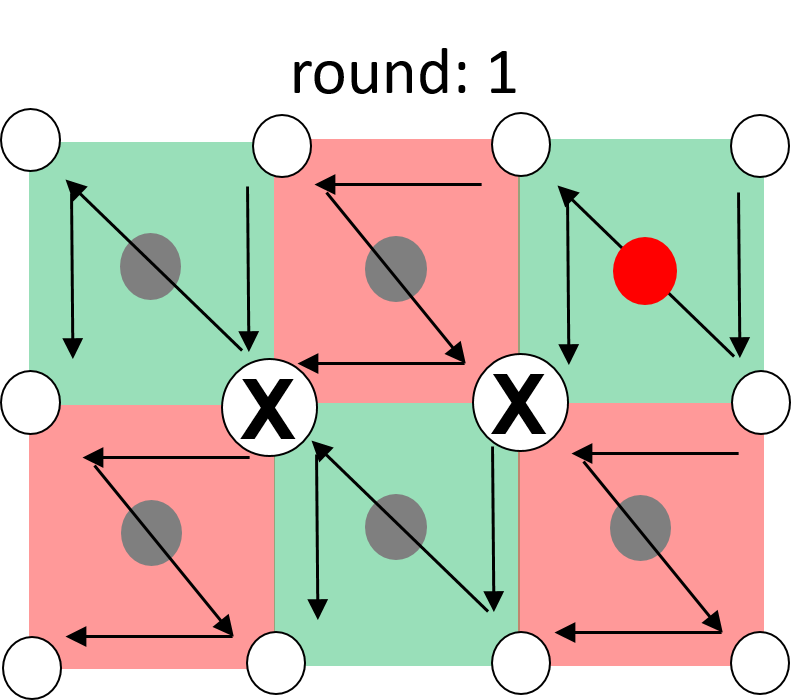}
		\caption{}
		\label{fig:DiagonalEdge5}
	\end{subfigure}
	\begin{subfigure}{0.15\textwidth}
		\includegraphics[width=\textwidth]{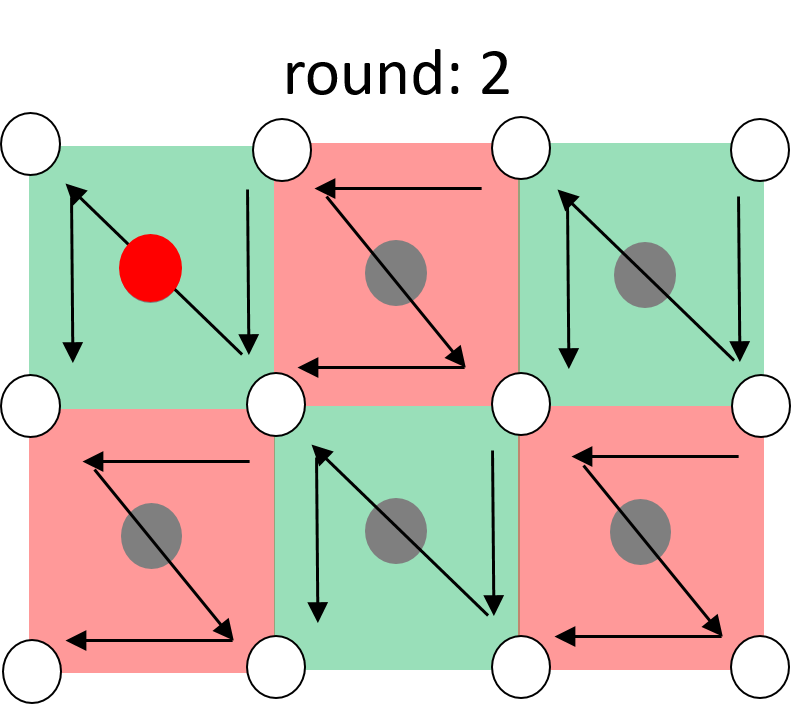}
		\caption{}
		\label{fig:DiagonalEdge6}
	\end{subfigure}
	\begin{subfigure}{0.25\textwidth}
		\includegraphics[width=\textwidth]{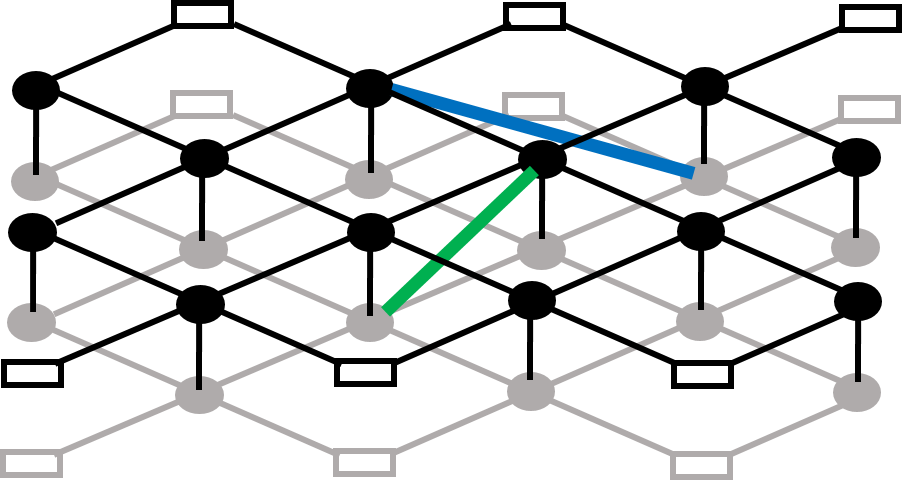}
		\caption{}
		\label{fig:DiagonalEdge4}
	\end{subfigure}
	\caption{Examples of a single fault leading to diagonal edges in $G_{\text{3D}}$. Dark arrows represent the CNOT sequence. 
		(a) An $X$ error occurs during the third time step in the CNOT gate acting on the central data qubit. 
		(b) During the fifth time step of this round, the $X$ error is detected by the $Z$ type measurement qubit to the top right.
		(c) The $X$ error is not detected by the bottom left $Z$ type stabilizer until the following round. 
		(d) An $XX$ error occurs on the third CNOT of an $X$ measurement circuit, which is detected by the $Z$ measurement to the right.
		(e) Detection by the left $Z$ stabilizer does not occur until the next round.
		(f) The corresponding edges in $G_{\text{3D}}$, green for (a-c), and blue for (d-e). Here we show two rounds of the graph ignoring boundary edges.}
	\label{fig:LowerRightAndLowerLeft}
\end{figure}

All simulations of the surface code are performed using the circuit noise model in \cref{subsec:NoiseAndNumerics}, with the graph $G_{\text{3D}}$ described above as follows (to correct $X$ errors):
\begin{enumerate}
	\item \underline{Data acquisition}: Stabilizer outcomes are stored over $d$ rounds of noisy error correction, followed by one round of perfect error correction. 
	The net error $E$ applied over all $d$ rounds is recorded. 
	\item \underline{Highlight nodes}: Nodes in the graph $G_{\text{3D}}$ are highlighted if the corresponding $Z$-type stabilizer outcome changes in two consecutive rounds. \footnote{For an odd number of highlighted vertices, highlight the boundary vertex.}
	\item  \underline{Minimum weight matching}: Find a minimal edge set forming paths that terminate on highlighted nodes. Highlight the edge set.
	\item \underline{Vertical collapse}: The highlighted edges in $G_{\text{3D}}$ are mapped edges in the planar graph $G_{\text{2D}}$, and are then added modulo 2. 
	\item \underline{Correction}: The $X$-type correction $C_X$ is applied to highlighted edges in $G_{\text{2D}}$. 
\end{enumerate}
The $Z$ correction $C_Z$ is found analogously.
Finally, if the residual Pauli $R = E C_X C_Z$ is a logical operator, we say the protocol succeeded, otherwise we say it failed. 

\begin{table*} 
\begin{tabular}{ c|c|c|c|c}
 FTEC scheme & Noise model & Number of qubits & Time steps ($T_{\mathrm{time}}$) &Pseudo-threshold  \\ \hline
Flag-EC \codepar{7,1,3} & $\tilde{p} = p$ & 9 & $36 \le T_{\mathrm{time}} \le108$ & $p_{\mathrm{pseudo}} = (3.39 \pm 0.10) \times 10^{-5}$ \\
Flag-EC \codepar{7,1,3} &  & 11 & $34 \le T_{\mathrm{time}} \le 104$ & $p_{\mathrm{pseudo}} = (2.97 \pm 0.01) \times 10^{-5}$ \\ 
 
\end{tabular}
\caption{Pseudo-thresholds and circuit depth for flag-EC protocols using two and four ancilla qubits applied to the \codepar{7,1,3} code. The results are presented for the noise models where $\tilde{p}=p$ and $\tilde{p}=p/100$.}
\label{tab:PseudoThreshFullSteane2}
\end{table*}

\section{Compact implementation of flag error correction}
\label{app:CompactRepFlagQubit}

\begin{table*}[t]
	\begin{tabular}{ c|c|c|c}
		\codepar{5,1,3} & \codepar{7,1,3} & \codepar{19,1,5} code & \codepar{17,1,5} code  \\ \hline
		$X_{1}Z_{2}Z_{3}X_{4}$&$Z_{4}Z_{5}Z_{6}Z_{7}$    & $Z_{1}Z_{2}Z_{3}Z_{4}$,~~$X_{1}X_{2}X_{3}X_{4}$  & $Z_{1}Z_{2}Z_{3}Z_{4}$,~~$X_{1}X_{2}X_{3}X_{4}$ \\
		$X_{2}Z_{3}Z_{4}X_{5}$&$Z_{2}Z_{3}Z_{6}Z_{7}$    & $Z_{1}Z_{3}Z_{5}Z_{7}$,~~$X_{1}X_{3}X_{5}X_{7}$ & $Z_{1}Z_{3}Z_{5}Z_{6}$, ~~$X_{1}X_{3}X_{5}X_{6}$  \\
		$X_{1}X_{3}Z_{4}Z_{5}$&$Z_{1}Z_{3}Z_{5}Z_{7}$    & $Z_{12}Z_{13}Z_{14}Z_{15}$,~~$X_{12}X_{13}X_{14}X_{15}$ & $Z_{5}Z_{6}Z_{9}Z_{10}$,~~$X_{5}X_{6}Z_{9}Z_{10}$  \\
		$Z_{1}X_{2}X_{4}Z_{5}$&$X_{4}X_{5}X_{6}X_{7}$   & $Z_{1}Z_{2}Z_{5}Z_{6}Z_{8}Z_{9}$,~~$X_{1}X_{2}X_{5}X_{6}X_{8}X_{9}$ & $Z_{7}Z_{8}Z_{11}Z_{12}$,~~$X_{7}X_{8}X_{11}X_{12}$ \\
		&$X_{2}X_{3}X_{6}X_{7}$   & $Z_{6}Z_{9}Z_{16}Z_{19}$,~~$X_{6}X_{9}X_{16}X_{19}$  & $Z_{9}Z_{10}Z_{13}Z_{14}$,~~$X_{9}X_{10}X_{13}X_{14}$ \\
		&$X_{1}X_{3}X_{5}X_{7}$   & $Z_{16}Z_{17}Z_{18}Z_{19}$,~~$X_{16}X_{17}X_{18}X_{19}$ & $Z_{11}Z_{12}Z_{15}Z_{16}$,~~$X_{11}X_{12}X_{15}X_{16}$ \\
		&                                          & $Z_{10}Z_{11}Z_{12}Z_{15}$,~~ $X_{10}X_{11}X_{12}X_{15}$ & $Z_{8}Z_{12}Z_{16}Z_{17}$,~~$X_{8}X_{12}X_{16}X_{17}$ \\
		&                                        & $Z_{8}Z_{9}Z_{10}Z_{11}Z_{16}Z_{17}$ & $Z_{3}Z_{4}Z_{6}Z_{7}Z_{10}Z_{11}Z_{14}Z_{15}$ \\ 
		&                                        &  $Z_{5}Z_{7}Z_{8}Z_{11}Z_{12}Z_{13}$ & $X_{3}X_{4}X_{6}X_{7}X_{10}X_{11}X_{14}X_{15}$ \\
		&  & $X_{5}X_{7}X_{8}X_{11}X_{12}X_{13}$ & \\    
		&  & $X_{8}X_{9}X_{10}X_{11}X_{16}X_{17}$ &\\  \hline
$\overline{X} = 	X^{\otimes 5}$,$\overline{Z} = 	Z^{\otimes 5}$ & $\overline{X} = X^{\otimes 7}$,$\overline{Z} = 	Z^{\otimes 7}$ & $\overline{X} = X^{\otimes 19}$,$\overline{Z} = Z^{\otimes 19}$ & $\overline{X} = X^{\otimes 17}$,$\overline{Z} = Z^{\otimes 17}$	                                     
	\end{tabular}
	\caption{Stabilizer generators for the 5-qubit code \cite{LMPZ96}, $d=3$ (Steane code) \cite{Steane96b}, $d=5$ (\codepar{19,1,5} code) and $d=5$ (\codepar{17,1,5} code) family of color codes \cite{Bombin06TopQuantDist}. The last row illustrates representatives of the codes logical operators.}
	\label{tab:StabilizerGeneratorsLists}
\end{table*}

\begin{figure}[H]
\centering
\includegraphics[width=0.4\textwidth]{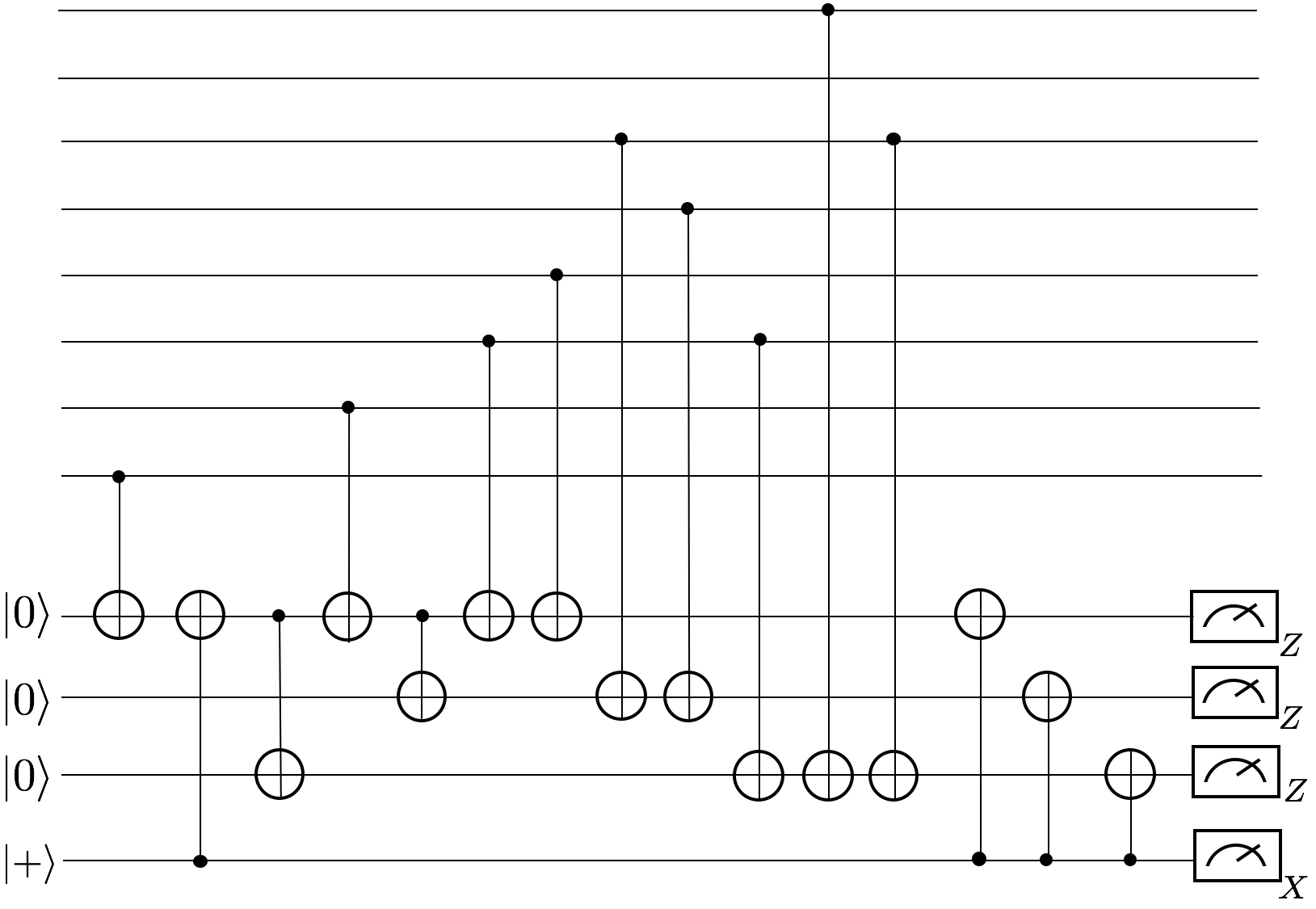}
\caption{Circuit for measuring the $Z$ stabilizer generators of the \codepar{7,1,3} code using one flag qubit and three measurement qubits. The circuit is constructed such that any single fault at a bad location leading to an error of weight greater than one will cause the circuit to flag. Moreover, any error that occurs when the circuit flags due to a single fault has a unique syndrome.}
\label{fig:CompactStabMeasure}
\end{figure}

\begin{figure}
\centering
\includegraphics[width=0.35\textwidth]{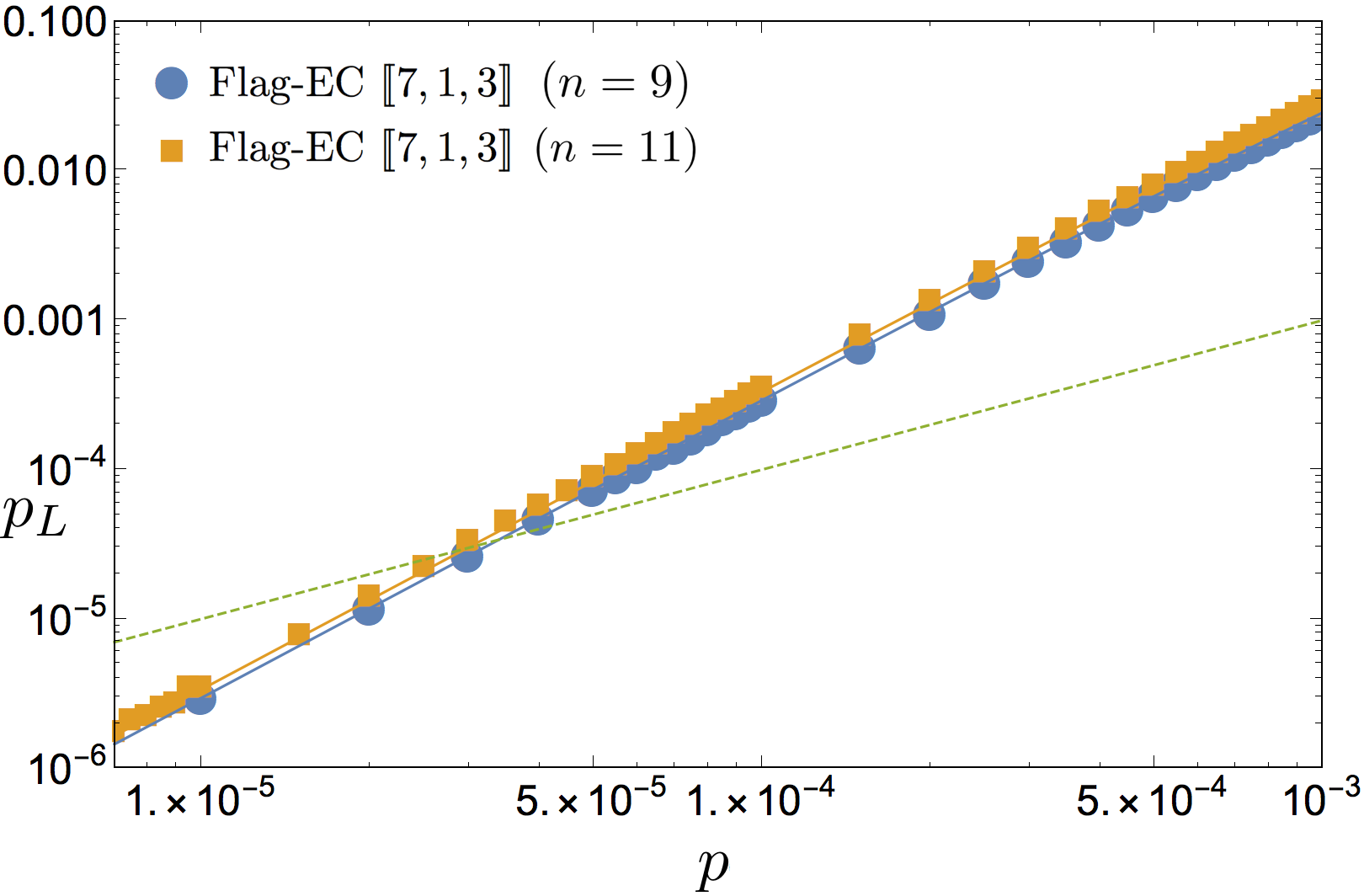}
\caption{Logical failure rates of the flag 1-FTEC protocols using two and four ancilla qubits applied to the \codepar{7,1,3} Steane code.}
\label{fig:PseudoThreshCompareSteaneDepth}
\end{figure}

In \cite{CR17v1}, it was shown that by using extra ancilla qubits in the flag-EC protocol, it is possible to measure multiple stabilizer generators during one measurement cycle which could reduce the circuit depth. Note that for the Steane code, measuring the $Z$ stabilizers using \cref{fig:StabFTwithAncilla} requires only one extra time step. In this section we compare logical failure rates of the \codepar{7,1,3} code using the flag-EC method of \cref{subsec:ReviewChaoReichardt} which requires only two ancilla qubits and a flag-EC method which uses four ancilla qubits but that can measure all $Z$ stabilizer generators in one cycle (see \cref{fig:CompactStabMeasure}). All $X$ stabilizers are measured in a separate cycle. 

Logical failure rates for $\tilde{p}=p$ are shown in \cref{fig:PseudoThreshCompareSteaneDepth}. Pseudo-thresholds and the number of time steps required to implement the protocols are given in \cref{tab:PseudoThreshFullSteane2}. Note that measuring stabilizers using two ancilla's requires at most two extra time steps. Furthermore, the extra ancilla's for measuring multiple stabilizers result in more idle qubit locations compared to using only two ancilla qubits. With the added locations for errors to be introduced, the flag error correction protocol using only two ancilla's achieves a \textit{higher} pseudo-threshold compared to the protocol using more ancilla's. Thus assuming that reinitializing qubits can be done without introducing many errors into the system, FTEC using fewer qubits could achieve lower logical failure rates compared to certain schemes using more qubits.

\section{Stabilizer generators of various codes.}
\label{app:ListStabGenerator}

In \cref{tab:StabilizerGeneratorsLists} we provide stabilizer generators for the \codepar{5,1,3} code, \codepar{7,1,3} Steane code, \codepar{19,1,5} and \codepar{17,1,5} color codes.

\end{document}